\newtheorem{theorem}{Theorem}
\newtheorem{Prop}{Proposition}
\newtheorem{lemma}{Lemma}
\newtheorem{corollary}{Corollary}
\title{No penalty no tears: Least squares in high-dimensional linear models}
\author{Xiangyu Wang, David Dunson and Chenlei Leng}
\date{First version: January 13, 2015. This version: \today}
\date{}
\begin{document}
\maketitle
\begin{abstract}
Ordinary least squares (OLS) is the default method for fitting linear models, but is
not applicable for problems with dimensionality larger than the sample
size. For these problems, we advocate the use of a generalized version of OLS
motivated by ridge regression, and propose two novel three-step algorithms involving least squares fitting and hard thresholding. The algorithms  are methodologically simple to
understand intuitively, computationally easy to implement efficiently, and
theoretically appealing for choosing models consistently. Numerical exercises comparing  
our methods with penalization-based approaches in simulations and data
analyses illustrate the great potential of the proposed algorithms.
\end{abstract}

\section{Introduction}
Long known for its consistency, simplicity and
optimality under mild conditions, ordinary least squares (OLS) is the most widely used technique for fitting
linear models. Developed originally for fitting fixed dimensional linear models, unfortunately, classical OLS fails in
high dimensional linear models where the number of predictors $p$ far exceeds the
number of observations $n$. To deal with this problem,
\citet{tibshirani1996regression} proposed $\ell_1$-penalized
regression, a.k.a, {\em lasso}, which triggered the recent overwhelming
exploration in both theory and methodology of penalization-based
methods. These methods usually assume that only a small number of coefficients
are nonzero (known as the {\em sparsity} assumption), and minimize the same
least squares loss function as OLS by including an additional penalty on the
coefficients, with the typical choice being the $\ell_1$ norm. Such 
``penalization'' constrains the solution space to certain directions favoring
sparsity of the solution, and thus overcomes the non-unique issue with OLS. It
yields a sparse solution and achieves model selection consistency and estimation
consistency under certain conditions. See \citet{zhao2006model,fan2001variable,zhang2010nearly,zou2005regularization}. 
\par
Despite the success of the methods based on regularization, there are
important issues that can not be easily neglected. On the one hand, methods
using convex penalties, such as {\em lasso}, usually require strong conditions
for model selection consistency \citep{zhao2006model,lounici2008sup}. On the
other hand, methods using non-convex
penalties \citep{fan2001variable,zhang2010nearly} that can achieve model
selection consistency under mild conditions often require huge computational
expense. These concerns have limited the practical use of regularized methods,
motivating alternative strategies such as direct hard thresholding 
\citep{jain2014iterative}. 
\par
In this article, we aim to solve the
problem of fitting high-dimensional sparse linear models by reconsidering OLS
and answering the following simple question: Can ordinary least squares
consistently fit these models with some suitable algorithms? Our result
provides an affirmative answer to this question under fairly general
settings. In particular, we give a generalized form of OLS in high
dimensional linear regression, and develop two algorithms that can consistently
estimate the coefficients and recover the support. These algorithms involve 
least squares type of fitting and hard thresholding, and are 
non-iterative in nature. Extensive empirical
experiments are provided in Section 4 to compare the proposed estimators  to
many existing penalization methods. The performance of the new estimators is very competitive under various setups 
in terms of model selection, parameter estimation and computational time.

\subsection{Related Works}
The work that is most closely related to ours is
\citet{yang2014elementary}, in which the authors proposed an algorithm based on
OLS and ridge regression. However, both their methodology and
theory are still within the $\ell_1$ regularization framework, and their conditions
(especially their C-Ridge and C-OLS conditions) are overly strong and can be
easily violated in practice. \citet{jain2014iterative} proposed an iterative
hard thresholding  algorithm for sparse regression, which shares a similar
spirit of hard thresholding as our algorithm. Nevertheless, their motivation
is completely different, their algorithm lacks theoretical guarantees
for consistent support recovery, and they require an iterative estimation procedure.

\subsection{Our Contributions}
We provide a generalized form of OLS for fitting 
high dimensional data motivated by ridge regression, and develop two
algorithms that can consistently fit linear models on weakly sparse coefficients. We summarize the advantages of our new algorithms in three points. 
\begin{itemize}
\item[1.]
Our algorithms work for highly correlated features under random designs. The consistency of the algorithms relies on
a moderately growing conditional number, as opposed to
the strong irrepresentable condition \citep{zhao2006model,wainwright2009sharp} required by {\em lasso}. 
\item[2.]
Our algorithms can achieve consistent identify strong signals for ultra-high dimensional data ($\log p = o(n)$) with only a bounded variance assumption
on the noise $\varepsilon$, i.e., $var(\varepsilon) < \infty$. This is remarkable as most methods
(c.f. \citet{zhang2010nearly,yang2014elementary,cai2011orthogonal,wainwright2009sharp,zhang2008sparsity,wang2015high}) 
that work for $\log p = o(n)$ case rely on a sub-Gaussian tail/bounded error assumption,
which might fail to hold for real data. \citet{lounici2008sup}
proved that {\em lasso} also achieves consistent model selection with a second-order condition similar to
ours, but requires two additional assumptions. 
\item[3.]
The algorithms are simple, efficient and scale well for large $p$. In particular, 
the matrix operations are fully parallelizable with very few communications for very large $p$,
while regularization methods are either hard to be computed in parallel in the feature space, or 
the parallelization requires a large amount of machine communications.
\end{itemize}

\par
The remainder of this article is organized as follows. In Section 2 we
generalize the ordinary least squares estimator for high dimensional problems
where $p>n$, and propose two three-step algorithms consisting only of least
squares fitting and hard thresholding in a loose sense. Section 3 provides
consistency theory for the algorithms. Section 4 evaluates the empirical
performance. We conclude and discuss further implications of our algorithms in the last section. All the proofs are provided in the supplementary materials.

\section{High dimensional ordinary least squares}
Consider the usual linear model
\begin{align*}
  Y = X\beta + \varepsilon,
\end{align*}
where $X$ is the $n\times p$ design matrix, $Y$ is the $n\times 1$ response
vector and $\beta$ is the coefficient. In the high dimensional
literature, $\beta_i$'s are routinely assumed to be zero except for a small subset
$S^* = supp(\beta)$. In this paper, we consider a slightly more general setting, where
$\beta$ is not exactly sparse, but consists of both strong and weak signals. In particular,
we defined $S^*$ and $S_*$
\begin{align*}
S^* = \{k~: |\beta_k| \geq \tau^*\}\qquad S_* = \{k~: |\beta_k| \leq \tau_*\}
\end{align*}
as the strong and weak signal sets and $S^*\cup S_* = \{1, 2, \cdots, p\}$. The algorithms developed in this paper
is to recover the strong signal set $S^*$. The specific relationship between
$\tau^*$ and $\tau_*$ will be detailed later.
\par
To carefully tailor the low-dimensional OLS estimator in a high dimensional
scenario, one needs to answer the following two questions: i) What is the
correct form of OLS in the high dimensional setting? ii) How to correctly use
this estimator? To answer these, we reconsider OLS from a different
perspective by viewing the OLS as the limit of the ridge estimator with the ridge parameter going to zero, i.e., 
\begin{align*}
  (X^TX)^{-1}X^TY = \lim_{r\rightarrow 0}~ (X^TX + rI_p)^{-1}X^TY.
\end{align*}
One nice property of the ridge estimator is that it exists regardless of the
relationship between $p$ and $n$. A keen observation reveals the following
relationship immediately. 
\begin{lemma}
  For any $p, n, r > 0$, we have
  \begin{align}
    (X^TX + rI_p)^{-1}X^TY = X^T(XX^T + rI_n)^{-1}Y. \label{eq:ridge1}
  \end{align}
\end{lemma}
Notice that the right hand side of \eqref{eq:ridge1} exists when $p>n$ and $r
= 0$. Consequently, we can naturally extend the classical OLS to the high
dimensional scenario by letting $r$ tend to zero in \eqref{eq:ridge1}. Denote
this high dimensional version of the OLS as 
\begin{align*}
  \hat\beta^{(HD)} =  \lim_{r\rightarrow 0} X^T(XX^T + rI_n)^{-1}Y = X^T(XX^T)^{-1}Y.
\end{align*}
The above equation indicates that $ \hat\beta^{(HD)}$ is essentially an
orthogonal projection of $\beta$ onto the row space of $X$. Unfortunately,
this (low dimensional) projection does not have good general performance 
in estimating sparse vectors in high-dimensional cases.
Instead of directly estimating
$\beta$ as $\hat \beta^{HD}$, however, this new estimator of $\beta$ may be used for dimension
reduction by observing $\hat\beta^{(HD)}=X^T(XX^T)^{-1}X\beta+X^T(XX^T)^{-1}\varepsilon = \Phi\beta + \eta$. Since $\eta$ is stochastically small, if $\Phi$ is close to a diagonally dominant matrix and $\beta$ is sparse, 
then the strong and weak signals
can be separated by simply thresholding the small entries of
$\hat\beta^{(HD)}$. The exact meaning of this statement will be discussed in the next section. 
Some simple examples demonstrating the diagonal dominance
of $X^T(XX^T)^{-1}X$ 
are illustrated immediately in Figure
\ref{fig:show}, where the rows of $X$ in the left two plots are drawn from
$N(0, \Sigma)$ with $\sigma_{ij} = 0.6$ or $\sigma_{ij} = 0.99^{|i - j|}$. The sample size and data dimension are chosen as $(n, p) = (50, 1000)$. The
right plot takes the standardized design matrix directly from the real data in Section
4 with $(n, p) = (395, 767)$. A clear diagonal dominance pattern is
visible in each plot.
\begin{figure*}[!tb]
\centering
\includegraphics[height = 4.3cm]{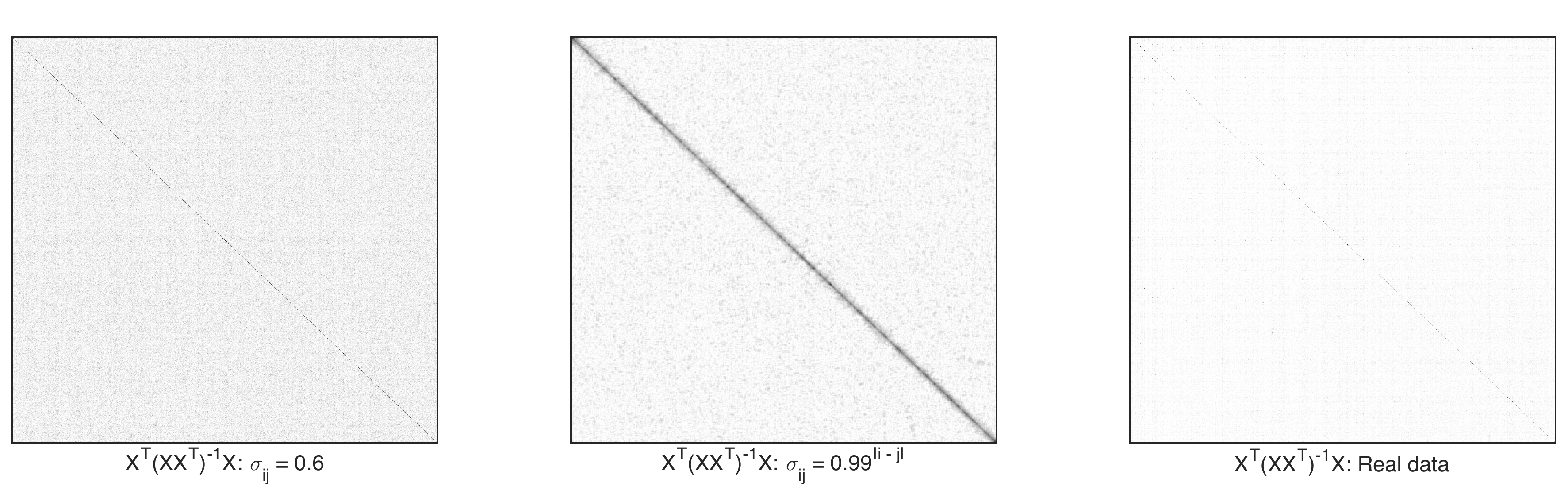}
 \vspace{.3pt}
\caption{Examples for $X^T(XX^T)^{-1}X$. Left: $X\sim N(0, \Sigma)$ with $\sigma_{ij} = 0.6$ and $\sigma_{ii} = 1$; Middle: $X\sim N(0, \Sigma)$ with $\sigma_{ij} = 0.9^{|i-j|}$; Right: Real data from Section 4.}
\label{fig:show}
 \vspace{.3pt}
\end{figure*}
\par
This ability to separate strong and weak signals allows us to first
obtain a smaller model with size $d$ such that $|S^*| < d < n$ containing $S^*$. Since $d$ is below $n$, one can directly apply the usual OLS to
obtain an estimator, which will be thresholded further to obtain a more refined model. The final estimator will then be obtained by an OLS fit on the refined model. This three-stage non-iterative algorithm is termed {\em Least-squares adaptive thresholding (LAT)} and the concrete procedure is described in Algorithm \ref{alg:1}.

\begin{algorithm}[!ht]
  \caption{{\em The Least-squares Adaptive Thresholding (LAT) Algorithm}}
  \label{alg:1}
  \begin{algorithmic}[1]
    \REQUIRE
    \STATE Input $(Y, X), d, \delta$
    \ENSURE \textbf{1 :} Pre-selection
    \STATE Standardize $Y$ and $X$ to $\tilde Y$ and $\tilde X$ having mean 0 and variance 1.
    \STATE Compute $\hat \beta^{(HD)} = \tilde X^T( \tilde X \tilde X^T + 0.1\cdot I_n)^{-1} \tilde Y$.  Rank the importance of the variables by $|\hat \beta^{(HD)}_i|$;
    \STATE Denote the model corresponding to the $d$ largest $|\hat
    \beta^{(HD)}_i|$ as $\mathcal{\tilde M}_d$. Alternatively use extended BIC \citep{chen2008extended} in conjunction with the obtained variable importance to select the best submodel. 
    \ENSURE \textbf{2 :} Hard thresholding
    \STATE $\hat\beta^{(OLS)} = ( X_{\mathcal{\tilde M}_d}^T X_{\mathcal{\tilde M}_d})^{-1} X_{\mathcal{\tilde M}_d}^T Y$;
    \STATE $\hat\sigma^2  = \sum_{i = 1}^n ( y - \hat y)^2/(n - d)$;
    \STATE $\bar C = (X_{\mathcal{\tilde M}_d}^TX_{\mathcal{\tilde M}_d})^{-1}$;
    \STATE Threshold $\hat\beta^{(OLS)}$ by $\textsc{mean}(\sqrt{2\hat
    \sigma^2 \bar C_{ii}\log(4d/\delta)})$ or use BIC to select the best
    submodel. Denote the chosen model as $\mathcal{\hat M}$.
    \ENSURE \textbf{3 :} Refinement
    \STATE $\hat\beta_{\mathcal{\hat M}} = (X_{\mathcal{\hat M}}^TX_{\mathcal{\hat M}})^{-1}X_{\mathcal{\hat M}}^TY$;
    \STATE $\hat\beta_i = 0, \forall i\not\in \mathcal{\hat M}$;\\
    \STATE return $\hat\beta$.
  \end{algorithmic}
\end{algorithm}
\par
The input parameter $d$ is the submodel size selected in Stage 1 and $\delta$ is the tuning parameter determining the threshold in Stage 2; these values will be specified in Section 4. In Stage 1, we use $\hat \beta^{(HD)} = \tilde X^T( \tilde X \tilde X^T + 0.1\cdot I_n)^{-1} \tilde Y$ instead of $\hat \beta^{(HD)} = \tilde X^T( \tilde X \tilde X^T)^{-1} \tilde Y$ because $\tilde X \tilde X^T$ is rank deficient (the rank is $n - 1$) after standardization. The number $0.1$ can be replaced by any arbitrary small number. As noted in \citet{wang2015high}, this additional ridge term is also essential when $p$ and $n$ get closer, in which case the condition number of $\tilde X \tilde X^T$ increases dramatically, resulting in an explosion of the model noise. Our results in Section 3 mainly focus on $\hat \beta^{(HD)} = X^T(XX^T)^{-1}Y$ where $X$ is assumed to be drawn from a distribution with mean zero, so no standardization or ridge adjustment is required. However, the result is easy to generalize to the case where a ridge term is included. See \citet{wang2015high}.
\par
The Stage 1 of Algorithm \ref{alg:1} is very similar to variable screening methods \citep{fan2008sure,wang2015high}. However, most screening methods require a sub-Gaussian condition the noise to handle the ultra-high dimensional data where $\log(p) = o(n)$. In contrast to the existing theory, we prove in the next section a better result that Stage 1 of Algorithm \ref{alg:1} can produce satisfactory submodel even for heavy-tailed noise.
\par
The estimator $\hat\beta^{(OLS)}$ in Stage 2 can be substituted by its ridge counterpart $\hat\beta^{(Ridge)} = (X_{\mathcal{\tilde M}_d}^TX_{\mathcal{\tilde
    M}_d} + rI_d)^{-1} X_{\mathcal{\tilde M}_d}^T Y$ and $\bar C$ by $(X_{\mathcal{\tilde M}_d}^TX_{\mathcal{\tilde
    M}_d} + rI_d)^{-1}$ to stabilize numerical computation. Similar modification can be applied to the Stage 3 as well. The resulted variant of
the algorithm is referred to as the {\em Ridge Adaptive Thresholding (RAT)} algorithm and described in Algorithm \ref{alg:2}.

\begin{algorithm}[!ht]
  \caption{{\em The Ridge Adaptive Thresholding (RAT) Algorithm}}
  \label{alg:2}
  \begin{algorithmic}[1]
    \REQUIRE
    \STATE Input $(Y, X), d, \delta, r$
    \ENSURE \textbf{1 :} Pre-selection
    \STATE Standardize $Y$ and $X$ to $\tilde Y$ and $\tilde X$ having mean 0 and variance 1.
    \STATE Compute $\hat \beta^{(HD)} = \tilde X^T( \tilde X \tilde X^T + 0.1\cdot I_n)^{-1} \tilde Y$.  Rank the importance of the variables by $|\hat \beta^{(HD)}_i|$;
    \STATE Denote the model corresponding to the $d$ largest $|\hat
    \beta^{(HD)}_i|$ as $\mathcal{\tilde M}_d$. Alternatively use eBIC in \citet{chen2008extended} in conjunction with the obtained variable importance to select the best    submodel. 
    \ENSURE \textbf{2 :} Hard thresholding
    \STATE $\hat\beta^{(Ridge)} = ( X_{\mathcal{\tilde M}_d}^T X_{\mathcal{\tilde M}_d} + rI_d)^{-1} X_{\mathcal{\tilde M}_d}^T Y$;
    \STATE $\hat\sigma^2  = \sum_{i = 1}^n ( y - \hat y)^2/(n - d)$;
    \STATE $\bar C = (X_{\mathcal{\tilde M}_d}^TX_{\mathcal{\tilde M}_d} + rI_d)^{-1}$;
    \STATE Threshold $\hat\beta^{(OLS)}$ by $\textsc{mean}(\sqrt{2\hat
    \sigma^2 \bar C_{ii}\log(4d/\delta)})$ or use BIC to select the best
    submodel. Denote the chosen model as $\mathcal{\hat M}$.
    \ENSURE \textbf{3 :} Refinement
    \STATE $\hat\beta_{\mathcal{\hat M}} = (X_{\mathcal{\hat M}}^TX_{\mathcal{\hat M}} + rI)^{-1}X_{\mathcal{\hat M}}^TY$;
    \STATE $\hat\beta_i = 0, \forall i\not\in \mathcal{\hat M}$;\\
    \STATE return $\hat\beta$.
  \end{algorithmic}
\end{algorithm}
We suggest to use 10-fold cross-validation to tune the ridge parameter $r$. Notice that the model is already small after stage 1, so using cross-validation will not significantly increase the computational burden. The computational performance is illustrated in Section 4.

\section{Theory}
In this section, we prove the consistency of Algorithm \ref{alg:1} in identifying strong signals and provide concrete forms for all the values needed for the algorithm to work. Recall the linear model $Y = X\beta + \varepsilon$. We consider the random design where the rows of $X$ are drawn from an elliptical distribution with a density of $g(x_i^T\Sigma^{-1} x_i)$ for some nonnegative function $g$ and positive definite $\Sigma$. It is easy to show that $x_i$ admits an equivalent representation as
\begin{align}
 x_i \stackrel{(d)}{=} L_i \frac{\sqrt{p}z_i}{\|z_i\|_2} \Sigma^{1/2} = \frac{\sqrt{p}L_i}{\|z_i\|_2} z_i\Sigma^{1/2}. \label{eq:ep}
\end{align} 
where $z_i$ is a p-variate standard Gaussian random variable and $L_i$ is a nonnegative random variable that is independent of $z_i$. We denote this distribution by $EN(L, \Sigma)$.
This random design allows for various correlation structures among predictors and contains many distribution families that are widely used to illustrate methods that rely on the restricted eigenvalue conditions \citep{bickel2009simultaneous,raskutti2010restricted}. The noise $\varepsilon$, as mentioned earlier, is only assumed to have the second-order moment, i.e., $var(\varepsilon) =\sigma^2 < \infty$, in contrast to the sub-Gaussian/bounded error assumption seen in most high dimension literature. See \citet{zhang2010nearly,yang2014elementary,cai2011orthogonal,wainwright2009sharp,zhang2008sparsity}. This relaxation is similar to \citet{lounici2008sup}; however we do not require any further assumptions needed by \citet{lounici2008sup}. In Algorithm \ref{alg:1}, we also propose to use extended BIC and BIC for parameter tuning. However, the corresponding details will not be pursued here, as their consistency is straightforwardly implied by the results from this section and the existing literature on extended BIC and BIC \citep{chen2008extended}.

As shown in \eqref{eq:ep}, the variable $L$ controls the signal strength of $x_i$, we thus need a lower bound on $L_i$ to guarantee a good signal strength. Define $\kappa = cond(\Sigma)$. We state our result in three theorems.
\begin{theorem}\label{thm:1}
Assume $x_i\sim EN(L_i, \Sigma)$ with $E[L_i^{-2}] < M_1$ and $\varepsilon_i$ is a random variable with a bounded variance $\sigma^2$. We also assume $p>c_0n$ for some $c_0>1$ and $var(Y) \leq M_0$. If $ |S^*|\log p = o(n)$, $n>4c_0/(c_0 - 1)^2$, and $\tau^*/\tau_* \geq 4\kappa^2$, then we can choose $\gamma$ to be $\frac{2c_1\kappa^{-1}\tau}{3}\frac{n}{p}$, where $c_1$ is some absolute constant specified in Lemma \ref{lemma:5} and for any $\alpha \in (0, 1)$ we have
{\small
\begin{align*}
P\bigg(\max_{i\in S_*}|\hat\beta_i^{(HD)}| &\leq \gamma \leq \min_{i\in S^*} |\hat\beta_i^{(HD)}|\bigg)= 1 -
O\bigg(\frac{\sigma^2\kappa^4\log p}{\tau^{*2} n^{\alpha}}\bigg).
\end{align*}}
\end{theorem}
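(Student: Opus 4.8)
The plan is to analyze the decomposition $\hat\beta^{(HD)} = \Phi\beta + \eta$ with $\Phi = X^T(XX^T)^{-1}X$ and $\eta = X^T(XX^T)^{-1}\varepsilon$, and to show that with high probability (i) the diagonal entries $\Phi_{ii}$ concentrate around $n/p$ up to a factor controlled by $\kappa$, (ii) the off-diagonal contribution $\sum_{j\ne i}\Phi_{ij}\beta_j$ is uniformly small, and (iii) the noise term $\eta$ is uniformly small over all $p$ coordinates. Combining these, for $i\in S^*$ we get $|\hat\beta_i^{(HD)}| \gtrsim (n/p)\tau^*/\kappa$ minus small terms, and for $i\in S_*$ we get $|\hat\beta_i^{(HD)}| \lesssim (n/p)\tau_*\kappa$ plus the same small terms; the separation condition $\tau^*/\tau_* \ge 4\kappa^2$ then creates a gap into which $\gamma = \frac{2c_1\kappa^{-1}\tau}{3}\frac{n}{p}$ falls.

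First I would exploit the elliptical representation \eqref{eq:ep}: writing $x_i = \frac{\sqrt p L_i}{\|z_i\|_2} z_i\Sigma^{1/2}$, the matrix $\Phi$ is invariant under the rescaling of rows (since it is a projection onto the row space of $X$), so $\Phi$ depends on the $z_i$ only through $Z\Sigma^{1/2}$, where $Z$ has i.i.d.\ standard Gaussian rows. For the diagonal term I would condition appropriately and use that $\Phi_{ii} = x_i^T(XX^T)^{-1}x_i$; by a leave-one-out / Sherman--Morrison argument this equals $\frac{w_i}{1+w_i}$ with $w_i = x_i^T (X_{-i}^TX_{-i})^{\dagger} x_i$ (on the relevant subspace), and standard Gaussian concentration of quadratic forms gives $w_i \asymp \frac{n}{p}$ up to constants depending on $\kappa = \mathrm{cond}(\Sigma)$ — the $E[L_i^{-2}]<M_1$ assumption enters here to control the normalization $\sqrt p/\|z_i\|_2$ and hence the signal strength lower bound. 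For the off-diagonal signal bias $\sum_{j\ne i}\Phi_{ij}\beta_j$, I would bound it by $\|\Phi_{i,-i}\|_2\,\|\beta_{-i}\|_2$; note $\|\Phi_{i,\cdot}\|_2^2 = \Phi_{ii}\le 1$ and more precisely $\|\Phi_{i,-i}\|_2^2 = \Phi_{ii}-\Phi_{ii}^2 \asymp n/p$, while $\|\beta\|_2^2 \le \mathrm{var}(Y)/\lambda_{\min}(\Sigma)\cdot(\text{const})$ is bounded via $M_0$ — though more care is needed since the support of $\beta$ is sparse, so one should split into the $S^*$ part (bounded $\ell_2$ norm, small coefficient vector $\Phi_{i,S^*}$) and the weak part. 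For the noise term, $\eta_i = x_i^T(XX^T)^{-1}\varepsilon$; conditionally on $X$, $\eta_i$ has variance $\sigma^2\Phi_{ii}\asymp \sigma^2 n/p$, and since only $\mathrm{var}(\varepsilon)<\infty$ is assumed I would use Chebyshev plus a union bound over $p$ coordinates — this is exactly where the polynomial rate $n^{-\alpha}$ (rather than an exponential rate) comes from, and why the $\log p$ and $\kappa^4$ factors appear in the stated probability.

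The main obstacle I anticipate is handling the union bound over all $p$ coordinates for the noise term $\max_i |\eta_i|$ with only a second moment on $\varepsilon$: a naive Chebyshev union bound over $p$ terms would lose a factor of $p$, which is fatal when $\log p = o(n)$ but $p$ itself is large. The resolution must come from the structure — $\eta = X^T(XX^T)^{-1}\varepsilon$ lives in the $n$-dimensional row space of $X$, so effectively there are only $n$ (not $p$) degrees of freedom, and one should bound $\max_i|\eta_i|$ via $\|\eta\|_\infty \le \|X^T(XX^T)^{-1}\|_{\infty\to\cdot}\|\varepsilon\|$-type inequalities, or more cleverly truncate $\varepsilon$ at a level growing like $n^{1/2-\alpha/2}$ and control the truncated-mean and tail separately. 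This truncation/degrees-of-freedom argument, together with verifying the Gaussian concentration bounds for $\Phi_{ii}$ hold simultaneously for all $i$ (another $p$-fold union bound, but there the summands are sub-exponential so $\log p = o(n)$ suffices), is the technical heart of the proof; the rest is bookkeeping with the constants $c_0, c_1, \kappa, M_0, M_1$ and choosing $\tau = \tau^*$ so that the claimed value of $\gamma$ sits strictly between the two extremes.
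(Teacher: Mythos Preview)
Your overall decomposition $\hat\beta^{(HD)}=\Phi\beta+\eta$ and the three-part strategy (diagonal, off-diagonal bias, noise) match the paper. However, there is a genuine gap in your treatment of the off-diagonal bias $\sum_{j\ne i}\Phi_{ij}\beta_j$.

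Your proposed bound $\|\Phi_{i,-i}\|_2\,\|\beta_{-i}\|_2$ is too weak by a factor of order $\sqrt{p}$. Since $\Phi$ is a projection, $\|\Phi_{i,-i}\|_2^2=\Phi_{ii}-\Phi_{ii}^2\asymp n/p$ is a \emph{deterministic} identity given $\Phi_{ii}$, so Cauchy--Schwarz yields a term of size $\sqrt{n/p}\cdot\|\beta\|_2$, which is never small relative to the diagonal contribution $\Phi_{ii}\beta_i\asymp (n/p)\tau^*$ (it would force $\tau^*\gtrsim\sqrt{p/n}\to\infty$). Splitting into the $S^*$ and $S_*$ parts does not rescue this: the $S_*$ part still has $\ell_2$ norm of constant order (via $M_0$) and the corresponding piece of $\Phi_{i,\cdot}$ still has $\ell_2$ norm $\asymp\sqrt{n/p}$. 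What the paper does instead (its Lemma~\ref{lemma:unimportant}) is exploit the rotational invariance of the $MACG(\Sigma)$ distribution of $H=X^T(XX^T)^{-1/2}$: for any fixed vector $v$ with $v_i=0$, an orthogonal rotation sends $e_i^T HH^Tv$ to $\|v\|_2\cdot \tilde\Phi_{12}$ in distribution, where $\tilde\Phi_{12}$ is a single off-diagonal entry of a projection of the same type. By Lemma~\ref{lemma:5} this is $O(\kappa t\sqrt{n}/p)$, smaller than the Cauchy--Schwarz bound by $\sqrt{p}$. The key point you are missing is that $\beta$ is fixed while $\Phi_{i,-i}$ is random and nearly isotropic, so the inner product concentrates far below the worst-case Cauchy--Schwarz value.

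For the noise term, you correctly identify the obstacle (union bound over $p$ coordinates with only second moments), but your proposed fixes (truncation, or the $n$-dimensional range of $\eta$) are not what the paper uses and would need substantial work to make rigorous. The paper's device is the Nemirovski--Lounici moment inequality $E\|\sum_j W_j\|_\infty^2\le C\log p\sum_j E\|W_j\|_\infty^2$ for independent mean-zero vectors $W_j\in\mathbb{R}^p$ (Proposition~\ref{prop:lounici}): after bounding the entries of $A=pX^T(XX^T)^{-1}\bar L$ uniformly via an SVD/spherical-distribution argument, one writes $\eta$ as a sum over samples $j=1,\dots,n$ of such $W_j$'s, applies the inequality, and finishes with Markov. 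This is where the $\log p$ and the polynomial $n^{-\alpha}$ rate arise. Also, a minor slip: conditional on $X$, $\mathrm{Var}(\eta_i)=\sigma^2[X^T(XX^T)^{-2}X]_{ii}$, not $\sigma^2\Phi_{ii}$; the correct scale is $\sigma^2 n/p^2$, not $\sigma^2 n/p$.
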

Theorem \ref{thm:1} guarantees the model selection consistency of the first stage of Algorithm
\ref{alg:1}. It only requires a second-moment condition on the noise tail, relaxing the sub-Gaussian assumption seen in other literature. The probability term shows that the algorithm requires the important signals to be lower bounded by a signal strength of $\sigma\sqrt{\frac{\log p}{n^{\alpha}}}$ with a positive $\alpha$. In addition, a gap of $\tau^*/\tau_*\geq 4\kappa^2$ is needed between the strong signals and the weak signals in order for a successful support recovery.
\par
 As $\gamma$ is not easily computable based on
data, we propose to rank the $|\hat\beta_i^{(HD)}|'s$ and select $d$ largest
coefficients. Alternatively, we can construct a series of nested models formed
by ranking the largest $n$ coefficients and adopt the extended BIC 
\citep{chen2008extended} to select the best submodel. Once the submodel $\mathcal{\tilde M}_d$ is obtained, we proceed to the second stage by obtaining an estimate via ordinary least squares $\hat\beta^{(OLS)}$ corresponding to $\mathcal{\tilde M}_d$. The theory for $\hat\beta^{(OLS)}$ requires more stringent conditions, as we now need to estimate $\beta_{\mathcal{\hat M}_d}$ instead of just a correct ranking. In particular, we have to impose conditions on the magnitude of $\beta_{S_*}$ and the moments of $L$, i.e., for $\hat\beta^{(OLS)}$ we have the following result.

\begin{theorem}\label{thm:2}
Assume the same conditions for $X$ and $\varepsilon$ as in Theorem \ref{thm:1}. We also assume $n \geq 64\kappa d\log p$ and $d - |S^*| \leq \tilde c$ for some $\tilde c > 0$. If $E[L^{-12}] \leq M_1$, $E[L^{12}] \leq M_2$, $\tau_*\leq \frac{\sigma}{\kappa}\sqrt{\frac{\log p}{n}}$ and there exists some $\iota\in (0, 1)$ such that $\sum_{i\in S_*}|\beta_i|^\iota\leq R$, then for any $\alpha > 0$, we have
\begin{align*}
P\bigg(\max_{|\mathcal{\hat M}| \leq d, ~S^*\subset\mathcal{\hat M}} \|\hat\beta^{(OLS)} - \beta\|_\infty &\leq 2\sigma\sqrt{\frac{\log p}{n^\alpha}}\bigg) \\
&= 1 - O\bigg(\frac{\lambda_*^{-2} d\log d}{n^{\frac{1}{3}(1 - \alpha)}} + \frac{M_1 + M_2}{n^{\frac{1}{3}(1 - 4\alpha)}} + \frac{(M_1 + M_2)R^3}{(\log p)^{2\iota}n^{3 - 4\alpha - 2\iota}}\bigg),
\end{align*}
i.e., if $\tau^* \geq 5\sigma\sqrt{\frac{\log p}{n^\alpha}}$, then we can choose $\gamma' = 3\sigma\sqrt{\frac{\log p}{n^\alpha}}$ and
\begin{align*}
\max_{i\not\in S^*}|\hat\beta_i^{(OLS)}| &\leq \gamma' \leq \min_{i\in S^*} |\hat\beta_i^{(OLS)}|
\end{align*}
with probability tending to 1.
\end{theorem}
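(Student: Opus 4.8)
\medskip
\noindent\textbf{Proof proposal.}
The plan is to fix an arbitrary admissible model $\mathcal M$ (meaning $S^*\subset\mathcal M$, $|\mathcal M|\le d$), decompose the estimation error of the corresponding $\hat\beta^{(OLS)}$ into three pieces, bound each uniformly over the at most $\sum_{k=0}^{\tilde c}\binom{p}{k}$ admissible models (recall $d-|S^*|\le\tilde c$), and then read off the thresholding statement. Since $S^*\cup S_*=\{1,\dots,p\}$, $S^*\subset\mathcal M$ forces $\mathcal M^c\subset S_*$, so on $\mathcal M^c$ the error of $\hat\beta^{(OLS)}$ is exactly $\beta_{\mathcal M^c}$, every coordinate of which has modulus at most $\tau_*\le\frac{\sigma}{\kappa}\sqrt{\log p/n}\le\sigma\sqrt{\log p/n^\alpha}$. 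On the coordinates of $\mathcal M$, substituting $Y=X_{\mathcal M}\beta_{\mathcal M}+X_{\mathcal M^c}\beta_{\mathcal M^c}+\varepsilon$ into $\hat\beta^{(OLS)}=(X_{\mathcal M}^TX_{\mathcal M})^{-1}X_{\mathcal M}^TY$ gives
\[
\hat\beta^{(OLS)}_{\mathcal M}-\beta_{\mathcal M}
=\underbrace{\Bigl(\tfrac1nX_{\mathcal M}^TX_{\mathcal M}\Bigr)^{-1}\Bigl(\tfrac1nX_{\mathcal M}^TX_{\mathcal M^c}\Bigr)\beta_{\mathcal M^c}}_{=:\,b_{\mathcal M}}
+\underbrace{\Bigl(\tfrac1nX_{\mathcal M}^TX_{\mathcal M}\Bigr)^{-1}\Bigl(\tfrac1nX_{\mathcal M}^T\varepsilon\Bigr)}_{=:\,\xi_{\mathcal M}},
\]
so it suffices to show, off an event of the stated probability and simultaneously over all admissible $\mathcal M$, that $\|b_{\mathcal M}\|_\infty\le\sigma\sqrt{\log p/n^\alpha}$ and $\|\xi_{\mathcal M}\|_\infty\le\sigma\sqrt{\log p/n^\alpha}$; adding the $\mathcal M^c$ bound then yields the claimed $2\sigma\sqrt{\log p/n^\alpha}$, and the three events will contribute the three terms of the stated probability.

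\medskip
\noindent\emph{Step 1 (design).} First I would control the spectrum of $\tfrac1nX_{\mathcal M}^TX_{\mathcal M}$. Using the elliptical representation \eqref{eq:ep}, $x_i=\tfrac{\sqrt pL_i}{\|z_i\|_2}z_i\Sigma^{1/2}$, this Gram matrix is an average of $n$ rank-one Gaussian terms reweighted by the $L_i$. Truncating the $L_i$ at an appropriate polynomial level makes the weights bounded, so a matrix-Bernstein/net argument together with a union over the $\sum_{k\le\tilde c}\binom{p}{k}$ admissible models (all absorbed because $n\ge64\kappa d\log p$) controls the truncated part, while the untruncated tails are handled by Markov using $E[L^{12}]\le M_2$ and $E[L^{-12}]\le M_1$. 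This yields an event $\mathcal E_1$ on which $\tfrac12\lambda_*\le\lambda_{\min}\bigl(\tfrac1nX_{\mathcal M}^TX_{\mathcal M}\bigr)$ and $\lambda_{\max}\bigl(\tfrac1nX_{\mathcal M}^TX_{\mathcal M}\bigr)\le2\lambda_{\max}(\Sigma)$ for all admissible $\mathcal M$, where $\lambda_*$ is the minimal-eigenvalue constant of the design, with $P(\mathcal E_1^c)=O\bigl((M_1+M_2)n^{-\frac13(1-4\alpha)}\bigr)$ --- the second term of the stated bound. On $\mathcal E_1$, $\|(\tfrac1nX_{\mathcal M}^TX_{\mathcal M})^{-1}\|\le2/\lambda_*$ uniformly.

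\medskip
\noindent\emph{Step 2 (noise --- the crux).} This is where the variance-only assumption on $\varepsilon$ has to be accommodated, and I expect it to be the main obstacle. Conditioning on $X$, the $j$-th coordinate of $\xi_{\mathcal M}$ equals $a_j^T\varepsilon$ with $a_j=\tfrac1nX_{\mathcal M}(\tfrac1nX_{\mathcal M}^TX_{\mathcal M})^{-1}e_j$, so on $\mathcal E_1$ (together with a high-probability bound $\max_i\|x_{i,\mathcal M}\|_2=O(\sqrt d)$ up to a small polynomial-in-$n$ slack, again from the $L$-moments) one has $\|a_j\|_2^2=\tfrac1n[(X_{\mathcal M}^TX_{\mathcal M})^{-1}]_{jj}\le\tfrac{2}{n\lambda_*}$ and $\|a_j\|_\infty\lesssim\tfrac{\sqrt d}{n\lambda_*}$. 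Since $\varepsilon$ only has a second moment, I would split $\varepsilon_i=\varepsilon_i\mathbf 1\{|\varepsilon_i|\le T\}+\varepsilon_i\mathbf 1\{|\varepsilon_i|>T\}$: the recentred bounded part obeys a Bernstein bound with parameters $(\sigma^2\|a_j\|_2^2,\,T\|a_j\|_\infty)$; the recentring error is at most $\|a_j\|_1\sigma^2/T\le\sqrt{2/\lambda_*}\,\sigma^2/T$; and the heavy-tail part is at most $\|a_j\|_\infty\,\|\varepsilon\mathbf 1\{|\varepsilon|>T\}\|_1$, whose expectation is $O\bigl(\tfrac{\sqrt d}{n\lambda_*}\cdot\tfrac{n\sigma^2}{T}\bigr)$. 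Balancing these three contributions against one another and against the target level $\sigma\sqrt{\log p/n^\alpha}$ by choosing $T$ of the appropriate polynomial order, then taking a union over the $d$ coordinates (and the admissible models, absorbed into the Bernstein exponent as in Step 1), leaves the residual probability $O\bigl(\lambda_*^{-2}d\log d\,n^{-\frac13(1-\alpha)}\bigr)$ --- the first term; the $n^{1/3}$ exponent reflects that, with only polynomial control on $\varepsilon$ (and on $[(X_{\mathcal M}^TX_{\mathcal M})^{-1}]_{jj}$, which rests on the merely polynomial event $\mathcal E_1$), the truncation optimisation produces a three-way split. On the complement, $\|\xi_{\mathcal M}\|_\infty\le\sigma\sqrt{\log p/n^\alpha}$ uniformly in $\mathcal M$.

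\medskip
\noindent\emph{Step 3 (weak-signal bias) and conclusion.} On $\mathcal E_1$, $\|b_{\mathcal M}\|_\infty\le\tfrac{2}{\lambda_*}\bigl\|\tfrac1nX_{\mathcal M}^TX_{\mathcal M^c}\beta_{\mathcal M^c}\bigr\|_\infty$, and the $j$-th entry of $\tfrac1nX_{\mathcal M}^TX_{\mathcal M^c}\beta_{\mathcal M^c}$ is $\sum_{i\in\mathcal M^c}\beta_i\bigl(\tfrac1nx_{\cdot j}^Tx_{\cdot i}\bigr)$. Splitting $\tfrac1nx_{\cdot j}^Tx_{\cdot i}$ into expectation plus fluctuation, the expectation contributes a constant multiple of $e_j^T\Sigma\beta_{\mathcal M^c}$, which is $\le\lambda_{\max}(\Sigma)\|\beta_{\mathcal M^c}\|_2$; since $\mathcal M^c\subset S_*$ and $\sum_{i\in S_*}|\beta_i|^\iota\le R$ with $|\beta_i|\le\tau_*$, we have $\|\beta_{\mathcal M^c}\|_2\le R^{1/2}\tau_*^{1-\iota/2}$ and $\|\beta_{\mathcal M^c}\|_1\le R\tau_*^{1-\iota}$, both polynomially small because $\tau_*\le\frac{\sigma}{\kappa}\sqrt{\log p/n}$, which (for the $n$ of interest) keeps the deterministic part below $\sigma\sqrt{\log p/n^\alpha}$. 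The fluctuation part is at most $\|\beta_{\mathcal M^c}\|_1\max_{i,j}|\tfrac1nx_{\cdot j}^Tx_{\cdot i}-E[\cdot]|$, and a moment/Markov bound on this quantity (again drawing on $E[L^{12}]$, hence $M_1,M_2$), after inserting the relevant $\tau_*$-powers and a union over the $\le p^2$ index pairs, produces exactly the third term $O\bigl((M_1+M_2)R^3(\log p)^{-2\iota}n^{-(3-4\alpha-2\iota)}\bigr)$. Intersecting the three events gives $\|\hat\beta^{(OLS)}-\beta\|_\infty\le2\sigma\sqrt{\log p/n^\alpha}$ uniformly over admissible $\mathcal M$ with the stated probability. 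Finally, if $\tau^*\ge5\sigma\sqrt{\log p/n^\alpha}$, then for $i\in S^*$, $|\hat\beta^{(OLS)}_i|\ge|\beta_i|-2\sigma\sqrt{\log p/n^\alpha}\ge3\sigma\sqrt{\log p/n^\alpha}$, whereas for $i\notin S^*$ (hence $i\in S_*$, $|\beta_i|\le\tau_*\le\sigma\sqrt{\log p/n^\alpha}$) we get $|\hat\beta^{(OLS)}_i|\le\tau_*+2\sigma\sqrt{\log p/n^\alpha}\le3\sigma\sqrt{\log p/n^\alpha}$; thus $\gamma'=3\sigma\sqrt{\log p/n^\alpha}$ separates $S^*$ from its complement with probability tending to one.
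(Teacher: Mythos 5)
Your decomposition is the same as the paper's: on the selected model, $\hat\beta^{(OLS)}_{\mathcal M}-\beta_{\mathcal M}$ splits into a noise term $(X_{\mathcal M}^TX_{\mathcal M})^{-1}X_{\mathcal M}^T\varepsilon$ and an omitted-variable bias $(X_{\mathcal M}^TX_{\mathcal M})^{-1}X_{\mathcal M}^TX_{\mathcal M^c}\beta_{\mathcal M^c}$, with the off-model coordinates absorbed by $\tau_*$; and your Step 1 (restricted-eigenvalue control of $\tfrac1nX_{\mathcal M}^TX_{\mathcal M}$ via the Gaussian factor $Z$ plus truncation of the $L_i$ at levels tuned to $E[L^{\pm 12}]$) is exactly the paper's Lemmas on the design. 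Where you genuinely diverge is the noise step, and that is precisely the crux you identified. The paper's device is Nemirovski's maximal inequality (its Proposition 1, quoted from Lounici 2008): writing the noise term as $\sum_j W_j$ with $W_j=(A_{1j}\varepsilon_j,\dots,A_{dj}\varepsilon_j)$, it bounds $E\|\sum_jW_j\|_\infty^2\le \tilde C\log d\sum_jE\|W_j\|_\infty^2$ using only second moments of $\varepsilon$, then applies Markov once; the $d\log d$ prefactor and the $n^{-(1-\alpha)/3}$ exponent fall out directly from this together with the entrywise bound on $A$ and the choice $k_1=n^{-2(1-\alpha)/9}$, $k_2=n^{(1-\alpha)/9}$. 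You replace this with truncation of $\varepsilon$ plus Bernstein on the bounded part and Markov on the tails. That is a legitimate alternative strategy for heavy-tailed noise, but your claim that the three-way balance lands exactly on $O\bigl(\lambda_*^{-2}d\log d\,n^{-\frac13(1-\alpha)}\bigr)$ is asserted rather than computed, and it is not obvious it would: in your route the union over coordinates and models enters a Bernstein exponent (which would not naturally produce a $d\log d$ prefactor), while the heavy-tail Markov step contributes a term of order $\sqrt d\,\sigma n^{\alpha/2}/(\lambda_*T\sqrt{\log p})$ whose interaction with the truncation levels for $L$ and with the $(M_1+M_2)n^{-\frac13(1-4\alpha)}$ budget you have not tracked. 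Similarly, in Step 3 the paper bounds $\|\omega\|_2$ by a single quadratic-form estimate $\tfrac1n\beta_{d,c}^TX_{d,c}^TX_{d,c}\beta_{d,c}\cdot\lambda_{\min}^{-1}(X_d^TX_d/n)$ (rotating $\Sigma^{1/2}_{d,c}\beta_{d,c}$ onto a single Gaussian direction and using a $\chi^2$ tail), rather than your entrywise expectation-plus-fluctuation split; your version should work but again the claimed third error term is not derived. So: same skeleton, genuinely different engine in the noise step; to make your version complete you would either need to carry out the truncation optimisation explicitly and verify the exponents, or simply invoke Nemirovski's inequality as the paper does --- which is the cleaner path and is essentially the reason the theorem holds under only $\mathrm{var}(\varepsilon)<\infty$.
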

The moment condition on $L$ is not tight. We used this number just for simplicity. As shown in Theorem \ref{thm:2}, the $l_\iota$ norm of $\beta_{S_*}$ is allowed to grow in the rate of $(\log p)^{2\iota/3} n^{1 - 4\alpha/3 - 2\iota/3}$, i.e., our algorithm works for weakly sparse coefficients. However, Theorem \ref{thm:2} imposes an upper bound on $\alpha$ while Theorem \ref{thm:1} not. This is mainly due to the moment assumption on $L$ and the different structure between $\hat\beta^{(HD)}$ and $\hat\beta^{(OLS)}$, i.e., $\hat\beta^{(HD)}$ does not rely on $L$ for diminishing the unimportant signals. For ridge regression, we have the following result. 
\begin{theorem}[Ridge regression] \label{thm:3}
Assume all the conditions in Theorem \ref{thm:2}. If we choose the ridge parameter satisfying
\begin{align*}
r \leq \frac{\sigma n^{(7/9 - 5\alpha/18)}\sqrt{\log p}}{162\kappa M_0},
\end{align*}
then we have
\begin{align*}
P\bigg(\max_{|\mathcal{\hat M}| \leq d, S^*\subset\mathcal{\hat M}} \|\hat\beta^{(ridge)} - \beta\|_\infty &\leq 3\sigma\sqrt{\frac{\log p}{n^\alpha}}\bigg) \\
&= 1 - O\bigg(\frac{\lambda_*^{-2} d\log d}{n^{\frac{1}{3}(1 - \alpha)}} + \frac{2M_1 + M_2}{n^{\frac{1}{3}(1 - 4\alpha)}} + \frac{(M_1 + M_2)R^3}{(\log p)^{2\iota}n^{3 - 4\alpha - 2\iota}}\bigg),
\end{align*}
i.e., if $\tau^* \geq 7\sigma\sqrt{\frac{\log p}{n^\alpha}}$, then we can choose $\gamma' = 4\sigma\sqrt{\frac{\log p}{n^\alpha}}$ and
\begin{align*}
\max_{i\not\in S^*}|\hat\beta_i^{(Ridge)}(r)| &\leq \gamma' \leq \min_{i\in S^*}|\hat\beta_i^{(Ridge)}(r)|
\end{align*}
with probability tending to 1.
\end{theorem}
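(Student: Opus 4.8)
The plan is to view the ridge estimator on any admissible submodel $\mathcal{\hat M}$ (with $|\mathcal{\hat M}|\le d$ and $S^*\subset\mathcal{\hat M}$) as a small, controllable perturbation of the OLS estimator already analyzed in Theorem~\ref{thm:2}, and to show the perturbation is negligible once $r$ is below the stated threshold. Writing $A=X_{\mathcal{\hat M}}^TX_{\mathcal{\hat M}}$ and using the resolvent identity $(A+rI_d)^{-1}-A^{-1}=-r(A+rI_d)^{-1}A^{-1}$, one gets
\begin{align*}
\hat\beta^{(ridge)}_{\mathcal{\hat M}}(r)-\hat\beta^{(OLS)}_{\mathcal{\hat M}}=-r\,(X_{\mathcal{\hat M}}^TX_{\mathcal{\hat M}}+rI_d)^{-1}\hat\beta^{(OLS)}_{\mathcal{\hat M}}.
\end{align*}
Hence it suffices to show $\|r(A+rI_d)^{-1}\hat\beta^{(OLS)}_{\mathcal{\hat M}}\|_\infty\le\sigma\sqrt{\log p/n^\alpha}$ uniformly over admissible $\mathcal{\hat M}$ on (essentially) the same event as Theorem~\ref{thm:2}; the bound $\|\hat\beta^{(ridge)}(r)-\beta\|_\infty\le 3\sigma\sqrt{\log p/n^\alpha}$ then follows by the triangle inequality with the $2\sigma\sqrt{\log p/n^\alpha}$ bound of Theorem~\ref{thm:2}, and the signal-separation claim follows by the same elementary argument as before, now using $\tau^*\ge 7\sigma\sqrt{\log p/n^\alpha}$ for $i\in S^*$ and $\tau_*\le(\sigma/\kappa)\sqrt{\log p/n}\le\sigma\sqrt{\log p/n^\alpha}$ (valid since $\kappa\ge1$, $\alpha<1$) for $i\notin S^*$.

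For the perturbation bound I would first reduce to $\ell_2$ via $\|r(A+rI_d)^{-1}v\|_\infty\le\|r(A+rI_d)^{-1}\|_{\mathrm{op}}\|v\|_2\le(r/\lambda_{\min}(A))\|v\|_2$, and then bound the two factors separately, uniformly over the admissible submodels. The uniform lower bound $\min_{|\mathcal{\hat M}|\le d}\lambda_{\min}(X_{\mathcal{\hat M}}^TX_{\mathcal{\hat M}})\gtrsim n\lambda_*/\kappa$ (up to the polynomial-in-$n$ loss already present in the proof of Theorem~\ref{thm:2}) is exactly the restricted-eigenvalue estimate for the elliptical design established there; this is where $n\ge 64\kappa d\log p$ and the moment conditions on $L$ are used, and it contributes the $\lambda_*^{-2}d\log d/n^{(1-\alpha)/3}$ term to the probability. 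For $\|\hat\beta^{(OLS)}_{\mathcal{\hat M}}\|_2$ I would use $\|\hat\beta^{(OLS)}_{\mathcal{\hat M}}-\beta_{\mathcal{\hat M}}\|_2\le\sqrt d\,\|\hat\beta^{(OLS)}-\beta\|_\infty\le 2\sqrt d\,\sigma\sqrt{\log p/n^\alpha}$ on the Theorem~\ref{thm:2} event, together with $\|\beta_{\mathcal{\hat M}}\|_2\le\|\beta\|_2$, the latter bounded by a constant depending on $M_0$, the moments of $L$, and $\lambda_*$ through $\beta^T\mathrm{Cov}(x_i)\beta\le\mathrm{var}(Y)\le M_0$. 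Combining, $\|r(A+rI_d)^{-1}\hat\beta^{(OLS)}_{\mathcal{\hat M}}\|_2$ is at most $(r\kappa M_0)/(n\lambda_*)$ times the polynomial loss, and the stated choice $r\le\sigma n^{7/9-5\alpha/18}\sqrt{\log p}/(162\kappa M_0)$ is calibrated so that this is at most $\sigma\sqrt{\log p/n^\alpha}$; the extra additive term in the probability bound (the $2M_1$ in place of $M_1$) absorbs the additional concentration events used here.

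The main obstacle is to make the perturbation bound hold \emph{uniformly} over the exponentially many admissible $\mathcal{\hat M}$ without losing probability beyond what Theorem~\ref{thm:2} already pays. The key observation making this possible is that the only submodel-dependent quantity entering the bound is $\lambda_{\min}(X_{\mathcal{\hat M}}^TX_{\mathcal{\hat M}})$ (and, through $\hat\beta^{(OLS)}_{\mathcal{\hat M}}$, exactly the same ingredients as in Theorem~\ref{thm:2}), so one must verify that the event controlling $\min_{|\mathcal{\hat M}|\le d}\lambda_{\min}(X_{\mathcal{\hat M}}^TX_{\mathcal{\hat M}})$ — obtained from a union bound over the $\binom{p}{d}$ index sets together with the elliptical concentration lemma — is already contained in (or can be merged with) the event used for Theorem~\ref{thm:2}, so that both estimates hold simultaneously. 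The remaining work is purely bookkeeping: propagating the polynomial-in-$n$ losses carefully enough to arrive at the exponent $7/9-5\alpha/18$, and checking the final deterministic inequalities behind the signal-separation conclusion.
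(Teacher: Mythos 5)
Your overall strategy---treat the ridge estimator as a perturbation of a reference quantity and reuse the Theorem~\ref{thm:2} machinery for the noise and omitted-variable parts, with the uniformity over submodels coming from the same restricted-eigenvalue event---is sound, and your final separation inequalities are fine. But there is a genuine gap in the calibration of the perturbation term: you apply the resolvent identity to $\hat\beta^{(OLS)}$, so your perturbation is $r(A+rI_d)^{-1}\hat\beta^{(OLS)}$, and you bound it by $(r/\lambda_{\min}(A))\|\hat\beta^{(OLS)}\|_2$ with $\|\hat\beta^{(OLS)}\|_2\le\|\beta\|_2+\sqrt d\,\|\hat\beta^{(OLS)}-\beta\|_\infty\le\|\beta\|_2+2\sqrt d\,\sigma\sqrt{\log p/n^\alpha}$. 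Only the first summand is $O(1)$ (via $\mathrm{var}(Y)\le M_0$); the second is of order $\sqrt{d\log p/n^\alpha}$, which under the only available constraint $d\log p\le n/(64\kappa)$ can grow like $n^{(1-\alpha)/2}$. On the relevant event one has $\lambda_{\min}(A)\gtrsim \lambda_* n^{(7+2\alpha)/9}$ (after the choice $k_1=n^{-2(1-\alpha)/9}$ made in the proof of Lemma~\ref{lemma:eta}), and the stated threshold exponent $7/9-5\alpha/18$ is calibrated exactly so that $(r/\lambda_{\min}(A))\|\beta\|_2\le\sigma\sqrt{\log p/n^\alpha}$ when $\|\beta\|_2=O(1)$; there is no slack left. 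Your extra contribution is then of order $(r/\lambda_{\min}(A))\cdot\sqrt d\,\sigma\sqrt{\log p/n^\alpha}\asymp \sigma^2\sqrt{d}\,(\log p)\,n^{-\alpha}/(\kappa M_0\lambda_*)$, which exceeds the target $\sigma\sqrt{\log p/n^\alpha}$ by a factor of order $\sqrt{d\log p}/n^{\alpha/2}$; this is controlled only if $d\log p\lesssim n^{\alpha}$, a condition not assumed. So your argument proves the theorem only for a strictly smaller $r$ (or smaller $d$) than claimed.

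The paper avoids this by decomposing $\hat\beta^{(ridge)}$ directly as $\beta-r(A+rI_d)^{-1}\beta+(A+rI_d)^{-1}X_d^T\varepsilon+(A+rI_d)^{-1}X_d^TX_{d,c}\beta_{d,c}$: the resolvent hits the \emph{true} $\beta$, whose $\ell_2$ norm is $O(1)$ by $\mathrm{var}(Y)\le M_0$, while the noise and omitted-variable terms are identical to their OLS counterparts except that $\lambda_{\min}(A)$ is replaced by $\lambda_{\min}(A+rI_d)\ge\lambda_{\min}(A)$, so the bounds of Lemmas~\ref{lemma:eta} and~\ref{lemma:omega} carry over verbatim; each of the three pieces contributes $\sigma\sqrt{\log p/n^\alpha}$, giving the factor $3$. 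You can rescue your version by splitting $\hat\beta^{(OLS)}=\beta+\eta+\omega$ inside the resolvent and bounding $r(A+rI_d)^{-1}\eta$ and $r(A+rI_d)^{-1}\omega$ by $\|\eta+\omega\|$ directly using $\|r(A+rI_d)^{-1}\|_{op}\le 1$ applied to the $\ell_2$-controlled piece $\omega$ and redoing the $\ell_\infty$ analysis for $\eta$---but at that point you are essentially reproducing the paper's decomposition.
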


Note that the ridge parameter $r$ can be chosen as a constant, bypassing the need to specify $r$ at least in theory. 
When both the noise $\varepsilon$ and $X$ follows Gaussian distribution and $\tau_* = 0$, we can obtain a more explicit form of the threshold $\gamma'$, as the following Corollary shows.
\begin{corollary}[Gaussian noise] \label{cor:1}
Assume $\varepsilon\sim N(0, \sigma^2)$, $X\sim N(0, \Sigma)$ and $\tau_* = 0$. For any $\delta \in (0, 1)$, define $\gamma' = 8\sqrt{2}\hat \sigma \sqrt{\frac{2\kappa\log(4d/\delta)}{n}}$, where $\hat\sigma$ is the estimated standard error as $\hat\sigma^2 = \sum_{i = 1}^n (y_i - \hat y_i)^2/(n - d)$. For sufficiently large $n$, if $d \leq n - 4K^2\log(2/\delta)/c$ for some absolute constants $c$, $K$ and $\tau^* \geq 24\sigma \sqrt{\frac{2\kappa\log(4d/\delta)}{n}}$, then with probability at least $1 - 2\delta$, we have
\begin{align*}
|\hat\beta_i^{(OLS)}|\geq\gamma' \quad \forall i\in S^*\quad\mbox{ and }\quad |\hat\beta_i^{(OLS)}|\leq\gamma' \quad\forall i\not\in S_*.
\end{align*}
\end{corollary}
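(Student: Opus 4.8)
The plan is to work conditionally on a Stage-1 model $\mathcal M$ with $S^*\subseteq\mathcal M$ and $|\mathcal M|=d<n$ (such a model is supplied by Stage 1 via Theorem~\ref{thm:1}), and to exploit that under Gaussianity every relevant quantity has an exact conditional law. Since $\tau_*=0$, each coordinate of $\beta$ is either in $S^*$ (magnitude $\ge\tau^*$) or exactly zero; in particular $\beta_{\mathcal M^c}=0$, the coordinates of $\hat\beta^{(OLS)}$ outside $\mathcal M$ vanish, and on $\mathcal M$
\begin{align*}
\hat\beta^{(OLS)}_{\mathcal M}-\beta_{\mathcal M}=(X_{\mathcal M}^TX_{\mathcal M})^{-1}X_{\mathcal M}^T\varepsilon=:W .
\end{align*}
Conditionally on $X_{\mathcal M}$, $W\sim N\!\big(0,\sigma^2(X_{\mathcal M}^TX_{\mathcal M})^{-1}\big)$; moreover $\|Y-\hat Y\|_2^2=\|(I-P_{\mathcal M})Y\|_2^2=\|(I-P_{\mathcal M})\varepsilon\|_2^2$ (with $P_{\mathcal M}$ the projection onto the column space of $X_{\mathcal M}$, using that $X\beta=X_{\mathcal M}\beta_{\mathcal M}$ lies in its range), which is $\sigma^2$ times a $\chi^2_{n-d}$ variable conditionally independent of $W$ (orthogonal pieces of a Gaussian vector). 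So the statement reduces to three separate controls: (a) a high-probability bound $\max_i[(X_{\mathcal M}^TX_{\mathcal M})^{-1}]_{ii}\le 1/\lambda_{\min}(X_{\mathcal M}^TX_{\mathcal M})\le \mathrm{const}\cdot\kappa/n$, obtained from the Davidson--Szarek bound on the smallest singular value of the $n\times d$ standard Gaussian matrix $Z$ in $X_{\mathcal M}=Z\Sigma_{\mathcal M\mathcal M}^{1/2}$ together with $\lambda_{\min}(\Sigma_{\mathcal M\mathcal M})\ge\lambda_{\min}(\Sigma)$ (this is where the absolute constants $c,K$ and the requirement $d\le n-4K^2\log(2/\delta)/c$ enter: they keep $\sqrt n-\sqrt d$, equivalently $n-d$, large after subtracting the $O(K\sqrt{\log(1/\delta)})$ deviation); (b) on that event, a union bound over the $d$ coordinates of the conditional Gaussian $W$, giving $\max_{i\in\mathcal M}|W_i|\le\sigma\sqrt{2\,(\mathrm{const}\cdot\kappa)\log(4d/\delta)/n}$ with conditional probability $\ge1-\delta/2$; (c) a two-sided $\chi^2_{n-d}$ tail bound giving $\hat\sigma\in[\sigma/\sqrt2,\sqrt2\,\sigma]$ with probability $\ge1-\delta$ (again using $n-d$ large). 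Intersecting all these events costs probability at most $2\delta$.

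On the intersection one simply compares $\gamma'=8\sqrt2\,\hat\sigma\sqrt{2\kappa\log(4d/\delta)/n}$ with the two groups. Since $\hat\sigma$ is within a constant factor of $\sigma$, the numerical constant $8\sqrt2$ can be seen to make $\gamma'$ at once (i) $\ge\max_{i\in\mathcal M}|W_i|$, which handles the zero coordinates, for which $|\hat\beta^{(OLS)}_i|=|W_i|\le\gamma'$ (and $\hat\beta^{(OLS)}_i=0$ for $i\in\mathcal M^c$), and (ii) $\le\tau^*-\max_{i\in\mathcal M}|W_i|$, which via $|\hat\beta^{(OLS)}_i|\ge|\beta_i|-|W_i|\ge\tau^*-\max_j|W_j|$ handles $i\in S^*$; the hypothesis $\tau^*\ge24\sigma\sqrt{2\kappa\log(4d/\delta)/n}$ is exactly what gives (ii) with room to absorb the $\sqrt2$ slack coming from $\hat\sigma\le\sqrt2\,\sigma$ and the constant from step (a).

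Because all distributions here are exactly Gaussian or $\chi^2$, I do not expect a genuine obstacle — the exact conditional laws replace the truncation machinery needed for the heavy-tailed Theorems~\ref{thm:1}--\ref{thm:3}. The only real care is bookkeeping the absolute constants so that $\gamma'$ lands strictly between the two groups for all admissible $(\delta,d,n)$; and, if one wants the conclusion uniformly over every admissible Stage-1 model rather than a fixed one, the $d$-coordinate union bound in (b) must be enlarged to a union over the $\le p^{\,d-|S^*|}$ such models, which would turn $\log(4d/\delta)$ into a $\log p$-type term.
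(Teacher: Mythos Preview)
Your proposal is correct and follows essentially the same route as the paper: the decomposition $\hat\beta^{(OLS)}_{\mathcal M}=\beta_{\mathcal M}+(X_{\mathcal M}^TX_{\mathcal M})^{-1}X_{\mathcal M}^T\varepsilon$, the exact conditional Gaussian law of the error, the bound on $\max_i[(X_{\mathcal M}^TX_{\mathcal M})^{-1}]_{ii}$ via $\lambda_{\min}$, the union bound over $d$ coordinates, and the $\chi^2_{n-d}$ concentration for $\hat\sigma^2$ are exactly the paper's steps.

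Two small bookkeeping points where you diverge from the paper. First, for step~(a) the paper does not use Davidson--Szarek on the $n\times d$ block; it instead invokes the restricted-eigenvalue lemma of Raskutti--Wainwright--Yu (inherited from the assumptions of Theorem~\ref{thm:2}, namely $n\ge 64\kappa d\log p$), which is what produces the specific constant $64$ and hence the $8$ in $\gamma'$. Second, the hypothesis $d\le n-4K^2\log(2/\delta)/c$ in the paper is used \emph{only} for the $\chi^2_{n-d}$ concentration in step~(c): $K=\|\chi^2(1)-1\|_{\psi_1}$ and $c$ is the constant from Vershynin's sub-exponential Bernstein inequality, giving $\tfrac12\sigma^2\le\hat\sigma^2\le\tfrac32\sigma^2$. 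Your attribution of this condition to the singular-value bound is harmless but not how the paper organizes the constants. Your closing remark about uniformity over all Stage-1 models is a fair caveat; the paper's corollary, like your argument, treats a fixed $\mathcal{\tilde M}_d$.
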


Write $\bar C = (X_{\mathcal{\tilde M}_d}^TX_{\mathcal{\tilde M}_d})^{-1}$ as in Algorithm 1. In practice, we propose to use $\gamma' = mean(\sqrt{2\hat \sigma^2\bar C_{ii}\log(4d/\delta)})$ as the threshold (see Algorithm \ref{alg:1}), because the estimation error takes a form of $\sqrt{\sigma^2\bar C_{ii}\log(4d/\delta)}$. Alternatively, instead of identifying an explicit form of the threshold value (as is hard for general noise), one may also use BIC on nested models formed by ranking $|\hat\beta^{(OLS)}|$ to search for the true model. Once the final model is obtained, as in Stage 3 of Algorithm 1, we refit it again using ordinary least squares. The final output will have the same output as if we knew the true model {\em a priori} with probability tending to 1, i.e., we have the following result.
\begin{theorem}\label{thm:4}
Let $\mathcal{\hat M}$ and $\hat\beta$ be the final output from LAT or RAT. Assume all conditions in Theorem \ref{thm:1}, \ref{thm:2} and \ref{thm:3}. Then with probability at least $1 - O\bigg(\frac{\lambda_*^{-2} d\log d}{n^{\frac{1}{3}(1 - \alpha)}} + \frac{M_1 + M_2}{n^{\frac{1}{3}(1 - 4\alpha)}} + \frac{(M_1 + M_2)R^3}{(\log p)^{2\iota}n^{3 - 4\alpha - 2\iota}}\bigg)$ we have
\begin{align*}
\mathcal{\hat M} = S^*,~\|\hat\beta_{S^*} - \beta_{S^*}\|_2^2\leq \frac{2|S^*|\sigma^2\log p}{n^\alpha},~\mbox{and}~ \|\hat\beta - \beta\|_\infty\leq 2\sigma\sqrt{\frac{\log p }{n^{\alpha}}}.
\end{align*}
\end{theorem}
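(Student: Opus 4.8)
The plan is to chain the three stages together and argue that, on the intersection of the high-probability events furnished by Theorems \ref{thm:1}, \ref{thm:2} (or \ref{thm:3} for RAT) and Corollary \ref{cor:1}, the output model $\mathcal{\hat M}$ equals $S^*$ exactly, after which the claimed estimation bounds follow from the behaviour of ordinary least squares (or ridge) on a correctly specified low-dimensional model. First I would apply Theorem \ref{thm:1}: with probability $1 - O(\sigma^2\kappa^4\log p / (\tau^{*2}n^\alpha))$ the first-stage ranking separates $S^*$ from $S_*$, so choosing the top $d$ coordinates of $|\hat\beta^{(HD)}|$ yields a submodel $\mathcal{\tilde M}_d$ with $S^*\subset\mathcal{\tilde M}_d$ and $|\mathcal{\tilde M}_d| = d < n$. (Here one notes that $d - |S^*| \leq \tilde c$ is exactly the regime assumed in Theorem \ref{thm:2}, and that this inclusion event is what makes the uniform-over-models statements in Theorems \ref{thm:2}--\ref{thm:3} applicable.) Second, conditionally on $S^*\subset\mathcal{\tilde M}_d$, Theorem \ref{thm:2} (resp. Theorem \ref{thm:3}) gives $\|\hat\beta^{(OLS)} - \beta\|_\infty \le 2\sigma\sqrt{\log p/n^\alpha}$ uniformly over all such $d$-subsets; combined with the signal gap $\tau^*\ge 5\sigma\sqrt{\log p/n^\alpha}$ and $\tau_*\le \tfrac{\sigma}{\kappa}\sqrt{\log p/n}$, the threshold $\gamma'$ cleanly separates the strong coordinates from the rest, so that the second-stage thresholded model $\mathcal{\hat M}$ satisfies $\mathcal{\hat M} = S^*$.

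Once $\mathcal{\hat M} = S^*$ is established, the third stage is an OLS (or ridge) refit on the fixed low-dimensional design $X_{S^*}$ with $|S^*|\log p = o(n)$, and the two norm bounds follow directly. For the $\ell_\infty$ bound I would re-invoke the coordinatewise control already proved for the second-stage estimator — since $\mathcal{\hat M}=S^*$ the refit is just $\hat\beta^{(OLS)}$ restricted to a correct model, so $\|\hat\beta - \beta\|_\infty \le 2\sigma\sqrt{\log p/n^\alpha}$ is inherited. For the $\ell_2$ bound on $\hat\beta_{S^*}-\beta_{S^*}$, I would write $\hat\beta_{S^*} - \beta_{S^*} = (X_{S^*}^TX_{S^*})^{-1}X_{S^*}^T\varepsilon$, bound its squared norm by $\lambda_{\min}(X_{S^*}^TX_{S^*})^{-1}\|X_{S^*}^T\varepsilon\|_2^2$, and control the minimal eigenvalue via the restricted-eigenvalue-type estimate for $EN(L,\Sigma)$ designs (the same concentration that underlies Theorem \ref{thm:2}) and the noise term by a second-moment/Markov argument coordinatewise — since $\varepsilon$ only has a bounded variance, I cannot use sub-Gaussian tail bounds and must instead union-bound Chebyshev-type inequalities over the $|S^*|$ coordinates, which is precisely why $|S^*|\log p = o(n)$ enters and why the final bound carries the factor $|S^*|\sigma^2\log p/n^\alpha$.

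The probability accounting is the routine part: the total failure probability is at most the sum of the failure probabilities in Theorems \ref{thm:1}--\ref{thm:3} and Corollary \ref{cor:1}, and one checks that the dominant term is $O\big(\tfrac{\lambda_*^{-2} d\log d}{n^{(1-\alpha)/3}} + \tfrac{M_1+M_2}{n^{(1-4\alpha)/3}} + \tfrac{(M_1+M_2)R^3}{(\log p)^{2\iota}n^{3-4\alpha-2\iota}}\big)$, since the Theorem \ref{thm:1} term $\sigma^2\kappa^4\log p/(\tau^{*2}n^\alpha)$ is of smaller order once $\tau^* \gtrsim \sigma\sqrt{\log p/n^\alpha}$. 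The main obstacle I anticipate is not any single stage but the logical threading of the \emph{uniform-over-models} guarantees: one must be careful that the event ``$S^*\subset\mathcal{\tilde M}_d$ and $|\mathcal{\tilde M}_d|\le d$'' from Stage 1 is exactly the conditioning event under which the Stage-2 supremum bound holds, and likewise that ``$\mathcal{\hat M}=S^*$'' from Stage 2 is a valid conditioning event for Stage 3 — so that no circularity or hidden dependence between the randomness used at different stages creeps in. Managing this cleanly (rather than the eigenvalue or noise estimates, which are standard given the earlier lemmas) is where the real care is required.
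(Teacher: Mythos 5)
Your proposal is correct and follows exactly the route the paper intends: the paper's own proof of Theorem \ref{thm:4} is a single sentence stating that the result is immediately implied by chaining Theorems \ref{thm:1}--\ref{thm:3}, which is precisely the Stage-1 inclusion / Stage-2 separation / Stage-3 refit argument you spell out. Your version is in fact more careful than the paper's (e.g.\ about the uniform-over-models conditioning); the only soft spot is the claim that the Theorem \ref{thm:1} failure probability $\sigma^2\kappa^4\log p/(\tau^{*2}n^\alpha)$ is dominated --- under the minimal gap $\tau^*\asymp\sigma\sqrt{\log p/n^\alpha}$ it is only $O(\kappa^4)$, a constant --- but the paper's stated probability bound silently drops this term as well.
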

As implied by Theorem \ref{thm:1} -- \ref{thm:4}, {\em LAT} and {\em RAT} can consistently identify strong signals in the ultra-high dimensional ($\log p = o(n)$) setting with only the bounded moment assumption $var(\varepsilon) < \infty$,  in contrast to most existing methods that require $\varepsilon \sim N(0, \sigma^2)$ or $\|\varepsilon\|_\infty < \infty$.

\section{Experiments}
In this section, we provide extensive numerical experiments for assessing the performance of {\em LAT} and {\em RAT}. In particular, we compare the two methods to existing penalized methods including {\em lasso}, elastic net ({\em enet} \citep{zou2005regularization}), {\em adaptive lasso} \citep{zou2006adaptive}, {\em scad} \citep{fan2001variable} and {\em mc+} \citep{zhang2010nearly}. As it is well-known that the {\em lasso} estimator is biased, we also consider two variations of it by combining {\em lasso} with Stage 2 and 3 of our {\em LAT} and {\em RAT} algorithms, denoted as {\em lasLAT} ({\em las1} in Figures) and {\em lasRAT} ({\em las2} in Figures) respectively. We note that the {\em lasLat} algorithm is very similar to the thresholded lasso \citep{zhou2010thresholded} with an additional thresholding step. We code {\em LAT} and {\em RAT} and {\em adaptive lasso} in {\em Matlab}, use \texttt{glmnet} \citep{friedman2010regularization} for {\em enet} and {\em lasso}, and \texttt{SparseReg} \citep{zhou2012path,zhou2013path} for {\em scad} and {\em mc+}. Since {\em adaptive lasso} achieves a similar performance as {\em lasLat} on synthetic datasets, we only report its performance for the real data.

\subsection{Synthetic Datasets}
The model used in this section for comparison is the linear model $Y = X\beta + \varepsilon$, where $\varepsilon\sim N(0, \sigma^2)$ and $X\sim N(0, \Sigma)$. To control the signal-to-noise ratio, we define $r = \|\beta\|_2/\sigma$, which is chosen to be $2.3$ for all experiments. The sample size and the data dimension are chosen to be $(n, p) = (200, 1000)$ or $(n, p) = (500, 10000)$ for all experiments. For evaluation purposes, we consider four different structures of $\Sigma$ below.
\par
\noindent \textbf{(i) \emph{Independent predictors}}.
The support is set as $S=\{1,2,3,4,5\}$. We generate $X_i$ from a standard multivariate
normal distribution with independent components. The coefficients are specified as
\[
\beta_i = \left\{
\begin{array}{ll}
(-1)^{u_i}(|N(0,1)|+1), ~u_i\sim Ber(0.5) & i\in S\\
0 & i\not\in S.
\end{array}
\right.
\]
\par

\noindent \textbf{(ii) \emph{Compound symmetry}}. 
All predictors are equally correlated with correlation
$\rho = 0.6$. The coefficients are set to be $\beta_i = 3$ for $i=1,...,5$ and $\beta_i
= 0$ otherwise.

\noindent \textbf{(iii) \emph{Group structure}}. 
This example is  Example 4 in \citet{zou2005regularization}, for which we
allocate the 15 true variables into three groups. Specifically, the predictors are generated as
\begin{align*}
x_{1+3m} &= z_1+N(0, 0.01),\\
x_{2+3m} &= z_2+N(0, 0.01),\\
x_{3+3m} &= z_3+N(0, 0.01),
\end{align*}
where $m=0,1,2,3,4$ and $z_i\sim N(0,1)$ are independent. The coefficients are set as
\begin{align*}
\beta_i = 3, ~i=1,2,\cdots,15;~ \beta_i = 0, ~i=16,\cdots,p.
\end{align*}

\noindent \textbf{(iv) \emph{Factor models}}. 
This model is also considered in 
\citet{meinshausen2010stability} and 
\citet{cho2012high}. Let $\phi_j,
j=1,2,\cdots,k$ be independent standard normal variables. We set
predictors as $x_i = \sum_{j=1}^k \phi_j f_{ij}+\eta_i$, where $f_{ij}$ and
$\eta_i$ are generated from independent standard normal distributions. The
number of factors is chosen as $k=5$ in the simulation while
the coefficients are specified the same as in Example (ii).

\begin{figure*}[!htbp]
\centering
\includegraphics[height = 3.8cm]{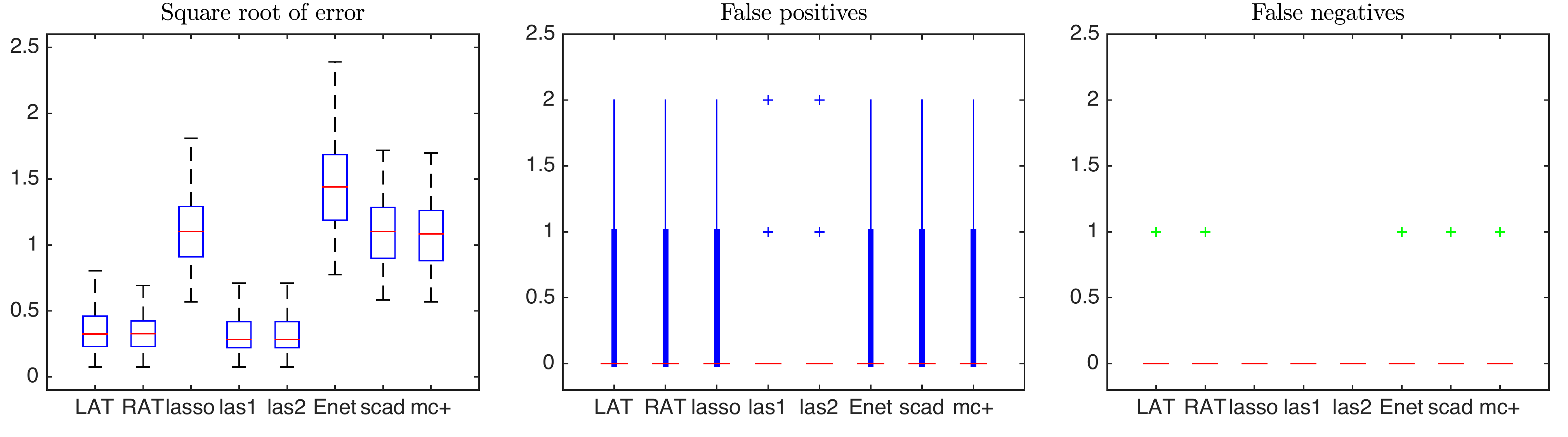}
 \vspace{.3in}
\caption{The Boxplots for Example (i). Left: Estimation Error; Middle: False Positives; Right: False Negatives}
\label{fig:1}
\end{figure*}

\begin{figure*}[!htpb]
\centering
\includegraphics[height = 3.8cm]{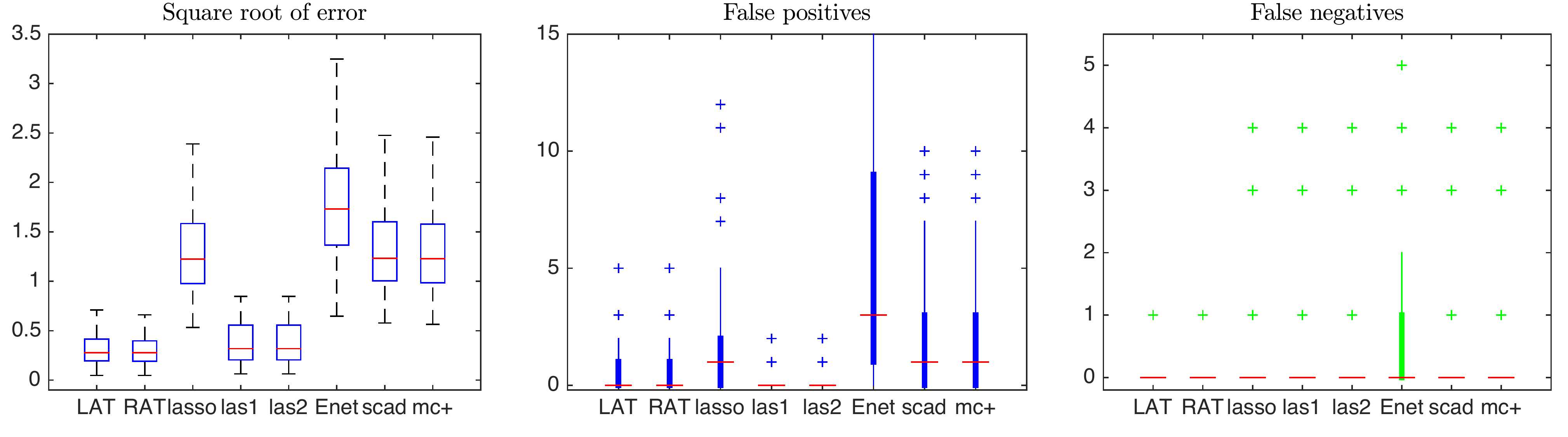}
 \vspace{.3in}
\caption{The Boxplots for Example (ii). Left: Estimation Error; Middle: False Positives; Right: False Negatives}
\label{fig:2}
\end{figure*}

\begin{figure*}[!htbp]
\centering
\includegraphics[height = 3.8cm]{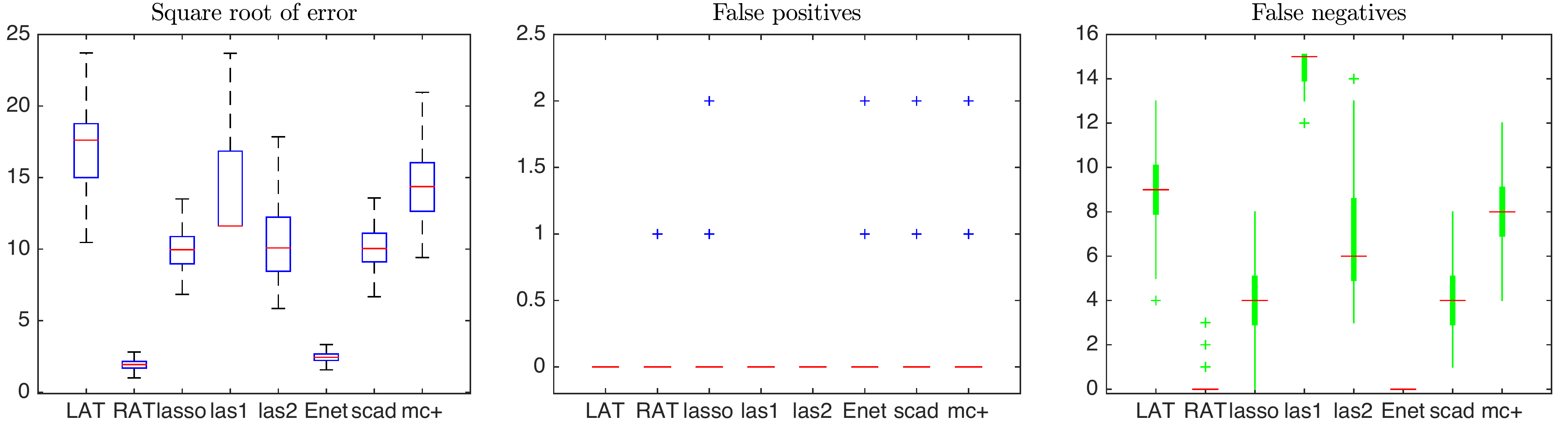}
 \vspace{.3in}
\caption{The Boxplots for Example (iii). Left: Estimation Error; Middle: False Positives; Right: False Negatives}
\label{fig:3}
\end{figure*}

\begin{figure*}[!htbp]
\centering
\includegraphics[height = 3.8cm]{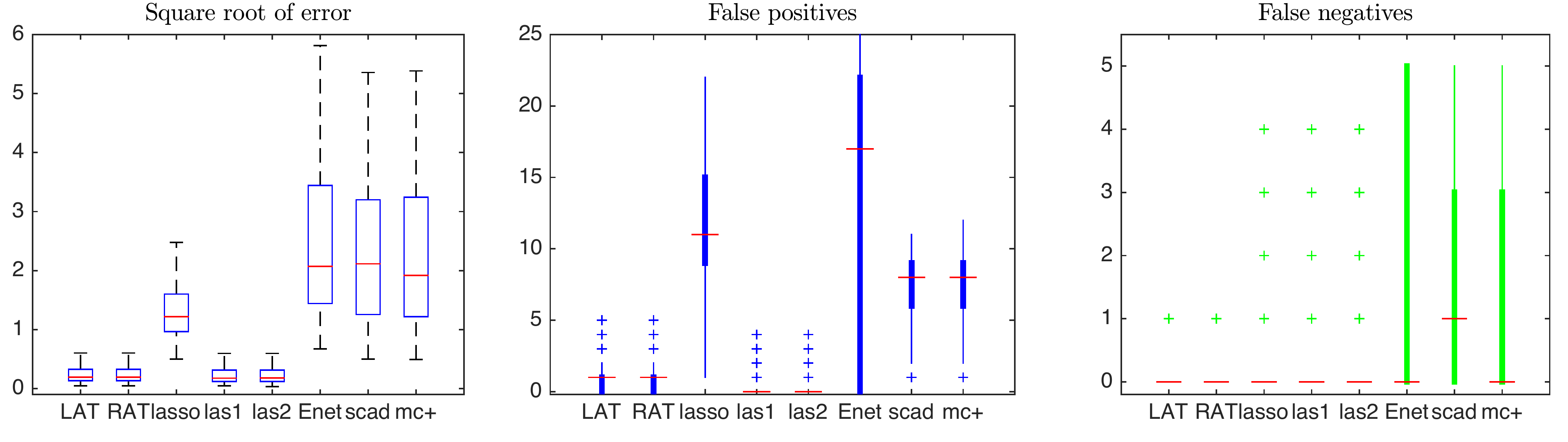}
 \vspace{.3in}
\caption{The boxplots for Example (iv). Left: Estimation Error; Middle: False Positives; Right: False Negatives}
\label{fig:4}
\end{figure*}

\begin{table*}[!tb]
\caption{Results for $(n, p) = (500, 10000)$}
\label{tab:2}
\begin{center}
\begin{tabular}{l|l|p{1.3cm}p{1.3cm}p{1.3cm}p{1.3cm}p{1.3cm}p{1.3cm}p{1.3cm}p{1.3cm}}
\hline
Example       &          & {\em LAT} & {\em RAT} & {\em lasso} & {\em lasLAT} & {\em lasRAT} & {\em enet} & {\em scad} & {\em mc+}\\
\hline\hline
           & RMSE     &  0.263    & 0.264     & 0.781       & 0.214        & 0.214        & 1.039      & 0.762      & 0.755\\
Ex.(i)     & \# FPs   &  0.550    & 0.580     & 0.190       & 0.190        & 0.190        & 0.470      & 0.280      & 0.280\\
           & \# FNs   &  0.010    & 0.010     & 0.000       & 0.000        & 0.000        & 0.000      & 0.000      & 0.000\\
           & Time  &  36.1     & 41.8      & 72.7        & 72.7         & 74.1         & 71.8       & 1107.5     & 1003.2\\
\hline
           & RMSE     &  0.204    & 0.204     & 0.979       & 0.260        & 0.260        & 1.363      &0.967       & 0.959\\
Ex. (ii)   & \# FPs   &  0.480    & 0.480     & 1.500       & 0.350        & 0.350        & 10.820     & 2.470      & 2.400\\
           & \# FNs   &  0.000    & 0.000     & 0.040       & 0.040        & 0.040        & 0.040    & 0.020      & 0.020\\
           & Time  &  34.8     & 40.8      & 76.1        & 76.1         & 77.5         & 82.0       & 1557.6     & 1456.1\\
\hline
           & RMSE     &  9.738    & 1.347     & 7.326       & 17.621       & 3.837        & 1.843      & 7.285      & 8.462\\
Ex. (iii)  & \# FPs   &  0.000    & 0.000     & 0.060       & 0.000        & 0.000        & 0.120      & 0.120      & 0.090\\
           & \# FNs   &  4.640    & 0.000     & 1.440       & 13.360       & 1.450        & 0.000      & 1.800      & 2.780\\
           & Time  &  35.0     & 41.6      & 75.6        & 75.6         & 77.5         & 74.4       & 6304.4     & 4613.8\\
\hline
           & RMSE     &  0.168    & 0.168     & 1.175       & 0.256        & 0.256        & 1.780      & 0.389      & 0.368\\
Ex. (iv)   & \# FPs   &  0.920    & 0.920     & 21.710      & 0.260        & 0.260        & 37.210     & 6.360      & 6.270\\
           & \# FNs   &  0.010    & 0.010     & 0.140       & 0.140        & 0.140        & 0.450      & 0.000      & 0.000\\
           & Time  &  34.5     & 41.1      & 78.7        & 78.7         & 80.8         & 81.4       & 1895.6     & 1937.1\\
\hline
\hline
\end{tabular}
\end{center}
\end{table*}

To compare the performance of all methods, we simulate $200$ synthetic datasets for $(n, p) = (200, 1000)$ and $100$ for $(n, p) = (500, 10000)$ for each example, and record i) the \textbf{root mean squared error (RMSE):} $\|\hat\beta - \beta\|_2$, ii) the \textbf{false negatives (\# FN)}, iii) the \textbf{false positives (\# FP)} and iv) the actual \textbf{runtime} (in milliseconds). We use the extended BIC \citep{chen2008extended} to choose the parameters for any regularized algorithm. Due to the huge computation expense for {\em scad} and {\em mc+}, we only find the first $\lceil\sqrt{p}\rceil$ predictors on the solution path (because we know $s\ll\sqrt{p}$). For {\em RAT} and {\em LAT}, $d$ is set to $0.3\times n$. For {\em RAT} and {\em larsRidge}, we adopt a 10-fold cross-validation procedure to tune the ridge parameter $r$ for a better finite-sample performance, although the theory allows $r$ to be fixed as a constant. For all hard-thresholding steps, we fix $\delta = 0.5$. The results for $(n, p) = (200, 1000)$ are plotted in Figure \ref{fig:1}, \ref{fig:2}, \ref{fig:3} and \ref{fig:4} and a more comprehensive result (average values for {\bf RMSE, \# FPs, \# FNs, runtime}) for $(n, p) = (500, 10000) $ is summarized in Table \ref{tab:2}. 

As can be seen from both the plots and the tables, {\em LAT} and {\em RAT} achieve the smallest RMSE for Example (ii), (iii) and (iv) and are on par with {\em lasLAT} for Example (i). For Example (iii), {\em RAT} and {\em enet} achieve the best performance while all the other methods fail to work. In addition, the runtime of {\em LAT} and {\em RAT} are also competitive compared to that of {\em lasso} and {\em enet}. We thus conclude that {\em LAT} and {\em RAT} achieve similar or even better performance compared to the usual regularized methods.

\subsection{A Student Performance Dataset}
We look at one dataset used for evaluating student achievement in Portuguese schools \citep{cortez2008using}. The data attributes include student grades and school related features that were collected by using school reports and questionnaires. The particular dataset used here provides the students' performance in mathematics. The goal of the research is to predict the final grade based on all the attributes. 

\par
The original data set contains 395 students and 32 raw attributes. The raw attributes are recoded as 40 attributes and form 780 features after interaction terms are added. We then remove features that are constant for all students. This gives 767 features for us to work with. To compare the performance of all methods, we first randomly split the dataset into 10 parts. We use one of the 10 parts as a test set, fit all the methods on the other 9 parts, and then record their prediction error (root mean square error, RMSE), model size and runtime on the test set. We repeat this procedure until each of the 10 parts has been used for testing. The averaged prediction error, model size and runtime are summarized in Table \ref{tab:real}. We also report the performance of the null model which predicts the final grade on the test set using the mean final grade in the training set.

\begin{table*}[!tb]
\centering
\caption{Prediction Error of the Final Grades by Different Methods}
\begin{tabular}{p{2.9cm}|p{3cm}p{3cm}p{3cm}p{3cm}}
\hline
methods     & mean error      & Standard error  & average model size & runtime (millisec)\\
\hline\hline
{\em LAT}   & 1.93        & 0.118             & 6.8                      & 22.3\\
{\em RAT}   & \textbf{1.90}        & 0.131             & 3.5                      & 74.3\\
{\em lasso} & 1.94        & 0.138             & 3.7                    & 60.7\\
{\em lasLAT}& 2.02        & 0.119             & 3.6                     & 55.5\\
{\em lasRAT}& 2.04        & 0.124             & 3.6                      & 71.3\\
{\em enet}  & 1.99        & 0.127             & 4.7                    & 58.5\\
{\em scad}  & 1.92        & 0.142             & 3.5                      & 260.6\\
{\em mc+}   & 1.92        & 0.143             & 3.4                     & 246.0\\
{\em adaptive lasso}& 2.01        & 0.140             & 3.6                     & 65.5\\
{\em null}  & 4.54        & 0.151             & 0                      & ---\\  
\hline\hline
\end{tabular}
\label{tab:real}
\end{table*}

\begin{table*}[!tb]
\centering
\caption{Prediction Error of the Final Grades Excluding Strong Signals}
\begin{tabular}{p{2.9cm}|p{3cm}p{3cm}p{3cm}p{3cm}}
\hline
methods     & mean error      & Standard error  & average model size & runtime (millisec)\\
\hline\hline
{\em LAT}   & 4.50        & 0.141             & 5.3                      & 22.4\\
{\em RAT}   & 4.26        & 0.130             & 4.0                      & 74.0\\
{\em lasso} & 4.27        & 0.151             & 5.0                    & 318.9\\
{\em lasLAT}& 4.25        & 0.131             & 2.9                     & 316.5\\
{\em lasRAT}& 4.28        & 0.127             & 2.8                      & 331.9\\
{\em enet}  & 4.37        & 0.171             & 6.0                    & 265.6\\
{\em scad}  & 4.30        & 0.156             & 4.8                      & 387.5\\
{\em mc+}   & 4.29        & 0.156             & 4.7                     & 340.2\\
{\em adaptive lasso}& \textbf{4.24}        & 0.180             & 4.8                     & 298.0\\
{\em null}  & 4.54        & 0.151             & 0                      & ---\\  
\hline\hline
\end{tabular}
\label{tab:notreal}
\end{table*}

It can be seen that {\em RAT} achieves the smallest cross-validation error, followed by {\em scad} and {\em mc+}. In the post-feature-selection analysis, we found that two features, the 1st and 2nd period grades of a student, were selected by all the methods. This result coincides with the common perception that these two grades usually have high impact on the final grade. 
\par
In addition, we may also be interested in what happens when no strong signals are presented. One way to do this is to remove all the features that are related to the 1st and 2nd grades before applying the aforementioned procedures. The new result without the strong signals removed are summarized in Table \ref{tab:notreal}.

Table \ref{tab:notreal} shows a few interesting findings. First, under this artificial weak signal scenario, {\em adaptive lasso} achieves the smallest cross-validation error and {\em RAT} is the first runner-up. Second, in Stage 1, {\em lasso} seems to provide slightly more robust screening than OLS in that the selected features are less correlated. This might be the reason that {\em LAT} is outperformed by {\em lasLAT}. However, in both the strong and weak signal cases, {\em RAT} is consistently competitive in terms of performance.

\section{Conclusion}
We have proposed two novel algorithms {\em Lat} and {\em Rat} that only rely on least-squares type of fitting and hard thresholding, based on a high-dimensional generalization of OLS. The two methods are simple, easily implementable, and can consistently fit a high dimensional linear model and recover its support. The performance of the two methods are competitive compared to existing regularization methods. It is of great interest to further extend this framework to other models such as generalized linear models and models for survival analysis.

\bibliography{LAT-arXiV}

\newpage

\section*{Appendix 0: Proof of Lemma 1}
Applying the Sherman-Morrison-Woodbury formula
\begin{equation*}
(A+UDV)^{-1} = A^{-1} - A^{-1}U(D^{-1}+VA^{-1}U)^{-1}VA^{-1},
\end{equation*}
we have
\begin{align*}
r(r I_p+X^TX)^{-1} = I_p - X^T(I_n+\frac{1}{r}XX^T)^{-1}X\frac{1}{r} = I_p-X^T(r I_n+XX^T)^{-1}X.
\end{align*}
Multiplying $X^TY$ on both sides, we get
\begin{equation*}
r(r I_p+X^TX)^{-1}X^TY = X^TY-X^T(r I_n+XX^T)^{-1}XX^TY. 
\end{equation*}
The right hand side can be further simplified as
\begin{align*}
\notag X^TY&-X^T(r I_n+XX^T)^{-1}XX^TY \\
\notag &= X^TY-X^T(r I_n+XX^T)^{-1}(r I_n +XX^T - r I_n)Y\\
 &= X^TY-X^TY+r(r I_n+XX^T)^{-1}Y = r X^T(r I_n+XX^T)^{-1}Y. 
\end{align*}
Therefore, we have
\[ (r I_p+X^TX)^{-1}X^TY = X^T(r I_n+XX^T)^{-1}Y.\]

\section*{Appendix A: Proof of Theorem \ref{thm:1}}
Recall the estimator $\hat \beta^{(HD)} = X^T(XX^T)^{-1}Y = X^T(XX^T)^{-1}X\beta + X^T(XX^T)^{-1}\varepsilon = \xi + \eta$. The following three lemmas will be used to bound $\xi$ and $\eta$ respectively.
\begin{lemma}\label{lemma:5}
  Let $\Phi = X^T(XX^T)^{-1}X$. Assume $p>c_0n$ for some $c_0>1$, then for any $C>0$ there exists some $0<c_1<1<c_2$ and $c_3>0$ such that for any $t>0$ and any $i\in Q, j\neq i$,
\begin{align}
  P\bigg(|\Phi_{ii}|<c_1\kappa^{-1}\frac{n}{p})\leq 2e^{-Cn},\quad |\Phi_{ii}|>c_2\kappa\frac{n}{p}\bigg)\leq 2e^{-Cn}
\end{align}
and
\begin{align}
  P\bigg(|\Phi_{ij}|>c_4\kappa t\frac{\sqrt{n}}{p}\bigg) \leq 5e^{-Cn} + 2e^{-t^2/2},
\end{align}
where $c_4 = \frac{\sqrt{c_2(c_0 - c_1)}}{\sqrt{c_3(c_0 - 1)}}$.
\end{lemma}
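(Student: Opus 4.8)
The plan is to use the elliptical representation~\eqref{eq:ep} to turn $\Phi$ into a projection built from a standard Gaussian matrix, then to control the diagonal entries by a sandwiching argument together with the exact Beta law of the squared length of a projected fixed vector, and the off‑diagonal entries by a leave‑one‑column‑out (Sherman--Morrison) identity that exhibits $\Phi_{ij}$, conditionally on the remaining columns, as an essentially Gaussian variable. Writing $x_k = m_k z_k\Sigma^{1/2}$ with $m_k=\sqrt{p}\,L_k/\|z_k\|_2$, so that $X=\mathrm{diag}(m)\,Z\Sigma^{1/2}$ with $Z$ an $n\times p$ matrix of i.i.d.\ standard normals, a direct computation gives
\[
\Phi \;=\; X^\top(XX^\top)^{-1}X \;=\; \Sigma^{1/2}Z^\top\big(Z\Sigma Z^\top\big)^{-1}Z\Sigma^{1/2},
\]
which is free of $m$, hence of $L$ (so the moment conditions on $L$ play no role here), and is exactly the orthogonal projection onto $\Sigma^{1/2}\mathcal R$, $\mathcal R:=\mathrm{rowspace}(Z)$ being a Haar‑distributed $n$‑dimensional subspace of $\mathbb R^p$. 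Put $b_k=\Sigma^{1/2}e_k$ (so $\|b_k\|^2=\Sigma_{kk}\in[\lambda_{\min}(\Sigma),\lambda_{\max}(\Sigma)]$) and $a_k=Zb_k$, the $k$‑th column of $A:=Z\Sigma^{1/2}$; then $\Phi_{k\ell}=a_k^\top(AA^\top)^{-1}a_\ell$.

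\textbf{Diagonal entries.} From $\lambda_{\min}(\Sigma)\,ZZ^\top\preceq Z\Sigma Z^\top\preceq\lambda_{\max}(\Sigma)\,ZZ^\top$ one gets
\[
\frac{\|\Pi_Z b_i\|^2}{\lambda_{\max}(\Sigma)}\;\le\;\Phi_{ii}\;\le\;\frac{\|\Pi_Z b_i\|^2}{\lambda_{\min}(\Sigma)},\qquad \Pi_Z:=Z^\top(ZZ^\top)^{-1}Z .
\]
Since $\Pi_Z$ is the projection onto a Haar $n$‑subspace and $b_i/\|b_i\|$ is a fixed unit vector, $\|\Pi_Z b_i\|^2=\Sigma_{ii}\,B_i$ with $B_i\sim\mathrm{Beta}(n/2,(p-n)/2)$, i.e.\ $B_i\stackrel{d}{=}\chi^2_n/(\chi^2_n+\chi^2_{p-n})$ with independent numerator and denominator. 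Standard $\chi^2$ deviation bounds, using $p>c_0n$ (hence $p-n>(c_0-1)n$), yield for any prescribed $C>0$ constants $0<c_1<1<c_2$ (depending only on $C,c_0$) with $P(B_i<c_1 n/p)\le 2e^{-Cn}$ and $P(B_i>c_2 n/p)\le 2e^{-Cn}$; together with $\lambda_{\min}(\Sigma)\le\Sigma_{ii}\le\lambda_{\max}(\Sigma)$ this is exactly the claimed two‑sided bound on $\Phi_{ii}$.

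\textbf{Off‑diagonal entries.} Delete column $i$ and put $N:=A_{-i}A_{-i}^\top$ (invertible off an $e^{-Cn}$‑event, as $p-1\ge n$). Sherman--Morrison gives $\Phi_{ij}=\dfrac{a_i^\top N^{-1}a_j}{1+a_i^\top N^{-1}a_i}$, with denominator $\ge 1$ and, on the diagonal‑type event applied to $A_{-i}$ (whose rows are i.i.d.\ $N(0,\Sigma_{-i,-i})$, with $\mathrm{cond}(\Sigma_{-i,-i})\le\kappa$ by interlacing), $\le 1+c_2\kappa n/p$. Decompose $a_i=\mu_i+h_i$, where $h_i=Zb_i^{\perp}$ and $b_i^\perp$ is the component of $b_i$ orthogonal to $W_i:=\mathrm{span}\{b_k:k\ne i\}$; this $W_i^\perp$ is one‑dimensional (spanned by $\Sigma^{-1/2}e_i$), so $h_i\sim N(0,\theta_i I_n)$ with $\theta_i=1/(\Sigma^{-1})_{ii}\in[\lambda_{\min}(\Sigma),\lambda_{\max}(\Sigma)]$, and crucially $h_i$ is independent of $A_{-i}$, hence of $N$, $a_j$ and $\mu_i$. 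Therefore, conditionally on $A_{-i}$, the numerator $a_i^\top N^{-1}a_j$ is Gaussian with mean $\mu_i^\top N^{-1}a_j$ and variance $\theta_i\,a_j^\top N^{-2}a_j$, so $\Phi_{ij}$ is conditionally Gaussian up to the near‑constant factor $1/(1+a_i^\top N^{-1}a_i)$. On a high‑probability event I would bound the variance by
\[
\theta_i\,a_j^\top N^{-2}a_j\;\le\;\lambda_{\max}(\Sigma)\,\lambda_{\min}(N)^{-1}\,a_j^\top N^{-1}a_j\;\lesssim\;\frac{c_2}{c_3}\,\kappa^2\,\frac{n}{p^2},
\]
using the diagonal‑type bound $a_j^\top N^{-1}a_j=[\Pi_{A_{-i}}]_{jj}\le c_2\kappa n/p$ and the smallest‑singular‑value bound $\lambda_{\min}(N)\ge c_3\lambda_{\min}(\Sigma)(\sqrt{p}-\sqrt n)^2$ for the $n\times(p-1)$ Gaussian‑type matrix $A_{-i}$ (this is where $c_3$ and the $(c_0-1)$ in $c_4$ enter). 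Hence a $t$‑standard‑deviation fluctuation is at most $c_4\kappa t\sqrt n/p$; combining with a bound on the conditional mean $\mu_i^\top N^{-1}a_j$ on a high‑probability event (reduced, exactly as in the diagonal step, to inner products of projected fixed vectors and to extreme singular values of Gaussian matrices), the conditional Gaussian tail contributes $2e^{-t^2/2}$ and the finitely many ``good'' events (invertibility of $N$, the two Beta concentrations, the spectral bounds) contribute $5e^{-Cn}$.

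\textbf{Main obstacle.} The genuine difficulty is this last step: once $\Sigma\ne I$ the two deleted columns $a_i,a_j$ are dependent (for $\Sigma=I$ they are independent and everything collapses to a textbook Gaussian computation), so one must isolate the part of $a_i$ that is independent of the rest of the design and then show that the residual conditional mean $\mu_i^\top N^{-1}a_j$ does not exceed the fluctuation scale $\kappa\sqrt n/p$ on an overwhelmingly likely event. Equivalently, one has to extract the sharp $\sqrt n/p$ rate rather than the $n/p$ rate that a naive Cauchy--Schwarz bound on $a_i^\top N^{-1}a_j$ would give, the improvement coming precisely from the conditionally mean‑zero Gaussian structure obtained after conditioning on $A_{-i}$.
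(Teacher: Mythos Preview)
The paper does not actually prove this lemma; it simply cites Lemmas~4--5 of Wang (2015, ``High dimensional ordinary least squares'') and Lemma~3 of Wang (2015, ``Consistency''). From the surrounding material in the appendix (the proofs of Lemmas~\ref{lemma:tail} and~\ref{lemma:unimportant}) one can infer that those references work through the matrix angular central Gaussian (MACG) law of $H=X^\top(XX^\top)^{-1/2}$ and the Haar distribution of the right singular vectors in the SVD $Z=VDU^\top$, obtaining the off–diagonal bound via a ``uniform on the sphere'' representation and orthogonal invariance, rather than via a leave‑one‑column‑out argument. Your approach is therefore genuinely different from the one the paper defers to.

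Your diagonal argument is correct: the sandwich $\lambda_{\min}(\Sigma)ZZ^\top\preceq Z\Sigma Z^\top\preceq\lambda_{\max}(\Sigma)ZZ^\top$ combined with the Beta law of $\|\Pi_Z b_i\|^2/\Sigma_{ii}$ gives exactly the stated two‑sided bound on $\Phi_{ii}$.

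The off‑diagonal part, however, has a real gap precisely where you flag it. After Sherman--Morrison you need $|\mu_i^\top N^{-1}a_j|\lesssim \kappa\sqrt n/p$ on a $1-O(e^{-Cn})$ event, and your sketch does not deliver this. Writing $\mu_i=A_{-i}\gamma$ with $\gamma=\Sigma_{-i,-i}^{-1}\Sigma_{-i,i}$, one has $\mu_i^\top N^{-1}a_j=\gamma^\top\tilde P e_j'$, where $\tilde P=A_{-i}^\top(A_{-i}A_{-i}^\top)^{-1}A_{-i}$ is the $(p-1)$‑dimensional analogue of $\Phi$. Two natural attempts fail: (i) Cauchy--Schwarz gives $|\gamma^\top\tilde P e_j'|\le\|\gamma\|\,\|\tilde P e_j'\|_2\lesssim\kappa\sqrt{\kappa n/p}$, which is the rate $\sqrt{n/p}$, not $\sqrt n/p$; (ii) the rotation trick you allude to (``reduce to inner products of projected fixed vectors'') sends $\gamma^\top\tilde P e_j'$ to $\Sigma_{ij}\cdot(\text{diagonal term of order }n/p)+\|\gamma_{\perp}\|\cdot(\text{off‑diagonal term for a rotated design})$, so it either produces an $n/p$ contribution when $\Sigma_{ij}\neq 0$ or requires the very off‑diagonal bound you are trying to prove, making the argument circular. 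In short, the step ``show the residual conditional mean does not exceed the fluctuation scale'' is the whole difficulty, and nothing in your outline shows how to get it without already having the lemma (or an equivalent delocalization statement) for the reduced design. The MACG/orthogonal‑invariance route in the cited references avoids this leave‑one‑out conditioning entirely, which is why the paper can quote the result without facing your obstacle.
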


The proof can be found in the Lemma 4 and 5 in \cite{wang2015high} for elliptical distributions. The special case of Gaussian is also proved in the Lemma 3 of \cite{wang2015consistency}. Notice that the eigenvalue assumption in \cite{wang2015high} is not used for proving Lemma 4 and 5.

\begin{lemma} \label{lemma:tail}
Assume $x_i$ follows $EN(L, \Sigma)$. If $E[L^{-2}] < M_1$ for some constant $M_1 > 0$, $var(\epsilon) = \sigma^2$ and $\log p = o(n)$, then for any $0 < \alpha < 1$ we have
\begin{align*}
P\bigg(\|\eta\|_\infty\leq \frac{c_1\kappa^{-1}\tau^*}{6}\frac{n}{p}\bigg) \geq 1 - O\bigg(\frac{\sigma^2\kappa^4\log p}{\tau^{*2} n^{1-\alpha}}\bigg),
\end{align*}
where $\tau^*$ is defined as the minimum value for the important signals and $\kappa = cond(\Sigma)$.
\end{lemma}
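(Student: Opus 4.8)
The plan is to control $\|\eta\|_\infty$ for $\eta = X^T(XX^T)^{-1}\varepsilon$ by writing $\eta_j = a_j^\top\varepsilon$ with $a_j = (XX^T)^{-1}X_{\cdot j}$ (here $X_{\cdot j}$ is the $j$-th column of $X$). The obstruction is the heavy tail of $\varepsilon$: a plain Chebyshev-plus-union bound gives only $P(\|\eta\|_\infty>t)\le \sigma^2\,\mathrm{tr}((XX^T)^{-1})/t^2 \asymp \sigma^2\kappa^{3}p/(\tau^{*2}n)$, which does not even tend to zero. So the central device is to truncate the noise. Write $\varepsilon = \varepsilon^{\mathrm{tr}}+\varepsilon^{\mathrm{out}}$ with $\varepsilon^{\mathrm{tr}}_i = \varepsilon_i\mathbf{1}\{|\varepsilon_i|\le M\}$ at a level $M$ that will be chosen as a suitable power of $n$ (roughly $\sigma n^{1-\alpha/2}$), and further split $\varepsilon^{\mathrm{tr}} = \bar\varepsilon^{\mathrm{tr}}+\mu$ into its centered part (bounded by $2M$ coordinatewise) and its mean $\mu$, where $\|\mu\|_\infty\le \sigma^2/M$ since $E[\varepsilon]=0$. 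Correspondingly $\eta = X^T(XX^T)^{-1}\bar\varepsilon^{\mathrm{tr}} + X^T(XX^T)^{-1}\mu + X^T(XX^T)^{-1}\varepsilon^{\mathrm{out}}$, and I would bound the three pieces separately against the target threshold $t = \tfrac{c_1\kappa^{-1}\tau^*}{6}\tfrac{n}{p}$.

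For the bounded centered piece I condition on $X$ and apply Bernstein's inequality to each $a_j^\top\bar\varepsilon^{\mathrm{tr}}$ (independent summands bounded by $2M|a_{ji}|$, variance proxy $\sigma^2\|a_j\|_2^2$), then union-bound over $j=1,\dots,p$, obtaining $\max_j|a_j^\top\bar\varepsilon^{\mathrm{tr}}|\lesssim \sigma(\max_j\|a_j\|_2)\sqrt{\log p} + M(\max_j\|a_j\|_\infty)\log p$ off an event of probability $\delta_0$. The mean piece is deterministic given $X$: $|a_j^\top\mu|\le\|a_j\|_1\|\mu\|_\infty\le\sqrt n(\max_j\|a_j\|_2)\sigma^2/M$. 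The outlier piece is handled by size: $|a_j^\top\varepsilon^{\mathrm{out}}|\le(\max_j\|a_j\|_\infty)\|\varepsilon^{\mathrm{out}}\|_1$, and since $E\|\varepsilon^{\mathrm{out}}\|_1\le n\sigma^2/M$, Markov gives $\|\varepsilon^{\mathrm{out}}\|_1\le n\sigma^2/(M\delta_0)$ with probability $1-\delta_0$. Everything thus reduces to good high-probability upper bounds on $\max_j\|a_j\|_2$ and $\max_j\|a_j\|_\infty$, after which one balances $M$ and sets $\delta_0\asymp \sigma^2\kappa^4\log p/(\tau^{*2}n^{1-\alpha})$.

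To bound $\max_j\|a_j\|_2$ I use the elliptical representation \eqref{eq:ep}, i.e. $X = \Lambda Z\Sigma^{1/2}$ with $\Lambda = \mathrm{diag}(\sqrt p L_i/\|z_i\|)$ and $Z$ an $n\times p$ standard Gaussian; then $a_j = \Lambda^{-1}(Z\Sigma Z^T)^{-1}g_j$ with $g_j = Z(\Sigma^{1/2})_{\cdot j}\sim N(0,\Sigma_{jj}I_n)$, so $\|a_j\|_2^2 = g_j^\top M_{Z,L}\,g_j$ is a Gaussian quadratic form in $M_{Z,L}=(Z\Sigma Z^T)^{-1}\Lambda^{-2}(Z\Sigma Z^T)^{-1}$. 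On the high-probability design events $\{\lambda_{\min}(ZZ^T)\ge c_5p\}$ and $\{\|z_i\|^2\le 2p\ \forall i\}$ (standard Gaussian matrix and $\chi^2$ tails, failure $\le e^{-cn}$), together with the Markov events $\{\sum_i L_i^{-2}\le nM_1/\delta_0\}$ and $\{\max_i L_i^{-2}\le nM_1/\delta_0\}$ — which use \emph{only} $E[L^{-2}]\le M_1$ — one controls $\mathrm{tr}(M_{Z,L})$, $\|M_{Z,L}\|_F$ and $\|M_{Z,L}\|_{op}$; a Hanson--Wright bound plus a union bound over $j$ then yields $\max_j\|a_j\|_2^2\lesssim \kappa^2 nM_1\log p/(p^2\delta_0)$, and a similar (cruder) bound for $\max_j\|a_j\|_\infty$ (for which the bound $\|a_j\|_2^2\le\Phi_{jj}/\lambda_{\min}(XX^T)$ together with $\Phi_{jj}\lesssim\kappa n/p$ from Lemma~\ref{lemma:5} already suffices). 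The subtlety is that $g_j$ is not independent of $M_{Z,L}$; I would remove this by the leave-one-column-out decomposition $Z = g_j(\Sigma^{1/2})_{\cdot j}^\top/\Sigma_{jj} + Z^{(j)}$ with $Z^{(j)}\perp g_j$, using Sherman--Morrison to pull $g_j$ out of $(Z\Sigma Z^T)^{-1}$ and the fact that dropping one of $p$ coordinates changes each $\|z_i\|^2$ only negligibly. Assembling the three pieces, choosing $M$ in the window dictated by the Bernstein-linear term and the $\mu$-term (non-empty for suitable $\alpha\in(0,1)$), one obtains $\|\eta\|_\infty\le \tfrac{c_1\kappa^{-1}\tau^*}{6}\tfrac{n}{p}$ off an event of probability $O(\sigma^2\kappa^4\log p/(\tau^{*2}n^{1-\alpha}))$.

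The step I expect to be the real obstacle is precisely this last one: obtaining a bound on $\max_j\|a_j\|_2$ that is simultaneously (i) sharp enough — essentially $\sqrt{\log p}$ times the ``typical'' size $\sqrt{\mathrm{tr}((XX^T)^{-1})/p}$, rather than the much cruder $\sqrt{\Phi_{jj}/\lambda_{\min}(XX^T)}$ — and (ii) valid with only the weak probability guarantee that $E[L^{-2}]\le M_1$ affords (which is exactly what forces a polynomial $n^{-(1-\alpha)}$ rate instead of an exponential one), all while decoupling the direction $g_j$ from the rest of the design. The remainder is bookkeeping: tracking constants and the powers of $n$, $\kappa$ and $\log p$ so that the four error contributions each land below the stated threshold.
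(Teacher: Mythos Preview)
Your route is workable in principle but differs substantially from the paper's, and in fact the paper's argument bypasses exactly what you flag as ``the real obstacle.'' The paper does \emph{not} truncate $\varepsilon$ at all. Instead it writes $\eta = p^{-1}A\,\bar L^{-1}\varepsilon$ with $A := pX^T(XX^T)^{-1}\bar L$, and the crucial observation is that after substituting $X=\bar L Z\Sigma^{1/2}$ the two factors of $\bar L$ cancel: $A = p\Sigma^{1/2}Z^T(Z\Sigma Z^T)^{-1}$ depends only on the Gaussian core $Z$ and on $\Sigma$, not on $L$. Via the SVD $Z=VDU^T$ one sees that each row $e_i^TA$ is, conditionally on $(U,D)$, uniform on a sphere of radius $\sqrt{v_i}$ with $v_i$ controlled by $\kappa$ and $\lambda_{\min}(D^2/p)$; this yields $\max_{i,j}|A_{ij}|\lesssim\kappa t$ with probability $1-O(np\,e^{-t^2/2})$. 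The paper then writes $\eta=\sum_{j=1}^n W_j$ with $W_j = p^{-1}A_{\cdot j}\cdot(\|z_j\|/\sqrt p)\,L_j^{-1}\varepsilon_j$ and invokes the Nemirovski moment inequality (Proposition~\ref{prop:lounici}), $E\|\sum_j W_j\|_\infty^2\le \tilde C\log p\sum_j E\|W_j\|_\infty^2$, followed by a single Markov step. Because $A$ is $L$-free and each $L_j^{-1}$ enters one summand at a time, the hypothesis $E[L^{-2}]\le M_1$ is exactly enough, and the polynomial rate drops out directly.

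Your truncation-plus-Bernstein scheme requires uniform control of $\|a_j\|_2^2=g_j^TM_{Z,L}g_j$ and $\|a_j\|_\infty$, and here $\bar L^{-2}$ sits \emph{inside} $M_{Z,L}$, forcing either Markov on $\max_i L_i^{-2}$ (which costs an extra $n$) or the trace/operator-norm split plus the leave-one-out decoupling you describe. This can likely be pushed through---indeed your $a_j=\bar L^{-1}(Z\Sigma Z^T)^{-1}g_j$ is exactly $p^{-1}\bar L^{-1}A^T e_j$, so the paper's $\max_{i,j}|A_{ij}|$ bound would feed your $\|a_j\|_\infty$ estimate---but it is a longer path with more moving parts (truncation level $M$, three error pieces, Hanson--Wright, Sherman--Morrison). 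The paper's factorization removes the $\bar L$-coupling in one stroke, and the Nemirovski inequality removes the need for truncation altogether; that combination is what makes the second-moment-only assumptions on both $\varepsilon$ and $L$ sufficient without further effort.
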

To prove Lemma \ref{lemma:tail} we need the following two propositions.

\begin{Prop} (Lounici, 2008 \cite{lounici2008sup}; Nemirovski, 2000 \cite{akritastopics}) \label{prop:lounici}
Let $Y_i\in \mathcal{R}^p $ be random vectors with zero means and finite variances. Then we have for any $k$ norm with $k\in [2, \infty]$ and $p \geq 3$, we have
\begin{align}
E\big\|\sum_{i = 1}^n Y_i\big\|_k^2 \leq \tilde C\min\{k, \log p\}\sum_{i = 1}^n E\|Y_i\|_k^2,
\end{align}
where $\tilde C$ is some absolute constant.
\end{Prop}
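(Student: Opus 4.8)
The plan is to prove Proposition~\ref{prop:lounici} by the classical Nemirovski route: reduce the vector-valued moment bound to a deterministic \emph{2-uniform smoothness} inequality for the $\ell_k$ norm, and then run a telescoping (martingale-type) argument over the partial sums in which the cross terms vanish because the $Y_i$ are independent and centered. I would treat finite $k$ first, and then recover the range $k \in (\log p, \infty]$, including $k = \infty$, by a norm-equivalence trick. Throughout I take the $Y_i$ to be independent, which is the setting intended by the statement (centered, finite variance, summed).

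The technical core is the smoothness lemma. Writing $\Phi(u) = \tfrac12\|u\|_k^2$, I would establish that for $2 \le k < \infty$ and all $u,v\in\mathbb{R}^p$,
\[
\|u+v\|_k^2 \;\le\; \|u\|_k^2 \;+\; 2\,\langle D(u),\, v\rangle \;+\; (k-1)\,\|v\|_k^2,
\]
where $D(u) = \nabla\Phi(u)$ has coordinates $D(u)_j = \|u\|_k^{2-k}\,\mathrm{sgn}(u_j)\,|u_j|^{k-1}$. This is the statement that $\ell_k^p$ is $(2,k-1)$-smooth; it can be obtained either from the known $2$-uniform smoothness constant $\sqrt{k-1}$ of $L_k$ for $k\ge 2$, or by a direct Hessian/Taylor estimate on $\Phi$. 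I expect this step to be the main obstacle, since controlling the quadratic remainder uniformly with the sharp order-$k$ constant requires the careful convexity analysis of $\|\cdot\|_k^2$ rather than any soft argument.

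Granting the lemma, set $S_m = \sum_{i=1}^m Y_i$ and apply it with $u = S_{m-1}$, $v = Y_m$ to get
\[
\|S_m\|_k^2 \;\le\; \|S_{m-1}\|_k^2 \;+\; 2\,\langle D(S_{m-1}),\, Y_m\rangle \;+\; (k-1)\,\|Y_m\|_k^2 .
\]
Taking expectations, the cross term drops out: $D(S_{m-1})$ is a function of $Y_1,\dots,Y_{m-1}$, so independence and $E[Y_m]=0$ give $E\langle D(S_{m-1}), Y_m\rangle = E[D(S_{m-1})]^\top E[Y_m] = 0$. Iterating from $m=n$ down to $m=1$ telescopes the leading terms and yields $E\|S_n\|_k^2 \le (k-1)\sum_{i=1}^n E\|Y_i\|_k^2$, which is the asserted bound with constant of order $k$ for finite $k \le \log p$.

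Finally I would handle large $k$, and in particular $k=\infty$, by the elementary equivalence $\|x\|_\infty \le \|x\|_q \le p^{1/q}\|x\|_\infty$ for $x\in\mathbb{R}^p$. Choosing $q = \lceil \log p\rceil$ makes $p^{1/q} \le e$, so $\|\cdot\|_q$ and $\|\cdot\|_\infty$ agree up to the absolute constant $e$ (using $p\ge 3$ so that $q\ge 2$). Applying the finite-$k$ result at this $q$, whose smoothness constant is of order $\log p$, and paying an $e^2$ factor when passing between the $q$- and $\infty$-norms on both sides, gives $E\|\sum_i Y_i\|_k^2 \le \tilde C\,\log p\,\sum_i E\|Y_i\|_k^2$ for every $k \ge \log p$. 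Combining the two regimes produces the uniform factor $\min\{k,\log p\}$ with an absolute constant $\tilde C$, completing the proof.
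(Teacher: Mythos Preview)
The paper does not supply its own proof of this proposition: it is stated with attribution to Lounici (2008) and Nemirovski (2000) and then used as a black box in Lemmas~\ref{lemma:tail} and~\ref{lemma:eta}. Your proposal correctly reconstructs the standard Nemirovski argument---2-uniform smoothness of $\ell_k$ with constant $k-1$, a telescoping sum in which the cross terms vanish by independence and centering, and the norm-equivalence trick at $q\approx\log p$ to cover $k\in(\log p,\infty]$---so there is nothing to compare against, and the sketch is sound.

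One minor remark: the proposition as printed omits the independence hypothesis, and you rightly supply it. The argument actually goes through for martingale differences (the cross term vanishes under $E[Y_m\mid Y_1,\dots,Y_{m-1}]=0$), but independence is what the paper uses downstream and what the cited sources state.
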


As each row of $X$ can be represented as $X = \bar{L} Z\Sigma^{1/2}$, where $\bar{L} = diag(\sqrt{p}L_1/\|z_1\|_2, \cdots, \sqrt{p}L_n/\|z_n\|_2)$ and $Z$ is a matrix of independent Gaussian entries, i.e., $Z\sim N(0, I_p)$. For $Z$, we have the following result.
\begin{Prop}\label{lemma:Z}
Let $Z\sim N(0, I_p)$, then we have the minimum eigenvalue of $ZZ^T/p$ satisfies that
\begin{align*}
P\bigg(\lambda_{min}(ZZ^T/p) > (1 - \frac{n}{p} - \frac{t}{p})^2\bigg) \geq 1 - 2\exp(-t^2/2) 
\end{align*}
for any $t > 0$. Assume $p > c_0n$ for $c_0 > 1$ and take $t = \sqrt{n}$. When $n > 4c_0^2/(c_0 - 1)^2$, we have
\begin{align}
P\bigg(\lambda_{min}(ZZ^T/p) > c\bigg) \geq 1 - 2\exp(-n/2),\label{eq:half}
\end{align}
where $c = \frac{(c_0 - 1)^2}{4c_0^2}$.
\end{Prop}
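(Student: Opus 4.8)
The statement is a standard non-asymptotic lower bound on the smallest singular value of the $n\times p$ Gaussian matrix $Z$, recast through $ZZ^T$. Since $p\ge n$, the $n$ eigenvalues of $ZZ^T$ are exactly the squares of the $n$ singular values of $Z$, so $\lambda_{\min}(ZZ^T)=s_{\min}(Z)^2$, and it suffices to prove $s_{\min}(Z)\ge\sqrt{p}-\sqrt{n}-t$ with probability at least $1-2e^{-t^2/2}$; dividing by $\sqrt{p}$ and squaring then gives the first displayed inequality (in the form with $\sqrt{n/p}$ and $t/\sqrt{p}$, which is what this argument produces). I would establish the singular-value estimate by the classical two-step argument. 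First, the mean bound $\mathbb{E}\,s_{\min}(Z)\ge\sqrt{p}-\sqrt{n}$: this is the Davidson--Szarek estimate, which follows from Gordon's Gaussian min--max comparison inequality applied to the variational formula $s_{\min}(Z)=\min_{\|v\|_2=1}\max_{\|u\|_2=1}v^T Z u$ (with $v\in\mathbb{R}^n$, $u\in\mathbb{R}^p$), comparing $Z$ with the decoupled Gaussian process $(u,v)\mapsto g^T u+h^T v$ for independent $g\sim N(0,I_p)$, $h\sim N(0,I_n)$. Second, a concentration step: the map $Z\mapsto s_{\min}(Z)$ is $1$-Lipschitz in the Frobenius norm, since $|s_{\min}(A)-s_{\min}(B)|\le\|A-B\|\le\|A-B\|_F$ for the operator norm $\|\cdot\|$, so the Gaussian concentration (Borell--TIS) inequality on $\mathbb{R}^{n\times p}$ gives $P\big(s_{\min}(Z)\le\mathbb{E}\,s_{\min}(Z)-t\big)\le e^{-t^2/2}$. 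Chaining the two steps yields $P\big(s_{\min}(Z)\le\sqrt{p}-\sqrt{n}-t\big)\le e^{-t^2/2}$; the extra factor $2$ in the stated probability is slack, and also matches the standard two-sided form of this estimate, which simultaneously controls $s_{\max}(Z)$.

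For \eqref{eq:half} I would simply specialize $t=\sqrt{n}$, so the failure probability becomes $2e^{-n/2}$, and then invoke $p>c_0 n$ together with $n>4c_0^2/(c_0-1)^2$ to verify, by elementary algebra, that the resulting lower bound exceeds $c=(c_0-1)^2/(4c_0^2)$. The lower bound on $n$ is there precisely to keep the relevant linear expression in $\sqrt{n/p}$ positive and bounded away from zero in the delicate regime where $p/n$ is close to $1$.

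I do not anticipate a substantive obstacle: the proposition is a textbook bound on the smallest singular value of a rectangular Gaussian matrix, and essentially all of it can be quoted. The one place that genuinely needs care --- and the only thing I would call the ``hard part'' --- is the constant bookkeeping: tracking the Lipschitz constant and the Davidson--Szarek mean bound so that they combine into exactly the stated form, and then verifying that the crude $t=\sqrt{n}$ specialization really does deliver \eqref{eq:half} for every $c_0>1$ under the stated hypothesis on $n$. If one prefers to sidestep that, \eqref{eq:half} can alternatively be read off from a non-asymptotic Marchenko--Pastur-type bound on the left spectral edge of $ZZ^T/p$.
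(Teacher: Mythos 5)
Your route to the first display is exactly the paper's: the paper simply cites Corollary~5.35 of Vershynin, which is the Davidson--Szarek mean bound via Gordon's comparison inequality plus Gaussian concentration, i.e.\ precisely the two-step argument you describe. You are also right that what this argument yields is $s_{\min}(Z)\ge\sqrt p-\sqrt n-t$, hence $\lambda_{\min}(ZZ^T/p)\ge(1-\sqrt{n/p}-t/\sqrt p)^2$; the display as printed, with $n/p$ and $t/p$ in place of $\sqrt{n/p}$ and $t/\sqrt p$, is a strictly stronger claim that is in fact false for small $t$ (by Marchenko--Pastur the left spectral edge of $ZZ^T/p$ sits near $(1-\sqrt{n/p})^2$, which lies below $(1-n/p)^2$).

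The gap is in the step you yourself flagged as the ``hard part'' and then waved through: with the corrected first display, the specialization $t=\sqrt n$ does \emph{not} deliver \eqref{eq:half} for every $c_0>1$. You obtain $\lambda_{\min}(ZZ^T/p)\ge(1-2\sqrt{n/p})^2$ only when $1-2\sqrt{n/p}>0$, which under the sole hypothesis $p>c_0n$ forces $c_0>4$; and even then the resulting bound $(1-2/\sqrt{c_0})^2$ falls below the claimed $c=(c_0-1)^2/(4c_0^2)$ unless $16c_0\le(c_0+1)^2$, i.e.\ $c_0\ge 7+4\sqrt3\approx 13.9$ (try $c_0=9$: you get $1/9$, but $c=64/324>1/9$). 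The hypothesis $n>4c_0^2/(c_0-1)^2$ does not rescue this --- it is tailored to the typo'd display, where the arithmetic $1-n/p-\sqrt n/p>(c_0-1)/c_0-(c_0-1)/(2c_0^2)\ge(c_0-1)/(2c_0)$ does go through and produces exactly $c=(c_0-1)^2/(4c_0^2)$. To prove \eqref{eq:half} honestly for all $c_0>1$ you must take $t$ proportional to $\sqrt p$ rather than $\sqrt n$, e.g.\ $t=\tfrac12(1-1/\sqrt{c_0})\sqrt p$, which gives $\lambda_{\min}(ZZ^T/p)\ge\tfrac14(1-1/\sqrt{c_0})^2$ with probability at least $1-2\exp\bigl(-(1-1/\sqrt{c_0})^2p/8\bigr)$; the constant then becomes $(\sqrt{c_0}-1)^2/(4c_0)$ rather than the one stated (and the tail is actually better). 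So either the constant in \eqref{eq:half} must be adjusted, or the range of $c_0$ restricted; the ``elementary algebra'' you defer to does not close the statement as written.
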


The proof follows Corollary 5.35 in \cite{vershynin2010introduction}.

\begin{proof}[\textbf{Proof of Lemma \ref{lemma:tail}}]
Let $A = pX^T(XX^T)^{-1} \bar{L}$ and $Z = \bar{L}^{-1}X\Sigma^{-1/2}$. Then $\eta = p^{-1}A\bar{L}^{-1}\epsilon$.
\paragraph{Part 1. Bounding $|A_{ij}|$.} Consider the standard SVD on $Z$ as $Z = VDU^T$, where $V$ and $D$ are $n \times n$ matrices and $U$ is a $p\times n$ matrix. Because $Z$ is a matrix of iid Gaussian variables, its distribution is invariant under both left and right orthogonal transformation. In particular, for any $T\in\mathcal{O}(n)$, we have 
\begin{align*}
TVDU^T \stackrel{(d)}{=} VDU^T,
\end{align*}
i.e., $V$ is uniformly distributed on $\mathcal{O}(n)$ conditional on $U$ and $D$ (they are in fact independent, but we don't need such a strong condition).
Therefore, we have
\begin{align*}
A &= pX^T(XX^T)^{-1}L = p\Sigma^{\frac{1}{2}}Z^TL(LZ\Sigma Z^TL)^{-1}L = p\Sigma^{\frac{1}{2}}UDV^TL(LVDU^T\Sigma UDV^TL)^{-1}L\\
& = p\Sigma^{\frac{1}{2}}U(U^T\Sigma U)^{-1}D^{-1}V^T = \sqrt{p}\Sigma^{\frac{1}{2}}U(U^T\Sigma U)^{-1}\big(\frac{D}{\sqrt{p}}\big)^{-1}V^T.
\end{align*}
Because $V$ is uniformly distributed conditional on $U$ and $D$, the distribution of $A$ is also invariant under right orthogonal transformation conditional on $U$ and $D$, i.e., for any $T\in \mathcal{O}(n)$, we have
\begin{align}
A \stackrel{(d)}{=} AT. \label{eq:A}
\end{align}
Our first goal is to bound the magnitude of individual entries $A_{ij}$. Let $v_i = e_i^TAA^Te_i$, which is a function of $U$ and $D$ (see below). From \eqref{eq:A}, we know that $e_i^TA$ is uniformly distributed on the sphere $S^{n-1}(\sqrt{v_i})$ if conditional on $v_i$ (i.e., conditional on $U, D$), which implies that
\begin{align}
e_i^TA \stackrel{(d)}{=} \sqrt{v_i}\bigg(\frac{x_1}{\sqrt{\sum_{j=1}^n x_j^2}}, \frac{x_2}{\sqrt{\sum_{j=1}^n x_j^2}}, \cdots, \frac{x_n}{\sqrt{\sum_{j=1}^n x_j^2}}\bigg),\label{eq:A1}
\end{align}
where $x_j's$ are iid standard Gaussian variables. Thus, $A_{ij}$ can be bounded easily if we can bound $v_i$. Notice that for $v_i$ we have
\begin{align*}
v_i &= e_i^TAA^Te_i = p e_i^T\Sigma^{\frac{1}{2}}U(U^T\Sigma U)^{-1}\big(\frac{D^2}{p}\big)^{-1}(U^T\Sigma U)^{-1}U^T\Sigma^{\frac{1}{2}}e_i.\\
&= p e_i^TH (U^T\Sigma U)^{-\frac{1}{2}}\big(\frac{D^2}{p}\big)^{-1}(U^T\Sigma U)^{-\frac{1}{2}} H^Te_i\\
&\leq p e_i^THH^Te_i\cdot \lambda_{min}^{-1}(U^T\Sigma U)\cdot \lambda_{min}^{-1}\big(\frac{D^2}{p}\big)
\end{align*}
Here $H = \Sigma^{\frac{1}{2}}U(U^T\Sigma U)^{-1/2}$ is defined the same as in \cite{wang2015high} and can be bounded as $e_i^THH^Te_i \leq c_2n\kappa/p$ with probability $1 - 2\exp(-Cn)$ (see the proof of Lemma 3 in \cite{wang2015consistency}). Therefore, we have
\begin{align*}
P\bigg(v_i\leq c_2\kappa^2 \lambda_{min}^{-1}\big(\frac{D^2}{p}\big) n\bigg) \geq 1 - 2\exp(-Cn)
\end{align*}
Now applying the tail bound and the concentration inequality to \eqref{eq:A1} we have for any $t>0$ and any $C>0$
\begin{align}
P(|x_j|>t)\leq 2\exp(-t^2/2)\qquad P\bigg(\frac{\sum_{j=1}^n x_j^2}{n} \leq c_3\bigg) \leq \exp(-Cn). \label{eq:A_ineq}
\end{align}
Putting the pieces all together, we have for any $t>0$ and any $C>0$ that
\begin{align*}
P\bigg(\max_{ij} |A_{ij}| \leq \kappa t\sqrt{\frac{c_2}{c_3}}\lambda_{min}^{-\frac{1}{2}}\big(\frac{D^2}{p}\big)\bigg) \geq 1 - 2np\exp(-t^2/2) - 3p\exp(-Cn).
\end{align*}
Now according to \eqref{eq:half}, we can further bound $\lambda_{min}(D^2/p)$ and obtain that
\begin{align}
P\bigg(\max_{ij} |A_{ij}| \leq \sqrt{\frac{c_2}{cc_3}}\kappa t\bigg) \geq 1 - 2np\exp(-t^2/2) - 3p\exp(-Cn) - 2\exp(-n/2).\label{eq:Aij}
\end{align}

\paragraph{Part 2. Bounding $\eta$}
he second step is to use \eqref{eq:Aij} and Proposition \ref{prop:lounici} to bound $\eta$. The procedure follows similarly as in Lounici's paper. We first note that $\|z_i\|_2^2$ follows a chi-square distribution $\mathcal{X}^2(p)$. We have for any $t$
\begin{align*}
P\bigg(\frac{\|z_i\|_2^2}{p} \geq 1 + 2\sqrt{\frac{t}{p}} + \frac{2t}{p}\bigg)\leq e^{-t},
\end{align*}
from which we know
\begin{align}
P\bigg(\max_{i} p^{-1}\|z_i\|_2^2 < 5/2\bigg) \geq 1 - p e^{-p/4}. \label{eq:ep}
\end{align}
\par
Now define $W_j = (A_{1j}p^{-1/2}\|z_j\|_2L_j^{-1}\epsilon_j, A_{2j}p^{-1/2}\|z_j\|_2L_j^{-1}\epsilon_j,\cdots, A_{pj}p^{-1/2}\|z_j\|_2L_j^{-1}\epsilon_j$). It's clear that $\eta = \sum_{j=1}^n W_j/p$. Applying Proposition \ref{prop:lounici} to $W_j's$ with the $l_\infty$ norm and noticing tht $L_j$ is independent of $z_j$ we have
\begin{align*}
E\big\|\sum_{j = 1}^n W_j\big\|_\infty^2 \leq \log p\sum_{j = 1}^n E\|W_j\|_\infty^2\leq \log p \frac{7c_2}{cc_3}\sigma^2\kappa^2t^2 \sum_{j = 1}^n E[L_j^{-2}] \leq \frac{c_2}{cc_3}\sigma^2\kappa^2t^2M_1^2n\log p.
\end{align*}
Using the Markov inequality on $\eta$, we have for any $r > 0$
\begin{align*}
P\bigg(\|\eta\|_\infty\geq \frac{\sqrt{n}r}{p}\bigg) &= P\bigg(\frac{p}{\sqrt{n}}\|\eta\|_\infty\geq r\bigg)\leq \frac{p^2E\|\eta\|_\infty^2}{nr^2} =  \frac{E\|\sum_{j = 1}^n W_j\|_\infty^2}{nr^2}\\
&\leq \frac{7c_2\sigma^2\kappa^2 M_1^2 t^2 \log p}{cc_3r^2}.
\end{align*}
To match our previous result, we take $r = c_1\sqrt{n}\tau^*\kappa^{-1}/6$ and $t = n^{(1 - \alpha)/2}$ for some small $\alpha$,
\begin{align*}
P\bigg(\|\eta\|_\infty\leq \frac{c_1\kappa^{-1}\tau^*}{6}\frac{n}{p}\bigg)&\geq 1 - \frac{342 c_2\sigma^2\kappa^4 M_1}{c_1^2cc_3\tau^{*2}}\frac{\log p}{n^{\alpha}} - 2np\exp(-n^{1 - \alpha}/2) - 3p\exp(-Cn) - 2\exp(-n/2)\\
&\geq 1 - O\bigg(\frac{\sigma^2\kappa^4\log p}{\tau^{*2} n^{\alpha}}\bigg).
\end{align*}
\end{proof}

\begin{lemma} \label{lemma:unimportant}
Assume $var(Y) \leq M_0$. Define $\Phi = X^T(XX^T)^{-1}X$. If $p > c_0 n$ for some $c_0 > 1$, then we have for any $t > 0$
\begin{align*}
P\bigg(\max_i \sum_{j\neq i}|\Phi_{ij}\beta_j| \geq c_4\sqrt{M_0}\kappa^{\frac{3}{2}} t\frac{\sqrt{n}}{p}\bigg) \leq 2p e^{-t^2/2} + 5pe^{-Cn}.
\end{align*}
where $c_4, \kappa$ are defined in Lemma \ref{lemma:5}.
\end{lemma}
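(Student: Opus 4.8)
The plan is to control the quantity $\sum_{j\neq i}|\Phi_{ij}\beta_j|$ for a \emph{fixed} $i$ and then union over $i$. The natural route is Cauchy--Schwarz: write
\[
\sum_{j\neq i}|\Phi_{ij}\beta_j| \;\leq\; \Big(\sum_{j\neq i}\Phi_{ij}^2\Big)^{1/2}\Big(\sum_{j\neq i}\beta_j^2\Big)^{1/2},
\]
so that the problem decouples into (a) a bound on the off-diagonal row norm $\sum_{j\neq i}\Phi_{ij}^2 = (\Phi^2)_{ii} - \Phi_{ii}^2$, and (b) the deterministic quantity $\|\beta\|_2^2$, which I would relate to $var(Y)\leq M_0$ via the model $Y = X\beta + \varepsilon$ (so that $\|\beta\|_2 \lesssim \sqrt{M_0}$ up to constants depending on $\lambda_{\min}(\Sigma)$, giving the $\kappa^{1/2}$ factor). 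Here the key algebraic fact is that $\Phi = X^T(XX^T)^{-1}X$ is an orthogonal projection, hence idempotent, so $(\Phi^2)_{ii} = \Phi_{ii}$, giving $\sum_{j\neq i}\Phi_{ij}^2 = \Phi_{ii} - \Phi_{ii}^2 \leq \Phi_{ii}$. By Lemma \ref{lemma:5} we have $\Phi_{ii} \leq c_2\kappa n/p$ with probability at least $1 - 2e^{-Cn}$, so $\big(\sum_{j\neq i}\Phi_{ij}^2\big)^{1/2} \lesssim \sqrt{\kappa n/p}$. Combining with the $\|\beta\|_2$ bound would only give a rate of order $\sqrt{n}/\sqrt{p}$, which is not tight enough to match the $\sqrt{n}/p$ rate claimed (it would be off by $\sqrt{p}$).

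So the Cauchy--Schwarz route as stated is too lossy, and the actual plan must instead exploit the \emph{per-entry} bound $|\Phi_{ij}| \lesssim \kappa t\sqrt{n}/p$ from Lemma \ref{lemma:5} together with cancellation/averaging across $j$. The cleaner approach: condition on $U$ and $D$ in the SVD decomposition $Z = VDU^T$ as in the proof of Lemma \ref{lemma:tail}, so that the $i$-th row of the relevant matrix is uniform on a sphere, and write $\sum_{j\neq i}\Phi_{ij}\beta_j$ as a linear functional of that uniform-on-sphere vector. Such a functional concentrates like a Gaussian with variance proportional to $\|\beta\|_2^2$ times the squared radius, which is $O(\kappa n/p)$ by the $\Phi_{ii}$ bound above — but crucially the \emph{radius} enters once, not its square root, once we track that $\Phi_{ij} = (\text{row}_i)\cdot(\text{something involving }j)$ and the sum over $j$ against $\beta_j$ is itself a bounded-norm vector. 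Concretely, I would follow the template of Lemma 4/5 in \cite{wang2015high} (which is exactly where the $\sqrt{n}/p$ rate for individual $\Phi_{ij}$ comes from): the same decomposition that produces $|\Phi_{ij}| \lesssim \kappa t \sqrt{n}/p$ produces $|\sum_{j\neq i}\Phi_{ij}\beta_j| \lesssim \|\beta\|_2 \kappa t\sqrt{n}/p$ after a sub-Gaussian tail bound on the relevant inner product, with the extra $\kappa^{1/2}$ and $\sqrt{M_0}$ coming from bounding $\|\beta\|_2$. Then a union bound over the $p$ rows contributes the factor $p$ in front of $e^{-t^2/2}$ and the $5pe^{-Cn}$ term absorbs the events on which the eigenvalue/concentration bounds from Lemma \ref{lemma:5} and Proposition \ref{lemma:Z} fail.

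The main obstacle is getting the rate right: a naive Cauchy--Schwarz loses a factor of $\sqrt{p}$, so one genuinely needs the rotational-invariance argument (the row of $A$/$\Phi$ being uniform on a sphere conditional on $U,D$) to see $\sum_{j\neq i}\Phi_{ij}\beta_j$ as a single Gaussian-like linear functional whose fluctuation scales with $\|\beta\|_2 \cdot (\text{radius})$ rather than with $\|\beta\|_2 \cdot \sqrt{\#\text{terms}} \cdot (\text{per-entry size})$. Once that structural observation is in place, the tail bound is the standard spherical/Gaussian concentration already used for \eqref{eq:A1}--\eqref{eq:A_ineq}, the deterministic bound $\|\beta\|_2^2 \lesssim \kappa M_0$ is routine from $var(Y)\leq M_0$ and $\lambda_{\min}(\Sigma)\geq \kappa^{-1}\lambda_{\max}(\Sigma)$-type reasoning, and the union bound over $i\in\{1,\dots,p\}$ finishes it.
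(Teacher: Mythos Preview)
Your proposal is correct and takes essentially the same route as the paper: the paper implements your rotational-invariance idea by constructing, for fixed $i$, an explicit orthogonal matrix $T$ that fixes $e_i$ and rotates $(0,\beta^{(-i)})$ onto $\|\beta^{(-i)}\|_2\, e_2$, then uses that $H=X^T(XX^T)^{-1/2}$ follows a MACG$(\Sigma)$ law so that $e_i^T\Phi\beta^{(-i)} = \|\beta^{(-i)}\|_2\cdot \tilde\Phi_{12}$ for a matrix $\tilde\Phi$ to which Lemma~\ref{lemma:5} applies verbatim; the bound $\|\beta\|_2\leq\sqrt{M_0\kappa}$ from $var(Y)\leq M_0$ and a union bound over $i$ then give the claim. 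One small caveat: your first concrete suggestion (conditioning on $U,D$ from the SVD as in Lemma~\ref{lemma:tail}) yields uniformity of rows of $A$ on spheres in $\mathbb{R}^n$, whereas the sum $\sum_{j\neq i}\Phi_{ij}\beta_j$ runs over the $p$ columns of $\Phi$, so the right invariance is a rotation in $\mathbb{R}^p$ applied to $H$ --- exactly the MACG/Lemma~4--5 template from \cite{wang2015high} that you invoke in your ``concretely'' paragraph.
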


\begin{proof}
Following \cite{wang2015high,wang2015consistency}, we define $H = X^T(XX^T)^{-\frac{1}{2}}$. When $X\sim N(0, \Sigma)$, $H$ follows the $MACG(\Sigma)$ distribution as indicated in Lemma 3 in \cite{wang2015consistency} and Theorem 1 in \cite{wang2015high}. For simplicity, we only consider a particular case where $i = 1$. 
\par
For vector $v$ with $v_1 = 0$, we define $v' = (v_2, v_3, \cdots, v_p)^T$ and we can always identify a $(p - 1)\times (p - 1)$ orthogonal matrix $T'$ such that $T'v' = \|v'\|_2e_1'$ where $e_1'$ is a $(p - 1) \times 1$ unit vector with the first coordinate being 1. Now we define a new orthogonal matrix $T$ as
\begin{align*}
T = \begin{pmatrix}
1 & 0\\
0 & T'
\end{pmatrix}
\end{align*}
and we have
\begin{align*}
Tv = \begin{pmatrix}
1 & 0\\
0 & T'
\end{pmatrix}
\begin{pmatrix}
0 \\
v'
\end{pmatrix}
=
\begin{pmatrix}
0\\
\|v\|_2e_1'
\end{pmatrix}
= \|v\|_2 e_2.
\quad\mbox{and}\quad
e_1^T T^T = e_1^T\begin{pmatrix}
1 & 0\\
0 & T^{'T}
\end{pmatrix}
= e_1^T
\end{align*}
Therefore, we have
\begin{align*}
e_1^THH^Tv = e_1^TT^TTHH^TT^TTv = e_1^TT^THH^TT^Te_2 = \|v\|_2 e_1^T\tilde H\tilde H^Te_2.
\end{align*}
Since $H$ follows $MACG(\Sigma)$, $\tilde H = T^TH$ follows $MACG(T^T\Sigma T)$ for any fixed $T$. Therefore, we can apply Lemma \ref{lemma:5} again to obtain that
\begin{align*}
P\bigg(&|e_1^TX^T(XX^T)^{-1}Xv|\geq \|v\|_2 c_4\kappa t\frac{\sqrt{n}}{p}\bigg) = P\bigg(|e_1^THH^Tv|\geq \|v\|_2 c_4\kappa t\frac{\sqrt{n}}{p}\bigg)\\
& = P\bigg(\|v\|_2|e_1^T\tilde H\tilde H^Te_2|\geq \|v\|_2 c_4\kappa t\frac{\sqrt{n}}{p}\bigg) = P\bigg(\|v\|_2|\Phi_{12}|\geq \|v\|_2 c_4\kappa t\frac{\sqrt{n}}{p}\bigg)\\
&= P\bigg(|\Phi_{12}| \geq c_4\kappa t\frac{\sqrt{n}}{p}\bigg)\leq 5e^{-Cn} + 2e^{-t^2/2}.
\end{align*}
Applying the above result to $v = (0, \beta_*^{(-1)})$ we have
\begin{align*}
\sum_{j\neq 1} |\Phi_{1j}\beta_j| \leq c_4\kappa t \|\beta\|_2 \frac{\sqrt{n}}{p}
\end{align*}
with probability at least $1 - 5e^{-Cn} - 2e^{-t^2/2}$.
\par
In addition, we know that $var(Y) = \beta_*^T\Sigma\beta_* + \sigma^2 \leq M_0$ and thus
\begin{align*}
\|\beta\|_2 \leq \sqrt{M_0\kappa}.
\end{align*}
Consequently, we have
\begin{align*}
P\bigg(\max_i \sum_{j\neq i}|\Phi_{ij}\beta_j| \geq c_4\sqrt{M_0}\kappa^{\frac{3}{2}} t\frac{\sqrt{n}}{p}\bigg) \leq 2p e^{-t^2/2} + 5pe^{-Cn}.
\end{align*}
\end{proof}

Now we are ready to prove Theorem \ref{thm:1}
\begin{proof}[\textbf{Proof of Theorem \ref{thm:1}}]
Recall the definition of $\xi$ as $\xi = X^T(XX^T)^{-1}X\beta$. 
For any $i$ we have
\begin{align*}
\xi_i = e_i^TX^T(XX^T)^{-1}X\beta = \sum_{j\in S} \Phi_{ii}\beta_i +\sum_{j\neq i} \Phi_{ij}\beta_j,
\end{align*}
For the first term, we have
\begin{align*}
|\min_{ii}\beta_i| \geq c_1\kappa^{-1}\tau^*\frac{n}{p}\quad\forall i\in S^*
\end{align*}
with probability $1 - |S^*| e^{-Cn}$ and
\begin{align*}
|\min_{ii}\beta_i| \leq c_1\kappa\tau_*\frac{n}{p}\quad\forall i\in S_*
\end{align*}
with probability $1 - |S_*| e^{-Cn}$. Now, for the second term, using Lemma \ref{lemma:unimportant}, we have
\begin{align*}
\sum_{j\neq i} |\Phi_{ij}\beta_j| \leq \frac{c_1\kappa^{-1}\tau^*}{6}\quad \forall i = 1, 2, \cdots, p
\end{align*}
with probability at least $1 - 2p \exp\{-\frac{c_1^2\kappa^{-1}\tau^{*2}}{72c_4^2M_0}n\} - 5pe^{-Cn}$.
Therefore, we have for any $i\in S^*$
\begin{align*}
|\xi_i| \geq c_1\kappa^{-1}\tau^*\frac{n}{p} - \frac{c_1\kappa^{-1}\tau^*}{6}\frac{n}{p} \geq \frac{5c_1\kappa^{-1}\tau^*}{6}\frac{n}{p}.
\end{align*}
and for $i\in S_*$ we have
\begin{align*}
|\xi_i| \leq c_1\kappa\tau_*\frac{n}{p} + \frac{c_1\kappa^{-1}\tau^*}{6}\frac{n}{p} \leq \frac{7c_1\kappa^{-1}\tau^*}{12}\frac{n}{p},
\end{align*}
where we use the assumption that $\tau^* > 4\kappa^2\tau_*$. Now combining the result from Lemma \ref{lemma:tail}, we can obtain
\begin{align*}
P\bigg(\min_{i\in S^*} |\hat \beta_i| \geq \frac{2c_1\kappa^{-1}\tau^*}{3}\frac{n}{p}\bigg) \geq 1 - O\bigg(\frac{\sigma^2\kappa^4\log p}{\tau^{*2} n^{\alpha}}\bigg),
\end{align*}
and
\begin{align*}
P\bigg(\max_{i\in S_*} |\hat \beta_i| \leq \frac{7c_1\kappa^{-1}\tau^*}{12}\frac{n}{p}\bigg) \geq 1 - O\bigg(\frac{\sigma^2\kappa^4\log p}{\tau^{*2} n^{\alpha}}\bigg).
\end{align*}
Taking $\gamma = \frac{2c_1\kappa^{-1}\tau^*}{3}{n}{p}$, we have
\begin{align*}
P\bigg(\min_{i\in S^*}|\hat\beta_i| \geq \gamma \geq \max_{i\in S_*} |\hat\beta_i| \bigg) \geq 1 - O\bigg(\frac{\sigma^2\kappa^4\log p}{\tau^{*2} n^{\alpha}}\bigg).
\end{align*}
\end{proof}

\section*{Proof of Theorem \ref{thm:2} and \ref{thm:3}}
For the selected submodel $\mathcal{\hat M}_d$, we define $X_d$ to be the variables contained in $\mathcal{\hat M}_d$ and $X_{d, c}$ to be variables that are excluded from $\mathcal{\hat M}_d$. It is clear that
\begin{align*}
\hat\beta^{(OLS)}_d = (X_d^TX_d)^{-1}X_d^TY = \beta_d + (X_d^TX_d)^{-1}X_d^T\varepsilon + (X_d^TX_d)^{-1}X_d^TX_{d, c}\beta_{d, c} = \beta_d + \eta_d + \omega.
\end{align*}
To prove Theorem \ref{thm:2} is essentially to bound $\eta$ and $\omega$. Thus, we need following three lemmas.
\begin{lemma}[Garvesh, Wainwright and Yu. (2010) \cite{raskutti2010restricted}]\label{lemma:REC} Assume $Z\sim N(0, \Sigma)$.
There exists some absolute constant $c', c'' > 0$ such that 
\begin{align*}
\frac{\|Zv\|_2}{\sqrt{n}} \geq \frac{1}{4}\|\Sigma^{\frac{1}{2}}v\|_2 - 9\rho(\Sigma) \sqrt{\frac{\log p}{n}}\|v\|_1,\quad \forall v\in \mathcal{R}^p,
\end{align*}
with probability at least $1 - c''\exp(-c'n)$, where $\rho(\Sigma) = \max_{i = 1,2,\cdots,p} \Sigma_{ii}$.
\par
In our case, for any $v$ with $d$ nonzero coordinates, we have $\|v\|_1\leq \sqrt{d}\|v\|_2$, $\rho(\Sigma) = 1$ and $\|\Sigma^{1/2}v\|_2\geq \lambda^{\frac{1}{2}}_{\min}(\Sigma)\|v\|_2$. Therefore,
\begin{align*}
\frac{\|Zv\|_2}{\sqrt{n}} \geq \bigg(\frac{\lambda^{\frac{1}{2}}_{\min}(\Sigma)}{4} - 9 \sqrt{\frac{d\log p}{n}}\bigg)\|v\|_2,\quad  \|v\|_0\leq d.
\end{align*}
Thus, as long as $n \geq 6^4\kappa d\log p$, we have
\begin{align*}
\min_{|\mathcal{\hat M}| \leq d} \lambda_{min}^{1/2}(Z_{\mathcal{\hat M}}^TZ_{\mathcal{\hat M}}/n) \geq \frac{\lambda^{\frac{1}{2}}_{\min}(\Sigma)}{8}.
\end{align*}
\end{lemma}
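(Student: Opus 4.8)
The inequality in Lemma~\ref{lemma:REC} is the restricted lower‑eigenvalue bound of \citet{raskutti2010restricted} for correlated Gaussian designs, and the two statements that follow it are elementary corollaries; so the plan is to recall the three‑ingredient argument for the main bound and then read off the corollaries. \textbf{Reduction.} Since both sides are positively homogeneous in $v$, it is enough to treat $v$ with $\|\Sigma^{1/2}v\|_2 = 1$. For such $v$, whenever $\|v\|_1 > \frac{1}{36\rho(\Sigma)}\sqrt{n/\log p}$ the right‑hand side is $\le \frac14 - \frac14 = 0 \le \|Zv\|_2/\sqrt n$, so there is nothing to prove; thus it suffices to show that, on an event of the claimed probability, $\inf_{v\in T}\|Zv\|_2/\sqrt n \ge \frac14$ where $T = \{v : \|\Sigma^{1/2}v\|_2 = 1,\ \|v\|_1 \le R\}$ and $R = \frac{1}{36\rho(\Sigma)}\sqrt{n/\log p}$. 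Phrased this way the claim concerns a single event, so no peeling over the ratio $\|v\|_1/\|\Sigma^{1/2}v\|_2$ is needed.

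\textbf{Gaussian comparison, then concentration.} I would write $Z = W\Sigma^{1/2}$ with $W$ an $n\times p$ matrix of i.i.d.\ $N(0,1)$ entries and set $u = \Sigma^{1/2}v$, so $\|Zv\|_2 = \|Wu\|_2$ and $u$ ranges over a subset of the unit sphere. Gordon's Gaussian min--max comparison inequality then gives $\mathbb{E}\!\left[\inf_u \|Wu\|_2\right] \ge \mathbb{E}\|h\|_2 - \mathbb{E}\!\left[\sup_u \langle g, u\rangle\right]$ with $h\sim N(0,I_n)$, $g\sim N(0,I_p)$; here $\mathbb{E}\|h\|_2 \ge n/\sqrt{n+1}$, while $\sup_u\langle g,u\rangle = \sup\{\langle \Sigma^{1/2}g, v\rangle : \|\Sigma^{1/2}v\|_2 = 1,\ \|v\|_1\le R\} \le R\,\|\Sigma^{1/2}g\|_\infty$ with $(\Sigma^{1/2}g)_i \sim N(0,\Sigma_{ii})$, so $\mathbb{E}\|\Sigma^{1/2}g\|_\infty \lesssim \sqrt{\rho(\Sigma)\log p}$. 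Dividing by $\sqrt n$ and substituting $R$ leaves $\mathbb{E}[\inf_{v\in T}\|Zv\|_2/\sqrt n]$ bounded below by $\tfrac34$ minus a small multiple of $\rho(\Sigma)^{-1/2}$ — the constants $\tfrac14$ and $9$ in the statement are chosen precisely so this residual is harmless (and for the standardized designs of the paper $\rho(\Sigma)=1$, so it is clean). Finally, $W\mapsto \inf_{u\in T}\|Wu\|_2$ is $1$‑Lipschitz in Frobenius norm because $\sup_{u\in T}\|u\|_2 = 1$, hence $W\mapsto\inf_{v\in T}\|Zv\|_2/\sqrt n$ is $n^{-1/2}$‑Lipschitz, and Gaussian concentration gives $P(\inf_{v\in T}\|Zv\|_2/\sqrt n \le \mathbb{E}[\cdot] - t) \le e^{-nt^2/2}$; taking $t$ a small constant so that $\mathbb{E}[\cdot] - t \ge \tfrac14$ and combining with the reduction yields the main bound with $c'' = 1$ and $c'$ a universal constant.

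\textbf{The two corollaries and the hard part.} If $v$ has at most $d$ nonzeros then $\|v\|_1 \le \sqrt d\,\|v\|_2 \le \sqrt d\,\lambda_{\min}(\Sigma)^{-1/2}\|\Sigma^{1/2}v\|_2$; substituting into the main bound (with $\rho(\Sigma) = 1$) gives $\|Zv\|_2/\sqrt n \ge \big(\tfrac14\lambda_{\min}^{1/2}(\Sigma) - 9\sqrt{d\log p/n}\big)\|v\|_2$. Since $\lambda_{\min}^{1/2}(Z_{\mathcal{\hat M}}^T Z_{\mathcal{\hat M}}/n) = \inf\{\|Zv\|_2/(\sqrt n\,\|v\|_2) : \mathrm{supp}(v)\subseteq\mathcal{\hat M}\}$ for any $\mathcal{\hat M}$ with $|\mathcal{\hat M}| \le d$, this quantity is at least $\tfrac14\lambda_{\min}^{1/2}(\Sigma) - 9\sqrt{d\log p/n}$, and a sample‑size condition of the form $n \gtrsim \kappa d\log p$ — using $\lambda_{\min}(\Sigma) \ge \lambda_{\max}(\Sigma)/\kappa \ge 1/\kappa$, valid because $\lambda_{\max}(\Sigma) \ge \rho(\Sigma) = 1$ — makes the subtracted term a small fraction of the leading one, leaving $\ge \lambda_{\min}^{1/2}(\Sigma)/8$ uniformly over such $\mathcal{\hat M}$. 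The one genuinely delicate step is the Gordon comparison: setting up the auxiliary Gaussian processes correctly and bounding the Gaussian width $\mathbb{E}\sup_{u\in T}\langle g,u\rangle$ by the right multiple of $R\sqrt{\rho(\Sigma)\log p}$. Everything else is homogeneity, a one‑line Lipschitz concentration estimate, or arithmetic; alternatively, the main inequality may simply be quoted from \citet{raskutti2010restricted} and only the last two corollaries proved directly.
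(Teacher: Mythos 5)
Your derivation of the two corollaries is exactly the paper's: substitute $\|v\|_1\le\sqrt d\,\|v\|_2$, $\rho(\Sigma)=1$ and $\|\Sigma^{1/2}v\|_2\ge\lambda_{\min}^{1/2}(\Sigma)\|v\|_2$ into the quoted bound and then absorb the $9\sqrt{d\log p/n}$ term under a sample-size condition of order $\kappa d\log p$; the paper itself does not prove the main inequality but simply cites \citet{raskutti2010restricted}, so your Gordon-comparison-plus-concentration sketch is an accurate (if constant-level hand-wavy) extra rather than a divergence. The one quibble is shared with the paper: to conclude $\lambda_{\min}^{1/2}(\Sigma)/8$ one needs $9\sqrt{d\log p/n}\le\lambda_{\min}^{1/2}(\Sigma)/8$, i.e.\ $n\ge 72^2\, d\log p/\lambda_{\min}(\Sigma)$, which the stated threshold $n\ge 6^4\kappa\, d\log p$ (using $1/\lambda_{\min}(\Sigma)\le\kappa$) only delivers up to a factor of $4$.
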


\begin{lemma}\label{lemma:eta}
Assume $E[L^{-12}] \leq M_1$ and $e[L^{12}] \leq M_2$. For any $\mathcal{\hat M}$ such that $S^*\subset \mathcal{\hat M}$ and $|\mathcal{\hat M}| \leq d$, we have for any $\alpha > 0$
\begin{align*}
P\bigg(\max_{|\mathcal{\hat M}|\leq d} \|\eta_d\|_\infty \leq \sigma \sqrt{\frac{\log p}{n^{\alpha}}}\bigg) = 1 - O\bigg(\frac{\lambda_*^{-2} d\log d}{n^{\frac{1}{3}(1 - \alpha)}} + \frac{M_1 + M_2}{n^{\frac{1}{3}(1 - 4\alpha)}}\bigg),
\end{align*}
where $\lambda_* = \lambda_{\min}(\Sigma)$.
\end{lemma}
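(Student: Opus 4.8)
The plan is to reduce the \emph{uniform} (over all $\mathcal{\hat M}$ with $|\mathcal{\hat M}|\le d$) control of $\|\eta_d\|_\infty$ to two decoupled ingredients, arranged so that the heavy-tailed noise $\varepsilon$ enters only through one quantity that does not depend on $\mathcal{\hat M}$ --- this is what lets us avoid a union bound over the exponentially many submodels while using only $\mathrm{var}(\varepsilon)<\infty$. Writing $u=X^T\varepsilon$ and $u_{\mathcal{\hat M}}$ for its subvector on $\mathcal{\hat M}$, we have $\eta_d=(X_{\mathcal{\hat M}}^TX_{\mathcal{\hat M}})^{-1}u_{\mathcal{\hat M}}$, hence
\[
\|\eta_d\|_\infty^2\le\|\eta_d\|_2^2=u_{\mathcal{\hat M}}^T(X_{\mathcal{\hat M}}^TX_{\mathcal{\hat M}})^{-2}u_{\mathcal{\hat M}}\le\frac{\|u_{\mathcal{\hat M}}\|_2^2}{\lambda_{\min}(X_{\mathcal{\hat M}}^TX_{\mathcal{\hat M}})^2}\le\frac{d\,\|X^T\varepsilon\|_\infty^2}{\lambda_{\min}(X_{\mathcal{\hat M}}^TX_{\mathcal{\hat M}})^2}.
\]
Maximising over $|\mathcal{\hat M}|\le d$, the numerator $d\|X^T\varepsilon\|_\infty^2$ is submodel-free and the denominator becomes $\big(\min_{|\mathcal{\hat M}|\le d}\lambda_{\min}(X_{\mathcal{\hat M}}^TX_{\mathcal{\hat M}})\big)^2$. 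So I need (i) a uniform restricted-eigenvalue lower bound for the design, and (ii) one tail bound for $\|X^T\varepsilon\|_\infty$.

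For (i), I would use the representation $X=\bar L Z\Sigma^{1/2}$ from the proof of Lemma~\ref{lemma:tail} (with $Z$ having i.i.d.\ standard Gaussian entries and $\bar L=\mathrm{diag}(\sqrt p\,L_i/\|z_i\|_2)$), so that $Z\Sigma^{1/2}$ has i.i.d.\ $N(0,\Sigma)$ rows. For any $d$-sparse $v$, $v^TX^TXv=(Z\Sigma^{1/2}v)^T\bar L^2(Z\Sigma^{1/2}v)\ge(\min_i\bar L_i^2)\,\|Z\Sigma^{1/2}v\|_2^2$, and Lemma~\ref{lemma:REC} gives $\|Z\Sigma^{1/2}v\|_2^2\ge(\lambda_*/64)\,n\|v\|_2^2$ uniformly off an event of probability $c''e^{-c'n}$. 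Truncating $\min_iL_i\ge\ell$ (which fails with probability $\le nM_1\ell^{12}$ by Markov on $L_i^{-12}$ and a union bound) together with the chi-square bound $\max_i\|z_i\|_2^2/p\le 5/2$ already established in the proof of Lemma~\ref{lemma:tail} gives $\min_i\bar L_i^2\ge\tfrac{2}{5}\ell^2$, hence $\min_{|\mathcal{\hat M}|\le d}\lambda_{\min}(X_{\mathcal{\hat M}}^TX_{\mathcal{\hat M}})\gtrsim\ell^2\lambda_* n$ on an event of probability at least $1-nM_1\ell^{12}-c''e^{-c'n}-pe^{-p/4}$.

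For (ii), condition on $X$ and apply Proposition~\ref{prop:lounici} to $X^T\varepsilon=\sum_i\varepsilon_iX_{i\cdot}^T$: $E[\|X^T\varepsilon\|_\infty^2\mid X]\le\tilde C\sigma^2\log p\sum_{i=1}^n\|X_{i\cdot}\|_\infty^2$. Since $X_{ij}=L_i\langle w_i,\Sigma^{1/2}e_j\rangle$ with $w_i=\sqrt p\,z_i/\|z_i\|_2$ uniform on the sphere and $\|\Sigma^{1/2}e_j\|_2=1$, each entry is marginally (sub-)Gaussian, so a maximal inequality and a union bound give $\max_{i,j}X_{ij}^2\le C L_i^2\log p$ off probability $O(np^{-c})$; truncating $\max_iL_i\le u$ (probability $\le nM_2u^{-12}$ by Markov on $L_i^{12}$) then yields $\sum_i\|X_{i\cdot}\|_\infty^2\lesssim u^2 n\log p$, so $E[\|X^T\varepsilon\|_\infty^2\mid X]\lesssim\sigma^2u^2n(\log p)^2$ on the good event. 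A conditional Markov inequality --- the only point at which the mere second moment of $\varepsilon$ is used --- gives $\|X^T\varepsilon\|_\infty^2\le\sigma^2u^2n(\log p)^2\,t$ except with conditional probability $O(1/t)$.

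Combining, on the intersection of the good events $\max_{|\mathcal{\hat M}|\le d}\|\eta_{\mathcal{\hat M}}\|_\infty^2\lesssim\sigma^2d\,u^2(\log p)^2\,t/(\ell^4\lambda_*^2n)$; setting this equal to $\sigma^2\log p/n^\alpha$ forces $t\asymp\ell^4\lambda_*^2n^{1-\alpha}/(du^2\log p)$, so the total exceptional probability is, up to constants,
\[
\frac{d\,u^2\log p}{\ell^4\lambda_*^2\,n^{1-\alpha}}+nM_1\ell^{12}+nM_2u^{-12}+c''e^{-c'n}+O(np^{-c}).
\]
Choosing $\ell=u=n^{-(1-\alpha)/9}$ balances the first three terms into $O\big(\lambda_*^{-2}d\log p/n^{(1-\alpha)/3}\big)$ and $O\big((M_1+M_2)/n^{(1-4\alpha)/3}\big)$, which is the stated rate; refining $\log p$ to $\log d$ would come from applying Proposition~\ref{prop:lounici} directly to the $d$-dimensional $u_{\mathcal{\hat M}}$ and using $\mathrm{tr}\big((X_{\mathcal{\hat M}}^TX_{\mathcal{\hat M}})^{-1}\big)$, provided one still manages to avoid a submodel union bound. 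I expect the crux to be exactly this tension: the $\varepsilon$-contribution must be kept genuinely submodel-free so that only Markov --- never a union bound over the $\binom{p}{d}$ models --- is applied to the heavy-tailed noise, while simultaneously the Gaussian restricted-eigenvalue bound of Lemma~\ref{lemma:REC} must be pushed through the elliptical scaling $\bar L$ at a cost no larger than the $L^{\pm12}$ moments permit, which is what pins down the truncation level $n^{-(1-\alpha)/9}$ and produces the $n^{-(1-\alpha)/3}$ and $n^{-(1-4\alpha)/3}$ rates.
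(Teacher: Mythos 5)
Your proposal assembles the same ingredients as the paper's proof --- the representation $X=\bar L Z\Sigma^{1/2}$, the restricted--eigenvalue bound of Lemma~\ref{lemma:REC}, truncation of $L$ at levels calibrated to the twelfth moments, and the Nemirovski--Lounici inequality (Proposition~\ref{prop:lounici}) followed by Markov so that only $\mathrm{var}(\varepsilon)<\infty$ is used --- but you arrange them differently, and the arrangement changes the result. The paper keeps the inverse Gram matrix inside the moment inequality: it bounds the entries of $A=(X_d^TX_d)^{-1}X_d^T$ by $O\bigl(k_2\sqrt{d\log p}\,/(\lambda_*k_1n)\bigr)$ and applies Proposition~\ref{prop:lounici} to the $d$-dimensional sum $\eta=\sum_jW_j$, which contributes a factor $\log d$. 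You instead peel off $\lambda_{\min}(X_{\mathcal{\hat M}}^TX_{\mathcal{\hat M}})^{-2}$ and $\|u_{\mathcal{\hat M}}\|_2^2\le d\|X^T\varepsilon\|_\infty^2$ first and apply the inequality to the $p$-dimensional vector $X^T\varepsilon$, which costs $(\log p)^2$ where the paper pays $\log d\cdot\log p$. Consequently your argument, as written, establishes the lemma with $d\log p$ in place of $d\log d$ in the first error term; since the target regime is $\log p=o(n)$ with $d$ possibly far smaller (e.g.\ $\log p\asymp n^{1/2}$, $d=O(1)$), this is genuinely weaker than the stated bound, and the repair you sketch --- applying the moment inequality to the $d$-dimensional quantity --- is precisely what the paper does. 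Two further remarks. First, $u$ is an \emph{upper} truncation level for $\max_iL_i$, so it must grow with $n$: you need $u=n^{+(1-\alpha)/9}$, not $n^{-(1-\alpha)/9}$ (with your stated choice the term $nM_2u^{-12}$ diverges); the final rates you report indicate you intended the former, so I read this as a sign slip. Second, your insistence on keeping the $\varepsilon$-dependent randomness submodel-free is a genuine virtue rather than a detour: the paper applies Markov to $E\|\eta\|_\infty^2$ computed for a single submodel and then writes the maximum over all $\binom{p}{d}$ submodels, a uniformity step it does not justify, whereas your decomposition closes exactly that gap --- at the price of the $\log p/\log d$ factor just discussed.
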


\begin{proof}
Define $A = (X_d^TX_d)^{-1}X_d^T$, we have
\begin{align*}
\eta = (X_d^TX_d)^{-1}X_d^T\epsilon = A\epsilon.
\end{align*}
For $A$, we can bound its entries as
\begin{align*}
\max_{ij}|A_{ij}| &\leq \max_{ij} |e_i^T(X_d^TX_d)^{-1}X_d^Te_j| \leq \max_{ij} \|e_i^T(X_d^TX_d)^{-1}\|_1 \|X_d^Te_j\|_\infty\\
&\leq \sqrt{d}\max_{ij}\|e_i^T(X_d^TX_d)^{-1}\|_2 \max_{ij}|X_d^T| \leq \frac{\sqrt{d}}{n}\lambda^{-1}_{min}\bigg(\frac{X_d^TX_d}{n}\bigg)\max_{ij}|X_d^T|.
\end{align*}
Recall that $X = \bar{L}Z\Sigma^{1/2}$, where $\bar{L} = diag(\sqrt{p}L_1/\|z_1\|_2, \cdots, \sqrt{p}L_n/\|z_n\|_2)$ and thus $X_d$ possesses a representation as $X_d = \bar{L}Z\Sigma^{1/2}_d$, where $\Sigma^{1/2}_d$ is an $p \times d$ matrix formed by the selected $d$ columns of $\Sigma^{1/2}$. We can now further bound $\lambda^{-1}_{min}\bigg(\frac{X_d^TX_d}{n}\bigg)$ as
\begin{align*}
\lambda^{-1}_{min}\bigg(\frac{X_d^TX_d}{n}\bigg) &= \lambda^{-1}_{min}\bigg(\frac{\Sigma_d^{\frac{T}{2}}Z^T\bar{L}^T\bar{L}Z\Sigma_d^{\frac{1}{2}}}{n}\bigg)\\
&\leq \bigg(\lambda_{\min}(\bar{L}^T\bar{L})\lambda_{\min}(\Sigma_d^{\frac{T}{2}}Z^TZ\Sigma_d^{\frac{1}{2}}/n)\bigg)^{-1}.
\end{align*}
Using Lemma \ref{lemma:REC}, it is clear that
\begin{align*}
\min_{|\mathcal{\hat M}|\leq d} \lambda_{\min}(\Sigma_d^{\frac{T}{2}}Z^TZ\Sigma_d^{\frac{1}{2}}/n) \geq \frac{\lambda_{\min}(\Sigma)}{64}\geq \frac{\lambda_*}{64},
\end{align*}
with probability at least $1 - O(e^{-c'n})$. In addition, since $E[L^{-12}] \leq M_1$ and $E[L^{12}] \leq M_2$, we have for any $k_1 > 0, k_2 > 0$
\begin{align*}
P(L^2 \leq k_1 ) \leq k_1^6 M_1\quad\mbox{and}\quad P(L \geq k_2) \leq \frac{M_2}{k_2^{12}}.
\end{align*}
Combining with equation \eqref{eq:ep} implies that
\begin{align*}
\lambda_{\min}(\bar{L}^T\bar{L}) \geq \frac{2 k_1}{5},
\end{align*}
with probability at least $1 - pe^{-p/4} - nk_1^6M_1$. Therefore, we have
\begin{align*}
\max_{|\mathcal{\hat M}|\leq d} \lambda^{-1}_{min}\bigg(\frac{X_d^TX_d}{n}\bigg) \leq \frac{162}{\lambda_* k_1}.
\end{align*}
with probability $1 - O(nk_1^6M_1)$.
\par
For $\max_{ij}|X_d^T|$, we just need to bound $\max_{ij} X_{ij}$. Using the representation $X = \bar{L}Z\Sigma^{1/2}$, we know that
\begin{align*}
X_{ij} = \frac{\sqrt{p}L_i}{\|z_i\|_2} Z_i\Sigma^{1/2} e_j.
\end{align*}
It is easy to see that $Z_i\Sigma^{1/2} e_j$ is a Gaussian random variable with mean zero and variance 1, thus for any $t > 0$
\begin{align*}
P(|Z_i\Sigma^{1/2} e_j| \geq t) \leq 2 e^{-t^2/2}.
\end{align*}
In addition, $\|z_i\|_2^2/p$ follows a $\mathcal{X}^2(p)$ and we have
\begin{align*}
P\bigg(\frac{\|z_i\|_2^2}{p} \geq 1 - 2\sqrt{\frac{t}{p}}\bigg) \geq 1 - e^{-t}.
\end{align*}
Taking $t = p/4$, we have $\max_{i} \|z_i\|_2/\sqrt{p} \geq 1/2$ with probability at least $1 - ne^{-p/4}$ and thus
\begin{align*}
P(\max_{ij}|X_{ij}| \leq 4k_2\sqrt{\log p}) \geq 1 - \frac{M_2 n}{k_2^{12}} - 2p^{-1} - ne^{-p/4}.
\end{align*}
Combining all pieces of results, we obtain that
\begin{align*}
P\bigg(\min_{|\mathcal{\hat M}|\leq d} \max_{ij} |A_{ij}| \leq \frac{648 k_2 \sqrt{d} \sqrt{\log p}}{\lambda_* k_1 n}\bigg) \geq 1 - O\bigg(nk_1^6M_1 + \frac{nM_2}{k_2^{12}}\bigg).
\end{align*}
Following a similar argument in proving Lemma \ref{lemma:tail}, we define $W_j = (A_{1j}\epsilon_j, A_{2j}\epsilon_j, \cdots, A_{dj}\epsilon_j)$ and then
\begin{align*}
\eta = \sum_{j = 1}^n W_j.
\end{align*}
Using Proposition \ref{prop:lounici}, we have
\begin{align*}
E\|\eta\|_\infty^2 = E\|\sum_{j = 1}^n W_j\|_\infty^2 \leq \tilde C\log d\sum_{j = 1}^n E\|W_j\|_\infty^2 \leq O\bigg(\frac{\sigma^2 k_2^2}{\lambda_*^2k_1^2}\frac{d\log d\log p}{n}\bigg).
\end{align*}
Using the Markov inequality implies that for any $r > 0$
\begin{align*}
P\bigg(\max_{|\mathcal{\hat M}|\leq d}\|\eta\|_\infty > r\bigg)\leq \frac{\|\eta\|_\infty^2}{r^2} = O\bigg(\frac{\sigma^2 k_2^2}{\lambda_*^2k_1^2r^2}\frac{d\log d\log p}{n}\bigg) + O\bigg(nk_1^6M_1 + \frac{nM_2}{k_2^{12}}\bigg).
\end{align*}
Let $r = \sigma \sqrt{\frac{\log p}{n^{\alpha}}}$, $k_1 = n^{-\frac{2(1 - \alpha)}{9}}$ and $k_2 = n^{\frac{1 - \alpha}{9}}$, we have
\begin{align*}
P\bigg(\max_{|\mathcal{\hat M}|\leq d}\|\eta\|_\infty \leq \sigma \sqrt{\frac{\log p}{n^{\alpha}}}\bigg) = 1 - O\bigg(\frac{\lambda_*^{-2} d\log d}{n^{\frac{1}{3}(1 - \alpha)}} + \frac{M_1 + M_2}{n^{\frac{1}{3}(1 - 4\alpha)}}\bigg)
\end{align*}
\end{proof}

\begin{lemma}\label{lemma:omega}
Assume $E[L^{-12}] \leq M_1$ and $e[L^{12}] \leq M_2$. For any $\mathcal{\hat M}$ such that $S^*\subset \mathcal{\hat M}$ and $|\mathcal{\hat M}| \leq d$. Assume that $d - |S^*|\leq \tilde c$ and $\sum_{i \not\in S^*} |\beta_i|^\iota \leq R$ for some $\iota\in (0, 1)$, then for any $\alpha > 0$, we have
\begin{align*}
P\bigg(\max_{|\mathcal{\hat M}|\leq d} \|w\|_2 \leq \sigma\sqrt{\frac{\log p}{n^\alpha}}\bigg) \geq 1 - O\bigg(\frac{(M_1 + M_2)R^3}{(\log p)^{2\iota}n^{3 - 4\alpha - 2\iota}}\bigg).
\end{align*}
\end{lemma}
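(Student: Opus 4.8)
The plan is to split $\omega=(X_d^TX_d)^{-1}X_d^TX_{d,c}\beta_{d,c}$ into the inverse-Gram factor and the leakage vector $X_{d,c}\beta_{d,c}$, and to exploit that, since $S^*\subset\mathcal{\hat M}_d$, every coordinate of $\beta_{d,c}$ lies in the weak set: $|\beta_j|\le\tau_*$ and $\sum_{j\notin\mathcal{\hat M}_d}|\beta_j|^\iota\le R$, whence $\|\beta_{d,c}\|_2^2\le\tau_*^{2-\iota}R$ and $\|\beta_{d,c}\|_1\le\tau_*^{1-\iota}R$. Since $X_d(X_d^TX_d)^{-1}X_d^T$ is an orthogonal projection, $\|\omega\|_2\le\lambda_{\min}^{-1/2}(X_d^TX_d)\,\|X_{d,c}\beta_{d,c}\|_2$, so it suffices to (i) lower-bound $\lambda_{\min}(X_d^TX_d)$ and (ii) upper-bound $\|X_{d,c}\beta_{d,c}\|_2$, both uniformly over admissible $\mathcal{\hat M}$. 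A naive union over the $\binom{p}{\le d}$ models is useless here (that factor is $p^{\Theta(n)}$), so the uniformity must come from restricted-eigenvalue arguments and from reducing the leakage to a fixed vector.

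For (i) I would re-use the bound already established inside the proof of Lemma~\ref{lemma:eta}: writing $X_d=\bar L Z\Sigma_d^{1/2}$, Lemma~\ref{lemma:REC} gives $\lambda_{\min}(\Sigma_d^{T/2}Z^TZ\Sigma_d^{1/2}/n)\ge\lambda_*/64$ uniformly over $|\mathcal{\hat M}|\le d$ once $n\gtrsim\kappa d\log p$, while $P(L^2\le k_1)\le k_1^6M_1$ and the $\chi^2$ tail for $\|z_i\|_2^2$ give $\lambda_{\min}(\bar L^T\bar L)\ge\tfrac25 k_1$ with probability $1-O(nk_1^6M_1+pe^{-p/4})$; hence $\lambda_{\min}(X_d^TX_d)\ge c\,n\lambda_* k_1$ off a set of that probability, for a free parameter $k_1>0$ to be fixed later.

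For (ii), write $X_{d,c}\beta_{d,c}=Xv$ with $v$ the zero-padding of $\beta_{d,c}$, which is supported on $(S^*)^c$. Each admissible $v$ differs from the \emph{fixed} vector $v_0:=\beta_{(S^*)^c}$ by zeroing at most $d-|S^*|\le\tilde c$ coordinates, so $\|Xv\|_2\le\|Xv_0\|_2+\|X(v_0-v)\|_2$; the correction is $\tilde c$-sparse with sup-norm $\le\tau_*$, and a standard largest-eigenvalue bound for thin sub-Gaussian matrices (the companion of Lemma~\ref{lemma:REC}) together with $\max_iL_i\le k_2$ — which fails with probability $O(nM_2k_2^{-12})$ since $P(L\ge k_2)\le M_2k_2^{-12}$ — makes its contribution to $\|\omega\|_2$ of lower order. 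For the fixed term, $Xv_0=\bar L Z\gamma_0$ with $\gamma_0=\Sigma^{1/2}v_0$, and because the elliptical representation decouples $L$ from the direction one gets the exact identity $E\|Xv_0\|_2^2=nE[L^2]\,v_0^T\Sigma v_0\le nM_2^{1/6}\kappa\lambda_*\tau_*^{2-\iota}R$; Markov's inequality then controls $\|Xv_0\|_2$ with no union needed. Combining (i), (ii) and the $L$-events turns the bad event $\{\|\omega\|_2>\sigma\sqrt{\log p/n^\alpha}\}$ into the event that $\lambda_{\min}(X_d^TX_d/n)$ drops below an explicit threshold $t\asymp k_2^2\kappa\,\tau_*^{2-\iota}R\,n^\alpha/(\sigma^2\log p)$; taking $k_1$ with $c\lambda_* k_1=t$ and substituting $\tau_*\le\frac\sigma\kappa\sqrt{\log p/n}$ leaves total failure probability $O(nk_1^6M_1+nM_2k_2^{-12}+e^{-c'n})$.

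The final step — choosing $k_1,k_2$ to balance $nk_1^6M_1$ against $nM_2k_2^{-12}$ while keeping the deterministic bound $\le\sigma\sqrt{\log p/n^\alpha}$ valid, and then simplifying — is the main obstacle, and it is essentially exponent bookkeeping. The cube $R^3$ is structural rather than accidental: the threshold $t$ below which $\lambda_{\min}$ must fall is linear in $R$, it enters through $k_1^6$ after $k_1\propto t$, and balancing against the twelfth-power term $k_2^{-12}$ halves exponents, producing $R^{6/2}=R^3$ alongside $(\log p)^{-2\iota}$ and $n^{-(3-4\alpha-2\iota)}$ — precisely the cube of the borderline growth rate $(\log p)^{2\iota/3}n^{1-4\alpha/3-2\iota/3}$ for $R$ announced after Theorem~\ref{thm:2}, with the $M_1+M_2$ appearing via $\sqrt{M_1M_2}\le M_1+M_2$. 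Everything probabilistic (restricted eigenvalues, $\chi^2$ and polynomial-moment tails for $L$, Markov, Proposition~\ref{prop:lounici}) is already available from Lemmas~\ref{lemma:REC} and~\ref{lemma:eta}; only matching the rate exactly requires care.
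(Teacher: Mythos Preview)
Your overall strategy coincides with the paper's: the projection inequality $\|\omega\|_2\le\lambda_{\min}^{-1/2}(X_d^TX_d)\,\|X_{d,c}\beta_{d,c}\|_2$, the restricted-eigenvalue lower bound on $\lambda_{\min}(X_d^TX_d/n)$ via Lemma~\ref{lemma:REC} together with the moment bound $P(L^2\le k_1)\le k_1^6M_1$, the leakage control through $\|\beta_{d,c}\|_2^2\le\tau_*^{2-\iota}R$ and $\max_i pL_i^2/\|z_i\|_2^2\le 2k_2^2$, and the final balancing $k_2=k_1^{-1/2}$ are all exactly what the paper does.

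The one substantive difference is in how you make the bound on $\|X_{d,c}\beta_{d,c}\|_2$ uniform over models. You argue that a union bound is hopeless and therefore split $v=v_0-(v_0-v)$ into a fixed piece plus a $\tilde c$-sparse correction. The paper's route is simpler: because every admissible model satisfies $S^*\subset\mathcal{\hat M}$ and $|\mathcal{\hat M}\setminus S^*|\le\tilde c$, there are at most $p^{\tilde c}$ admissible models, so the paper treats $\beta_{d,c}$ directly by the Gaussian rotation trick (reducing $\|Z\Sigma_{d,c}^{1/2}\beta_{d,c}\|_2^2/n$ to a single $\chi^2_n/n$), applies the $\chi^2$ tail with $t=(1+\tilde c)\log p$, and union-bounds, paying only $p^{-1}$. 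Your decomposition is correct but unnecessary; the assumption $d-|S^*|\le\tilde c$ is precisely what makes the union feasible. There is also a small slip in your write-up: you invoke Markov with $E[L^2]$ for the fixed term, but the threshold $t\asymp k_2^2\kappa\tau_*^{2-\iota}R\,n^\alpha/(\sigma^2\log p)$ you then write down comes from the deterministic event $\max_iL_i\le k_2$, not from Markov; using the $k_2$-event plus a single $\chi^2$ for the fixed $v_0$ (no union needed) is the clean version and matches the paper's failure budget $O(nk_1^6M_1+nM_2k_2^{-12})$.
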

\begin{proof}
According to our definition that $\omega = (X_d^TX_d)^{-1}X_d^TX_{d, c}\beta_{d, c}$, we can directly bound the $l_2$ norm of $\omega$ as
\begin{align*}
\|\omega\|_2^2 = \beta_{d,c}^TX^T_{d, c}X_d(X_d^TX_d)^{-2}X_d^TX_{d, c}\beta_{d, c}\leq \frac{1}{n}\beta_{d,c}^TX^T_{d, c}X_{d, c}\beta_{d, c}\lambda^{-1}_{min}\bigg(\frac{X_d^TX_d}{n}\bigg)
\end{align*}
where $\lambda^{-1}_{min}\bigg(\frac{X_d^TX_d}{n}\bigg)$ has already obtained a bound in Lemma \ref{lemma:eta} as
\begin{align*}
\max_{|\mathcal{\hat M}|\leq d} \lambda^{-1}_{min}\bigg(\frac{X_d^TX_d}{n}\bigg) \leq \frac{162}{\lambda_* k_1}.
\end{align*}
with probability $1 - O(nk_1^6M_1)$. Now for $\frac{1}{n}\beta_{d,c}^TX^T_{d, c}X_{d, c}\beta_{d, c}$ we have
\begin{align*}
\frac{1}{n}\beta_{d,c}^TX^T_{d, c}X_{d, c}\beta_{d, c} = \frac{1}{n}\beta_{d,c}^T\Sigma_{d,c}^{T/2}Z^T\bar{L}^T\bar{L}Z\Sigma_{d,c}^{1/2}\beta_{d,c} \leq \frac{1}{n}\beta_{d,c}^T\Sigma_{d,c}^{T/2}Z^TZ\Sigma_{d,c}^{1/2}\beta_{d,c} \max_{i}\frac{pL_i^2}{\|z_i\|_2^2}
\end{align*}
Since $Z\sim N(0, I_p)$, we can choose an orthogonal matrix $Q$ such that $\beta_{d,c}\Sigma_{d,c}^{1/2} = e_1Q\|\beta_{d, c}\Sigma_{d, c}^{1/2}\|_2$ and 
\begin{align*}
\frac{1}{n}\beta_{d,c}^T\Sigma_{d,c}^{T/2}Z^TZ\Sigma_{d,c}^{1/2}\beta_{d,c} = \|\beta_{d, c}\Sigma_{d, c}^{1/2}\|_2^2 e_1\tilde Z^T\tilde Z e_1^T \leq \|\beta_{d, c}\|_2^2\lambda^* e_1\tilde Z^T\tilde Ze_1,
\end{align*}
where $\tilde Z\sim N(0, I_p)$. It is easy to see that for any $t > 0$
\begin{align*}
P\bigg(\frac{e_1^T\tilde Z^T\tilde Ze_1}{n} \leq 1 + 2\sqrt{\frac{t}{n}} + \frac{2t}{n}\bigg) \geq 1 - e^{-t}.
\end{align*}
and $\|\beta_{d,c}\|_2^2 \leq \tau_*^{2 - \iota}R$. Thus, taking $t = (1 + \tilde c)\log p$, we have
\begin{align*}
\max_{|\mathcal{\hat M}|\leq d}\frac{1}{n}\beta_{d,c}^T\Sigma_{d,c}^{T/2}Z^TZ\Sigma_{d,c}^{1/2}\beta_{d,c}\leq 5\tau_*^{2 - \iota}R\lambda^*
\end{align*}
with probability $1 - p^{-1}$ as long as $n \geq (1 + \tilde c)\log p$ where $\tilde c$ is the upper bound on $d - |S^*|$. For $\max_i pL_i^2/\|z_i\|_2^2$, we follow the same argument in Lemma \ref{lemma:eta}
\begin{align*}
P\bigg(\max_i \frac{pL_i^2}{\|z_i\|_2^2} \leq 2k_2^2\bigg) \geq 1 - ne^{-p/4} - \frac{nM_2}{k_2^{12}}.
\end{align*}
Putting all pieces together, we have
\begin{align*}
\max_{|\mathcal{\hat M}|\leq d} \|w\|_2 \leq 36\tau_*^{1 - \frac{\iota}{2}}R^{\frac{1}{2}}\kappa^{\frac{1}{2}}\sqrt{\frac{k_2^2}{k_1}},
\end{align*}
with probability at least $1 - O\bigg(\frac{nM_2}{k_2^{12}} +  nk_1^6M_1\bigg)$. According to our assumption that $\tau_*\leq \frac{\sigma}{\kappa}\sqrt{\frac{\log p}{n}}$ and taking $k_1 = \frac{n^{\iota/4}R^{1/2}}{(\log p)^{\iota/4}n^{(1 - \alpha)/2}}$ and $k_2 = 1/\sqrt{k_1}$ we have
\begin{align*}
P\bigg(\max_{|\mathcal{\hat M}|\leq d} \|w\|_2 \leq \sigma\sqrt{\frac{\log p}{n^\alpha}}\bigg) \geq 1 - O\bigg(\frac{(M_1 + M_2)R^3}{(\log p)^{2\iota}n^{3 - 4\alpha - 2\iota}}\bigg).
\end{align*}
\end{proof}

We are now ready to prove Theorem \ref{thm:2}
\begin{proof}[\textbf{Proof of Theorem \ref{thm:2}}]
We just need to combine the results of Lemma \ref{lemma:eta} and \ref{lemma:omega}, i.e., 
\begin{align*}
\hat\beta^{(OLS)}_d = \beta_d + \eta + \omega,
\end{align*}
where
\begin{align*}
P\bigg(\max_{|\mathcal{\hat M}|\leq d}\|\eta\|_\infty \leq \sigma \sqrt{\frac{\log p}{n^{\alpha}}}\bigg) = 1 - O\bigg(\frac{\lambda_*^{-2} d\log d}{n^{\frac{1}{3}(1 - \alpha)}} + \frac{M_1 + M_2}{n^{\frac{1}{3}(1 - 4\alpha)}}\bigg)
\end{align*}
and 
\begin{align*}
P\bigg(\max_{|\mathcal{\hat M}|\leq d} \|w\|_2 \leq \sigma\sqrt{\frac{\log p}{n^\alpha}}\bigg) \geq 1 - O\bigg(\frac{(M_1 + M_2)R^3}{(\log p)^{2\iota}n^{3 - 4\alpha - 2\iota}}\bigg).
\end{align*}
Therefore, we have
\begin{align*}
P\bigg(\max_{|\mathcal{\hat M}| \leq d, S^*\subset\mathcal{\hat M}}\|\hat\beta^{(OLS)}_d - \beta_d\|_\infty\leq 2\sigma\sqrt{\frac{\log p}{n^\alpha}}\bigg) = 1 - O\bigg(\frac{\lambda_*^{-2} d\log d}{n^{\frac{1}{3}(1 - \alpha)}} + \frac{M_1 + M_2}{n^{\frac{1}{3}(1 - 4\alpha)}} + \frac{(M_1 + M_2)R^3}{(\log p)^{2\iota}n^{3 - 4\alpha - 2\iota}}\bigg)
\end{align*}
\end{proof}

\begin{proof}[\textbf{Proof of Theorem \ref{thm:3}}]
Recall that $X_d$ consists of variables contained in $\mathcal{\hat M}_d$, the definition of $\hat \beta(r)^{(Ridge)}$ becomes 
\begin{align*}
\hat \beta(r)^{(Ridge)} &= (X_d^TX_d + rI_d)^{-1}X_d^TX_d\beta + (X_d^TX_d + rI_d)^{-1}X_d^T\varepsilon + (X_d^TX_d + rI_d)^{-1}X_d^TX_{d, c}\beta_{d,c}\\
&  = \beta - r(X_d^TX_d + rI_d)^{-1}\beta + (X_d^TX_d + rI_d)^{-1}X_d^T\varepsilon + (X_d^TX_d + rI_d)^{-1}X_d^TX_{d, c}\beta_{d,c}\\
& = \beta - \tilde \xi(r) + \tilde \eta(r) + \tilde \omega(r).
\end{align*}
For $\tilde \xi(r)$ we have 
\begin{align*}
\|\tilde \xi(r)\|_2^2\leq r^2\beta^T(X_d^TX_d + rI_d)^{-2}\beta \leq \frac{r^2 \|\beta\|_2^2}{n^2\lambda_{min}^2(X_d^TX_d/n + r/n)} \leq \frac{8^4 r^2 \kappa^3 M_0}{n^2} 
\end{align*}
As proved in Lemma \ref{lemma:eta}, we know that
\begin{align*}
\max_{|\mathcal{\hat M}|\leq d} \lambda_{min}\bigg(\frac{X_d^TX_d}{n}\bigg) \geq \frac{\lambda_* k_1}{162}.
\end{align*}
with probability $1 - O(nk_1^6M_1)$. 
Adding $r/n$ to the above matrix will only increase the smallest eigenvalue. Thus, we have
\begin{align*}
\|\tilde \xi(r)\|_2\leq r^2\beta^T(X_d^TX_d + rI_d)^{-2}\beta \leq \frac{162 r\lambda^* M_0}{n\lambda_* k_1} = \frac{162 r\kappa M_0}{n k_1}.
\end{align*}
Where we used $M_0 \geq var(Y) \geq \|\beta\|_2^2\lambda_{max}^{-1}(\Sigma)$. Choosing $k_1 = n^{-\frac{2(1 - \alpha)}{9}}$, we have
\begin{align*}
P\bigg(\max_{|\mathcal{\hat M}|\leq d} \|\tilde \xi(r)\|_2 \leq \frac{162 r\kappa M_0}{n^{\frac{1}{9}(7 + 2\alpha)}}\bigg) = 1 - O\bigg(\frac{M_1}{n^{\frac{1}{3}(1 - 4\alpha)}}\bigg),
\end{align*}
which implies that as long as $r \leq \frac{\sigma n^{(7/9 - 5\alpha/18)}\sqrt{\log p}}{162\kappa M_0}$, we have
\begin{align*}
P\bigg(\max_{|\mathcal{\hat M}|\leq d} \|\tilde \xi(r)\|_2 \leq \sigma\sqrt{\frac{\log p}{n^\alpha}}\bigg) = 1 - O\bigg(\frac{M_1}{n^{\frac{1}{3}(1 - 4\alpha)}}\bigg).
\end{align*}
In addition, the proof for $\|\eta\|_\infty$ and $\|\omega\|_2$ shows that the only key quantity that has changed is $\max_{|\mathcal{\hat M}|\leq d} \lambda_{min}\bigg(\frac{X_d^TX_d}{n}\bigg)$ which is replaced by $\max_{|\mathcal{\hat M}|\leq d} \lambda_{min}\bigg(\frac{X_d^TX_d + rI_d}{n}\bigg)$ for $\beta^{(ridge)}$. While the latter is trivially lower bounded by the former, we thus have

\begin{align*}
P\bigg(\max_{|\mathcal{\hat M}|\leq d}\|\tilde \eta(r)\|_\infty \leq \sigma \sqrt{\frac{\log p}{n^{\alpha}}}\bigg) = 1 - O\bigg(\frac{\lambda_*^{-2} d\log d}{n^{\frac{1}{3}(1 - \alpha)}} + \frac{M_1 + M_2}{n^{\frac{1}{3}(1 - 4\alpha)}}\bigg)
\end{align*}
and 
\begin{align*}
P\bigg(\max_{|\mathcal{\hat M}|\leq d} \|\tilde w(r)\|_2 \leq \sigma\sqrt{\frac{\log p}{n^\alpha}}\bigg) \geq 1 - O\bigg(\frac{(M_1 + M_2)R^3}{(\log p)^{2\iota}n^{3 - 4\alpha - 2\iota}}\bigg).
\end{align*}
Consequently, we have
\begin{align*}
P\bigg(\max_{|\mathcal{\hat M}| \leq d, S^*\subset\mathcal{\hat M}}\|\hat\beta^{(ridge)}_d - \beta_d\|_\infty\leq 3\sigma\sqrt{\frac{\log p}{n^\alpha}}\bigg) = 1 - O\bigg(\frac{\lambda_*^{-2} d\log d}{n^{\frac{1}{3}(1 - \alpha)}} + \frac{2M_1 + M_2}{n^{\frac{1}{3}(1 - 4\alpha)}} + \frac{(M_1 + M_2)R^3}{(\log p)^{2\iota}n^{3 - 4\alpha - 2\iota}}\bigg),
\end{align*}
as long as
\begin{align*}
r \leq \frac{\sigma n^{(7/9 - 5\alpha/18)}\sqrt{\log p}}{162\kappa M_0}.
\end{align*}
\end{proof}

\begin{proof}[\textbf{Proof of Corollary \ref{cor:1}}]
As mentioned before, we have $\hat\beta^{(OLS)} = \beta_{\mathcal{\tilde M}_d} + (X_{\mathcal{\tilde M}_d}^TX_{\mathcal{\tilde M}_d})^{-1}X_{\mathcal{\tilde M}_d}\varepsilon$. Because $\varepsilon_i\sim N(0,\sigma^2)$ for $i = 1, 2, \cdots, n$, we have for any $i\in \mathcal{\tilde M}_d$,
\begin{align}
  \tilde\eta_i = e_i^T(X_{\mathcal{\tilde M}_d}^TX_{\mathcal{\tilde M}_d})^{-1}X_{\mathcal{\tilde M}_d}^T\varepsilon\sim N(0, \sigma^2e_i^T(X_{\mathcal{\tilde M}_d}^TX_{\mathcal{\tilde M}_d})^{-1}e_i)\stackrel{(d)}{=} \sigma\sqrt{e_i^T(X_{\mathcal{\tilde M}_d}^TX_{\mathcal{\tilde M}_d})^{-1}e_i} N(0,1).
\end{align}
Likewise in the proof of Lemma \ref{lemma:tail2}, we know that as long as $n \geq 64\kappa d\log p$
\begin{align*}
\lambda_{min}(X_{\mathcal{\tilde M}_d}^TX_{\mathcal{\tilde M}_d}/n)\geq \frac{1}{64\kappa}.
\end{align*}
Thus, we have

$$\max_{i\in \mathcal{\tilde M}_d} e_i^T(X_{\mathcal{\tilde M}_d}^TX_{\mathcal{\tilde M}_d})^{-1}e_i \leq 64\kappa/n.$$ 

Therefore, for any $t > 0$ and $i\in \mathcal{\tilde M}_d$, with probability at least $1 - c''\exp(-c'n) - 2\exp(-t^2/2)$ we have
\begin{align*}
|\tilde \eta_i| \leq \sigma t \sqrt{e_i^T(X_{\mathcal{\tilde M}_d}^TX_{\mathcal{\tilde M}_d})^{-1}e_i} \leq \frac{8\kappa^{\frac{1}{2}}\sigma t}{\sqrt n}.
\end{align*}
Then for any $\delta > 0$, if $n > \log(2c''/\delta)/c'$, then with probability at least $1 - \delta$ we have
\begin{align}
\max_{i\in\mathcal{\tilde M}_d} |\tilde \eta_i| \leq 8\sigma \sqrt{\frac{2\kappa\log(4d/\delta)}{n}}.\label{eq:ols1}
\end{align}
Because $\sigma$ needs to estimated from the data, we need to obtain a bound as well. Notice that $\hat\sigma^2$ is an unbiased estimator for $\sigma$, and
\begin{align*}
  \hat\sigma^2= \sigma^2 \epsilon^T(I_n - X_{\mathcal{\tilde M}_d}(X_{\mathcal{\tilde M}_d}^TX_{\mathcal{\tilde M}_d})^{-1}X_{\mathcal{\tilde M}_d})\epsilon \sim \frac{\sigma^2\mathcal{X}^2(n-d)}{n - d},
\end{align*}
where $\mathcal{X}^2(k)$ denotes a chi-square random variable with degree of freedom $k$. Using Proposition 5.16 in \cite{vershynin2010introduction}, we can bound $\hat\sigma^2$ as follows. Let $K = \|\mathcal{X}^2(1) - 1\|_{\psi_1}$. There exists some $c_5>0$ such that for any $t \geq 0$ we have,
\begin{align*}
  P\bigg(\bigg|\frac{\mathcal{X}^2(n-d)}{n - d} - 1\bigg|\geq t\bigg)\leq 2\exp\bigg\{-c_5\min\bigg(\frac{t^2(n-d)}{K^2}, \frac{t(n-d)}{K}\bigg)\bigg\}.
\end{align*}
Hence for any $\delta>0$, if $n > d + 4K^2\log(2/\delta)/c_5$, then with probability at least $1-\delta$ we have,
\begin{align*}
  |\hat\sigma^2 - \sigma^2| \leq \sigma^2/2,
\end{align*}
which implies that
\begin{align*}
  \frac{1}{2}\sigma^2\leq \hat\sigma^2\leq \frac{3}{2}\sigma^2.
\end{align*}
Then we know that
\begin{align*}
\max_{i\in\mathcal{\tilde M}_d} |\tilde \eta_i|\leq 8\sigma \sqrt{\frac{2\kappa\log(4d/\delta)}{n}} \leq 8\sqrt{2}\hat \sigma \sqrt{\frac{2\kappa\log(4d/\delta)}{n}} \leq 8\sqrt{3}\sigma \sqrt{\frac{2\kappa\log(4d/\delta)}{n}}.
\end{align*}
Now define $\gamma' = 8\sqrt{2}\hat \sigma \sqrt{\frac{2\kappa\log(4d/\delta)}{n}}$. If the signal $\tau = \min_{i\in S}|\beta_i|$ satisfies that
\begin{align*}
\tau \geq 24\sigma \sqrt{\frac{2\kappa\log(4d/\delta)}{n}},
\end{align*}
then with probability at least $1 - 2 \delta$, for any $i\not\in S$
\begin{align*}
|\hat\beta_i| = |\tilde \eta_i| \leq 8\sigma \sqrt{\frac{2\kappa\log(4d/\delta)}{n}} \leq \gamma',
\end{align*}
and for $i\in S$ we have
\begin{align*}
|\hat\beta_i| \geq \tau - \max_{i\in\mathcal{\tilde M}_d}|\tilde \eta_i| \geq 16\sigma \sqrt{\frac{2\kappa\log(4d/\delta)}{n}} \geq \gamma'.
\end{align*}
\end{proof}

\section*{Proof of Theorem 4}
The result of Theorem 4 can be immediately implied from Theorem \ref{thm:1}, \ref{thm:2}, \ref{thm:3}.

\end{document}